\newcommand{\mquoted}[1]{\text{`$#1\,$'}}
\renewcommand{\phi}{\varphi}
\newcommand{\sbr}[1]{\lbrack \! \lbrack #1 \rbrack \! \rbrack}
\newcommand{\FeatExp}{\textit{FeatExp}}
\newcommand{\Exp}{\textit{Exp}}
\newcommand{\Stm}{\textit{Stm}}
\newcommand{\Const}{\textit{Const}}
\newcommand{\Var}{\textit{Var}}
\newcommand{\sat}{\textrm{sat}}
\newcommand{\confprojname}[1]{\ensuremath{\pi_{#1}}}
\newcommand{\confproj}[2]{\ensuremath{\confprojname{#1}(#2)}}
\newcommand{\poset}[2]{\ensuremath{\langle{#1},{#2}\rangle}}
\DeclareMathOperator{\lfp}{\mathrm{lfp}}
\newcommand{\true}{\textrm{true}}
\newcommand{\false}{\textrm{false}}
\newcommand{\entails}{\models}
\newcommand{\abar}{\ensuremath{\overline a}}
\newcommand{\dbar}{\ensuremath{\overline d}}
\newcommand{\VarIMP}{\ensuremath{\overline{\text{IMP}}}\xspace}
\newcommand{\impskip}[1][]{\ensuremath{\mbox{\texttt{skip}}^{#1}}}
\newcommand{\impassign}[3][]{\ensuremath{{#2} ~\mbox{\texttt{:=}}^{#1}~ {#3}}}
\newcommand{\impseq}[3][]{\ensuremath{{#2} ~\mbox{\texttt{;}}^{#1}~ {#3}}}
\newcommand{\impif}[4][]{\mbox{\texttt{if}}^{#1} ~\ensuremath{{#2}~ \mbox{\texttt{then}} ~{#3}~ \mbox{\texttt{else}} ~{#4}}}
\newcommand{\impwhile}[3][]{\mbox{\texttt{while}}^{#1} ~\ensuremath{{#2}~ \mbox{\texttt{do}} ~{#3}}}
\newcommand{\impifdef}[3][]{\mbox{\texttt{\#if}}^{#1} ~\ensuremath{{#2}} ~{#3}}
\newcommand{\func}[2]{\ensuremath{\lambda{#1}.\,{#2}}}
\newcommand{\varsignstore}{\ensuremath{\overline{a}}}
\newcommand{\varsignfuncname}{\ensuremath{\overline{\Phi}}}
\newcommand{\varsignfunc}[1]{\varsignfuncname({#1})}
\newcommand{\impparbinopname}{\widehat{\oplus}}
\newcommand{\impparbinop}[2]{\ensuremath{ {#1} ~\impparbinopname~ {#2}}}
\newcommand{\impbinop}[3][]{\ensuremath{ {#2} \oplus^{#1} {#3} }}
\newcommand{\pwbot}{\ensuremath{\dot{\bot}}}
\newcommand{\pwsqcup}{\dot{\sqcup}}
\newcommand{\pwsqsubseteq}{\dot\sqsubseteq}
\newcommand{\varstmtin}[1]{ \sbr{#1}_{\overline{\mathsf{in}}} }
\newcommand{\varstmtout}[1]{ \sbr{#1}_{\overline{\mathsf{out}}} }
\newcommand{\hoalpha}{\ensuremath{\alpha_{\rightarrow}}}
\newcommand{\hogamma}{\ensuremath{\gamma_{\rightarrow}}}
\newcommand{\Kpsi}{{\ensuremath{\mathbb{K}_{\psi}}}}
\newcommand{\Kk}{\ensuremath{\mathbb{K}}}
\newcommand{\Ff}{\ensuremath{\mathbb{F}}}
\newcommand{\Aa}{\ensuremath{\mathbb{A}}}
\newcommand{\Atwo}{\ensuremath{\overline{\mathcal{A}}}}
\newcommand{\Athree}{\ensuremath{\overline{\mathcal{S}}}}
\newcommand{\Done}{\ensuremath{\overline{\mathcal{D}}}}
\newcommand{\joinasym}{\ensuremath{\bm{\alpha}^{\textnormal{\textrm{join}}}}}
\newcommand{\joingsym}{\ensuremath{\bm{\gamma}^{\textnormal{\textrm{join}}}}}
\newcommand{\joina}[1]{\ensuremath{\joinasym_{#1}}}
\newcommand{\joing}[1]{\ensuremath{\joingsym_{#1}}}
\newcommand{\projasym}{\ensuremath{\bm{\alpha}^{\textnormal{\textrm{proj}}}}}
\newcommand{\projgsym}{\ensuremath{\bm{\gamma}^{\textnormal{proj}}}}
\newcommand{\proja}[1]{\ensuremath{\projasym_{#1}}}
\newcommand{\projg}[1]{\ensuremath{\projgsym_{#1}}}
\newcommand{\fprojasym}{\ensuremath{\bm{\alpha}^{\textnormal{\textrm{fproj}}}}}
\newcommand{\fprojgsym}{\ensuremath{\bm{\gamma}^{\textnormal{\textrm{fproj}}}}}
\newcommand{\fproja}[1]{\ensuremath{\fprojasym_{#1}}}
\newcommand{\fprojg}[1]{\ensuremath{\fprojgsym_{#1}}}
\newcommand{\fignoreasym}{\ensuremath{\bm{\alpha}^{\textnormal{\textrm{fignore}}}}}
\newcommand{\fignoregsym}{\ensuremath{\bm{\gamma}^{\textnormal{\textrm{fignore}}}}}
\newcommand{\fignorea}[1]{\ensuremath{\fignoreasym_{#1}}}
\newcommand{\fignoreg}[1]{\ensuremath{\fignoregsym_{#1}}}
\definecolor{darkblue}{rgb}{0,0,0.5}
\definecolor{lightgray}{rgb}{0.8,0.8,0.8}
\title{Variability Abstractions: Trading Precision for Speed in Family-Based Analyses}
\thanks{Partially supported by The Danish Council for Independent
			Research under a Sapere Aude project, VARIETE.}
\author{Aleksandar S. Dimovski \and Claus Brabrand \and Andrzej W{\k a}sowski
}
\institute{IT University of Copenhagen, Denmark}
\begin{document}
\maketitle

\begin{abstract}
  Family-based (lifted) data-flow analysis for Software Product Lines
  (SPLs) is capable of analyzing all valid products (variants) without
  generating any of them explicitly.  It takes as input only the
  common code base, which encodes all variants of a SPL, and produces
  analysis results corresponding to all variants.  However, the
  computational cost of the lifted analysis still depends inherently
  on the number of variants (which is exponential in the number of
  features, in the worst case).  For a large number of features, the
  lifted analysis may be too costly or even infeasible.  In this
  paper, we introduce variability abstractions defined as Galois
  connections and use abstract interpretation as a formal method for
  the calculational-based derivation of approximate (abstracted)
  lifted analyses of SPL programs, which are sound by construction.
  Moreover, given an abstraction we define a syntactic transformation
  that translates any SPL program into an abstracted version of it,
  such that the analysis of the abstracted SPL coincides with the
  corresponding abstracted analysis of the original SPL.  We implement
  the transformation in a tool, \texttt{reconfigurator} that works on
  Object-Oriented Java program families, and evaluate the practicality
  of this approach on three Java SPL benchmarks.
\end{abstract}

\section{Introduction and Motivation}\label{sec:introduction}
Software Product Lines (SPLs) are an effective strategy for developing
and maintaining a family of related programs.  Any valid program
(\emph{variant}) of an SPL is specified in terms of features selected.  A
\emph{feature} is a distinctive aspect, quality, or characteristic from the
problem-domain of a system.  SPLs have been adopted by the industry because of
improvements in productivity and
time-to-market\,\cite{SPL}. While there are many implementation strategies, many industrial product lines are implemented using annotative approaches such as
conditional compilation; in particular, via the C-preprocessor $\texttt{\#ifdef}$
construct\,\cite{ifdef}.

Recently, formal analysis and verification of SPLs have been a topic
of considerable research (see \cite{TRfin2012} for a survey).  The
challenge is to develop analysis and verification techniques that work
at the level of program families, rather than the level of individual
programs.  Given that the number of variants grows exponentially with
the number of features, the need for efficient analysis and
verification techniques is essential.  To address this, a number of
so-called \emph{lifted} techniques have emerged, essentially lifting
existing analysis and verification techniques to work on program
families, rather than on individual programs.  This includes
lifted type checking\,\cite{TypeCheckingSPL}, lifted
data-flow analysis\,\cite{TAOSD,SPLLIFT}, lifted model
checking\,\cite{model-checking-spls}. 
They are also known as
family-based (\emph{variability-aware} or
feature-sensitive) techniques.
%
Lifted techniques are capable of analyzing the entire code base (all
variants at once), without having to explicitly generate and
analyze all individual variants, one at a time.  Also, lifted
techniques are capable of pin-pointing errors directly in the product line, as
opposed to reporting errors in an individual product derived
from the SPL. 

There are two ways to speed up  analyses: improving \emph{representation} and
increasing \emph{abstraction}.  The former has received considerable attention
in the field of family-based analysis.  In this paper, we investigate the
latter.  We consider a range of abstractions at the \emph{variability level}
that may tame the combinatorial explosion of configurations and reduce it to
something more tractable by manipulating the configuration space of a program.
Such variability abstractions enable deliberate trading of
precision for speed in family-based analyses, even turn infeasible analyses
into feasible ones, while retaining an intimate relationship back to the
original analysis (via the abstraction).

We organize our variability abstractions in a calculus that
provides convenient, modular, and compositional declarative
specification of abstractions.  We propose two basic
abstraction operators (\emph{project} and \emph{join}) and two
compositional abstraction operators (\emph{sequential composition} and
\emph{parallel composition}).
%
%
Each abstraction expresses a compromise between precision and speed in the
induced abstracted analysis.
We show how to apply each of these abstractions
to data-flow lifted analyses, to extract (derive) their corresponding efficient
and sound (correct) abstracted lifted analysis based on the
calculational approach of abstract interpretation developed in \cite{Cousot99}.
Note that the approach is applicable to \emph{any} analysis phrased as an
abstract interpretation; in particular, it is not limited to data-flow
analysis.


\begin{figure}[t]
$$
\xymatrix@C=1.5pc@R=1pc{
\text{SPL} \ar[rrr]^{\textit{derive analysis}} \ar[dd]_{\rotatebox{90}{\scriptsize\textit{reconfigure}}}^{\rotatebox{90}{\scriptsize\textit{abstract}}} &&& \ar[dd]_{\rotatebox{90}{\scriptsize\textit{abstract}}} \text{lifted analysis} \ar@{-->}@[gray][rrr]^{\textcolor{gray}{\textit{run analysis}}} &&& \textcolor{gray}{\text{\begin{tabular}{c}precise\\[-0.75ex]lifted analysis\\[-0.75ex]information\end{tabular}}}\\
&& \ar@{=>}[l] &&&& \ar@[gray]@{..>}[u] \textcolor{gray}{\scriptsize\text{\begin{tabular}{c}\underline{TRADEOFF}:\\[-0.5ex]\textit{precision} -vs-\ \textit{speed}\end{tabular}}} \ar@[gray]@{..>}[d]\\
\text{\begin{tabular}{c}\textbf{abstracted}\\[-0.75ex]SPL\end{tabular}} \ar[rrr]_{\textit{derive analysis}} &&& \text{\begin{tabular}{c}\textbf{abstracted}\\[-0.75ex]lifted analysis\end{tabular}} \ar@{-->}@[gray][rrr]_{\textcolor{gray}{\textit{run \textbf{faster} analysis}}} &&& \textcolor{gray}{\text{\begin{tabular}{c}\textbf{less precise}\\[-0.75ex]lifted analysis\\[-0.75ex]information\end{tabular}}}\\
}
$$
\caption{Diagram illustrating the role and intended usage of the
  \texttt{reconfigurator} transformation.  Instead of abstracting an already
  existing (or derived) lifted analysis, our transformation allows abstraction
  to be applied directly to the SPL.  The resulting ``abstracted SPL''
  can then be analyzed using existing techniques.  The two paths from
  SPL to ``abstracted lifted analysis'' are guaranteed to produce the
  same abstracted lifted analysis.}
\label{fig:diagram}
\end{figure}
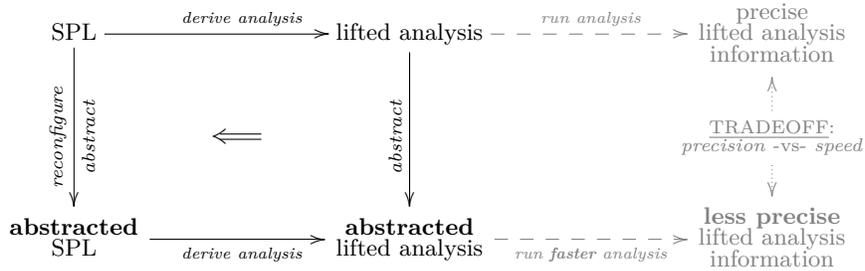

We observe that for variability abstractions, \emph{analysis
  abstraction} and \emph{analysis derivation} commute.
Figure\,\ref{fig:diagram} illustrates how analysis abstraction is
classically undertaken and how we propose to optimize it.  The top
left corner shows a product line that we want to analyze.  A lifted analyzer
will take an SPL as input and derive a ``lifted analysis'' (rightward
arrow).  We can then run that lifted analysis (next rightward dashed
arrow) and obtain our ``\emph{precise} lifted analysis information''.
(Note that for some analyzers, the phases \emph{derive analysis} and
subsequent \emph{run analysis} may be so intertwined that they are not
independently distinguishable.)  Since running the analysis might be
too slow or infeasible, we may decide to use abstraction to obtain a
faster, although less precise analysis.  Classically, an abstraction
is applied to the derived analysis before it is run (middle
arrow down) which, after an often long and complex process, produces an ``abstracted lifted analysis''.  When that
analysis is subsequently run, it will produce less precise
analysis information, but it will do so faster than the
original analysis (i.e., there is a \emph{precision vs.\ speed tradeoff}).

Interestingly, for lifted analyses and variability abstractions, the analysis abstraction (down) and derivation
(right) commute and  we may swap their order of application, as indicated by
the short double leftward arrow in the center.  The implications are quite
significant.  It means that variability abstractions can be applied
\emph{before}, and independently of, the subsequent analysis.  This also means
that the same variability abstractions might be applicable to all sorts of
analyses that are specifiable via abstract interpretation; including, but not
limited to: data-flow analysis\,\cite{CousotCousot79}, model
checking\,\cite{ModelCheckingAsAbsInt1}, type
systems\,\cite{TypeSystemsAsAbsInt} and testing\,\cite{TestingAsAbsInt}.

We exploit this observation to define a \emph{stand-alone} source-to-source
transformation for programs with \texttt{\#ifdef}s, implemented in a tool,
\texttt{reconfigurator}.  It takes an input SPL program and a
variability abstraction and produces an abstracted SPL program such for which
the subsequent lifted analysis agrees with ``abstracted lifted analysis'' of
the original unabstracted SPL.
Since the \texttt{reconfigurator} is based on a source-to-source transformation, and like a
preprocessor it is essentially unaware of the programming language syntax, it
can be used for any analysis.   Many existing analysis methods that are unable to
abstract variability benefit from this work instantly. Almost no extension or
adaptation is required as the abstraction is applied to source code before
analysis. 

We evaluate our approach by comparing analyses of a range of increasingly
abstracted SPLs against their origins without abstraction, quantifying to what
extent precision can be traded for speed in lifted analyses.


In summary, the paper makes the following contributions:%
\begin{itemize}%
\item[\textbf{C1:}] \emph{Variability abstraction} as a method for trading precision for
  speed in family-based analysis (based on abstract interpretation);
\item[\textbf{C2:}] A \emph{calculus} for modular specification of variability
  abstractions;
\item[\textbf{C3:}] The observation that certain analysis derivations
  and analysis abstractions \emph{commute}, meaning that variability
  abstractions can be applied directly on an SPL \emph{before} (and
  independently of) subsequent lifted analysis;
\item[\textbf{C4:}] A stand-alone \emph{transformation}, \texttt{reconfigurator}, based on the above ideas;
\item[\textbf{C5:}] An \emph{evaluation} of the above ideas;
  in particular, an evaluation of the tradeoff between precision and
  speed in family-based analyses.
\end{itemize}
\noindent
We direct this work to program analysis and software engineering
researchers.  The method of \emph{variability abstractions}
(\textbf{C1--C3}) is directed at designers of lifted
analyses for product lines.  They may use our insights to design
improved abstracted analyses that appropriately trade precision for
speed.  Note that the ideas apply beyond the context of data-flow
analyses (e.g., to model checking, type systems, verification, and
testing).  The \texttt{reconfigurator} (\textbf{C4}) and the
evaluation lessons (\textbf{C5}) are relevant for software engineers
working on preprocessor-based product lines and who would like to
speed up existing analyzers.

We proceed by  introducing the basics of lifting analyses in
Section\,\ref{sec:family_based_analysis}.
Section\,\ref{sec:variability_abstractions} defines a calculus for
specification of variability abstractions.  Section\,\ref{sec:analysis}
explains how to apply an abstraction to a lifted analysis. It uses constant
propagation as an example.  The \texttt{reconfigurator} is described in
Section\,\ref{sec:reconfigurator} along with correctness for our example
analysis.   Section \ref{sec:evaluation} presents the evaluation on three
Java Object-Oriented SPLs.
Finally, we discuss the relation to other works
and conclude.

\section{Program Families and Lifted Analyses} \label{sec:family_based_analysis}

In this section we summarize the prerequisites for presenting our
work.  We define \emph{features}, \emph{configurations}, \emph{feature
	expressions}, and a \emph{feature model} which designates a set of
\emph{valid} configurations.  Hereafter, we describe a simple
imperative language \VarIMP\ for writing program families.  Finally,
we briefly sketch a lifted constant propagation analysis for this
language, formally derived in\,\cite{MBW}.  We focus on constant
propagation for presentation purposes; our approach is generically
applicable to any lifted analysis phrased as an abstract
interpretation.

\paragraph{{Features, Configurations, and Feature Expressions.}}
Let $\Ff=\{A_1, \ldots, A_n\}$ be a finite set of \emph{features}, each
of which may be \emph{enabled} or \emph{disabled} in a particular
program variant.
A \emph{feature expression}, \textit{FeatExp} formula, is a propositional logic
formula over $\Ff$, defined inductively by:
\[
\phi ~~~ ::= ~~~ A {\color{gray}\;\in \mathbb F} ~~~ | ~~~ \neg \phi ~~~ | ~~~ \phi_1 \land \phi_2 ~~~ 
\]
%
A truth assignment or \emph{valuation} is a mapping $v$
assigning a truth value to all features. Every feature expression
evaluates to some truth value under the valuation $v$.
We say that $\phi$ is \emph{valid}, denoted as $\models
\phi$, if $\phi$ evaluates to $true$ for all valuations $v$.  We say that
$\phi$ is \emph{satisfiable}, denoted as $\sat(\phi)$, if there exists
a valuation $v$ such that $\phi$ evaluates to $\true$ under $v$.
We say that the formula $\theta$ is a semantic consequence of $\phi$,
denoted as $\phi \models \theta$, if for all satisfiable valuations
$v$ of $\phi$
it follows that $\theta$ evaluates to $true$ under $v$.  Otherwise, we
have $\phi \not\models \theta$.

\paragraph{{Feature Model.}}
A feature model describes the set of \emph{valid} configurations
(variants) of a product line in terms of features and relationships
among them.  For our purposes a feature model can be equated to a
propositional formula\,\cite{Bat}, say $\psi \in \FeatExp$, as the
semantic aspects of feature models beyond the configuration semantics,
are not relevant here.  We write $\Kpsi$ to denote the set of all
\emph{valid} configurations described by the feature model, $\psi$;
i.e., the set of all satisfiable valuations of $\psi$.  One
satisfiable valuation $v$ represents a valid configuration,
%
%
and it can be also encoded as a conjunction of literals: $k_v =
v(A_1) \cdot A_1 \land \cdots \land v(A_n) \cdot A_n$, where $true \cdot A = A$
and $false \cdot A = \neg A$, such that $k_v \models \psi$.  The truth value of
a feature in $v$ indicates whether the given feature is \emph{enabled}
(included) or \emph{disabled} (excluded) in the corresponding configuration.
Let $k_{v_1}, \ldots, k_{v_{n}}$ ($1 \leq n \leq 2^{|\mathbb{F}|}$) represent
all satisfiable valuations of $\psi$ expressed as formulas, then $\Kpsi = \{
k_{v_1}, \, \ldots, \, k_{v_{n}} \}$.  For example, the set of features,
$\mathbb F=\{A, B\}$, and the feature model, $\psi=A \lor B$, yield the
following set of \emph{valid} configurations: $\Kpsi=\{A \land B, A \land \neg
B, \neg A \land B\}$.

\paragraph{{The Programming Language.}}
\VarIMP\ is an extension of the imperative language IMP \cite{winskel}
often used in semantic studies. \VarIMP\ adds a compile-time
conditional statement for encoding multiple variants of a program.
The new statement ``$\impifdef{(\theta)}{s}$'' contains a feature
expression $\theta\in\FeatExp$ as a condition and a statement $s$ that
will be run, i.e.\ included in a variant, iff the condition $\theta$ is
satisfied by the corresponding configuration $k \in \Kk_{\psi}$. The
abstract syntax of the language is given by the following grammar:%
$$
\begin{array}{c}
s~~::=~~\impskip \, \mid \, \impassign{\texttt{x}}{e} \,\mid\, \impseq{s}{s} \,\mid\, \impif{e}{s}{s} \,\mid\, \impwhile{e}{s} \,\mid\, \impifdef{(\theta)}{s}  \\[0.5mm]
e~~::=~~n \,\mid\, \texttt{x} \,\mid\, e \oplus e
\end{array}
$$
%
%
where $n$ ranges over integers, $\texttt{x}$ ranges over variable names \Var, and
$\oplus$ over binary arithmetic operators.  The set of all generated statements
$s$ (respectively expressions $e$) is denoted by \Stm\ (respectively \Exp).
Notice that \VarIMP\ is only used for presentational purposes as a well established
minimal language. Still, the introduced methodology is
not limited to $\overline{\text{IMP}}$ or its features. In fact, we evaluate our
approach on Object-Oriented program families written in Java.

The semantics of $\overline{\text{IMP}}$ has two stages. First, a preprocessor
takes as input an $\overline{\text{IMP}}$ program and a configuration $k \in
\Kk_{\psi}$, and outputs a variant, i.e.\ an $\text{IMP}$ program without
\texttt{\#if}-s, corresponding to $k$.  All ``$\impifdef{(\theta)}{s}$'' statements are
appropriately resolved in the generated valid product, i.e.\ $s$ is included in
it iff $k \models \theta$.  Then, the obtained variant is executed (compiled)
using the standard $\text{IMP}$ semantics\,\cite{winskel}.

\paragraph{{Constant Propagation Analysis.}}
In the context of \VarIMP\, lifting means taking a static analysis that
works on IMP programs, and transforming it into an analysis that works
on \VarIMP\ programs, without preprocessing them (so on all the
variants simultaneously).  The lifted constant propagation analysis
for \VarIMP\ was derived in \cite{MBW}.  We first define a constant
propagation lattice $\poset{\textit{Const}}{\sqsubseteq_C}$, whose
partial ordering $\sqsubseteq_C$ is given by:
\[
\xymatrix@C=.2pc@R=.6pc{
& & & & \ar@{-}[dllll] \ar@{-}[dlll] \ar@{-}[dll] \ar@{-}[dl] {\phantom{p}}\top_C{\phantom{I}} \ar@{-}[d] \ar@{-}[dr] \ar@{-}[drr] \ar@{-}[drrr] \ar@{-}[drrrr] & & & \\
\cdots & \texttt{-3} & \texttt{-2} & \texttt{-1} & \texttt{0} & \texttt{1\phantom{-}} & \texttt{2\phantom{-}} & \texttt{3\phantom{-}} & \cdots \\
& & & & \ar@{-}[ullll] \ar@{-}[ulll] \ar@{-}[ull] \ar@{-}[ul] {\phantom{p}}\bot_C{\phantom{I}} \ar@{-}[u] \ar@{-}[ur] \ar@{-}[urr] \ar@{-}[urrr] \ar@{-}[urrrr] & & &
}
\]
In this domain $\top_C$ indicates a \emph{non-constant} value, and
$\bot_C$ indicates \emph{unanalyzed} information.  All other elements
indicate constant values.  The partial ordering $\sqsubseteq_C$
induces a least upper bound, $\sqcup_C$, and a greatest lower bound
operator, $\sqcap_C$, on the lattice elements.  For example, we have
$0 \sqcup_C 1 = \top_C$, $\top_C \sqcap_C 1 = 1$, etc.

The constant propagation analysis is given in terms of abstract
\emph{constant propagation stores}, denoted by $a$, essentially
mappings of variables to elements of \Const.  Thus $a(\texttt{x})$
informs whether the variable $\texttt{x}$ is a constant, and, in this case, what is its value.
We write $\Aa = \Var \to \Const$ meaning the
domain of all constant propagation stores.  Since \Const\ is a
complete lattice then so is $\langle \mathbb{A}, \sqsubseteq_{\Aa},
\sqcup_{\Aa}, \sqcap_{\Aa}, \bot_{\Aa}, \top_{\Aa} \rangle$ obtained
by point-wise lifting \cite{winskel}.  For example, for $a, a' \in
\Aa$ we have $a \sqsubseteq_{\Aa} a'$ iff $\forall \texttt{x} \in
\textit{Var}$, $a(\texttt{x}) \sqsubseteq_{C} a'(\texttt{x})$.
We omit the subscripts $C$ and $\Aa$ whenever they are clear in context.

\paragraph{{Lifted Constant Propagation Analysis.}}
For the lifted constant propagation analysis,
we work with the lifted property domain $\langle \mathbb{A}^{\Kpsi}, \dot\sqsubseteq, \dot \sqcup, \dot \sqcap, \dot \bot, \dot \top \rangle$,
where $\mathbb{A}^{\Kpsi}$ is shorthand for
the $|\Kpsi|$-fold product $\prod_{k \in \Kpsi} \mathbb{A}$, i.e.\ there is one separate copy of $\mathbb{A}$
for each valid configuration of  $\Kpsi$.
The ordering $\dot\sqsubseteq$ is lifted configuration-wise; i.e., for $\overline{a}, \overline{a}' \in \Aa^{\Kk_{\Psi}}$
we have $\overline{a} ~\dot\sqsubseteq~ \overline{a}' \equiv_\textit{def} ~\confproj{k}{\overline{a}} \sqsubseteq_{\Aa}
\confproj{k}{\overline{a}'}$ for all $k \in \Kpsi$. Here $\pi_{k}$ selects the ${k}^\text{th}$
component of a tuple.
Similarly, we lift configuration-wise all other elements of the complete lattice $\mathbb{A}$,
obtaining $\dot \sqcup, \dot \sqcap, \dot \bot, \dot \top$.
E.g., $\dot \top = \prod_{k \in \Kpsi} \top_{\Aa} = (\top_{\Aa}, \ldots, \top_{\Aa})$.

The lifted analysis $\overline{\mathcal{A}}\sbr{s}$
 should be a function from $\mathbb{A}^{\Kpsi}$ to $\mathbb{A}^{\Kpsi}$.
 However, using a tuple of $|\Kpsi|$ independent simple
 functions of type $\mathbb{A} \to \mathbb{A}$ is sufficient. 
Thus, the lifted analysis is given by the function
$\overline{\mathcal{A}}\sbr{s}:(\Aa \to \Aa)^{\Kpsi}$, which represents a tuple
of $|\Kpsi|$ functions of type $\mathbb{A} \to \mathbb{A}$.  The $k$-th
component of $\overline{\mathcal{A}}\sbr{s}$ defines the analysis corresponding
to the valid configuration described by the formula $k$.  Thus, an analysis
$\overline{\mathcal{A}}\sbr{s}$ transforms a lifted store, $\overline{a} \in
\mathbb{A}^{\Kk_{\psi}}$, into another lifted store of the same type.  For
simplicity, we overload the $\lambda$-abstraction notation, so creating a tuple
of functions looks like a function on tuples: we write $\lambda \overline{a} .
\prod_{k \in \Kk} f_{k}(\confproj{k}{\overline{a}})$ to mean $\prod_{k \in \Kk}
\lambda {a}_{k}.  f_{k}({a}_{k})$.  Similarly, if $\overline{f}:(\mathbb{A} \to
\mathbb{A})^{\Kk}$ and $\overline{a} \in \mathbb{A}^{\Kk}$, then we write
$\overline{f} (\overline{a})$ to mean $\prod_{k \in \Kk}
\confproj{k}{\overline{f}} (\confproj{k}{\overline{a}})$.

\begin{figure}
\begin{scriptsize}
\begin{align*}
 \overline{\mathcal{A}}\sbr{\impskip} &=
    \func{\overline{a}}{ \overline{a} }
\\
  \overline{\mathcal{A}}\sbr{\impassign{\texttt{x}}{e}} &=
       \func{\overline{a}}{\prod_{k \in \Kpsi}
           (\confproj{k}{\overline{a}}) [ \texttt{x} \mapsto \confproj{k}{\overline{\mathcal{A'}}\sbr{e} \overline{a}} ] }
\\
  \overline{\mathcal{A}}\sbr{\impseq{s_0}{s_1}} &=
    \overline{\mathcal{A}}\sbr{s_1} \circ \overline{\mathcal{A}}\sbr{s_0}
\\
  \overline{\mathcal{A}}\sbr{\impif{e}{s_0}{s_1}} &=
    \func{\overline{a}}{\overline{\mathcal{A}}\sbr{s_0} \overline{a} \,\dot\sqcup\, \overline{\mathcal{A}}\sbr{s_1} \overline{a}}
\\
  \overline{\mathcal{A}}\sbr{\impwhile{e}{s}} &=
    \mathrm{lfp}
      \func{\varsignfuncname}{\func{\varsignstore}{
           { \varsignstore ~\dot\sqcup~ \varsignfunc{\overline{\mathcal{A}}\sbr{s} \, \varsignstore} }}}
\\[1.5ex]
  \overline{\mathcal{A}}\sbr{\impifdef{(\theta)}{s}} &=
       \func{\varsignstore}{
         \prod_{k \in \Kpsi}
         \left\{ \begin{array}{ll} \confproj{k}{ \overline{\mathcal{A}}\sbr{s} \varsignstore} & \quad \textrm{if} \ k \models \theta \\[1.5ex]
		\confproj{k}{\varsignstore} & \quad \textrm{if} \  k \not \models \theta \end{array} \right.
         }
\\[1.5ex]
 \overline{\mathcal{A'}}\sbr{\mathit{n}} &=
        \func{\varsignstore}{
          \prod_{k \in \Kpsi}{
            \texttt{n} }}
\\
  \overline{\mathcal{A'}}\sbr{\texttt{x}} &=
        \func{\varsignstore}{
          \prod_{k \in \Kpsi}{
            \confproj{k}{\varsignstore} (\texttt{x}) }}
\\
  \overline{\mathcal{A'}}\sbr{\impbinop{e_0}{e_1}} &=
        \func{\varsignstore}{
          \prod_{k \in \Kpsi}{
            \impparbinop
              { \confproj{k}{ \overline{\mathcal{A'}}\sbr{e_0} \varsignstore}}
              { \confproj{k}{ \overline{\mathcal{A'}}\sbr{e_1} \varsignstore}} }}
\end{align*}
\end{scriptsize}
\caption{Definitions of
$\overline{\mathcal{A}}\sbr{s} : (\mathbb{A} \to \mathbb{A})^{\Kpsi}$
and
$\overline{\mathcal{A'}}\sbr{e} : (\mathbb{A} \to \textit{Const})^{\Kpsi}$.%
\label{fig:liftedanalysis}}
\vspace{-3mm}
\end{figure}

The equations for lifted analysis $\overline{\mathcal{A}}\sbr{s} : (\mathbb{A} \to
\mathbb{A})^{\Kpsi}$ and $\overline{\mathcal{A'}}\sbr{e} : (\mathbb{A} \to
\textit{Const})^{\Kpsi}$ that analyse all valid configurations simultaneously
are given in Fig~\ref{fig:liftedanalysis}.  They are systematically derived in
\cite{MBW} by following the steps of the calculational approach to abstract
interpretation \cite{Cousot99}: define collecting semantics, specify a series
of Galois connections and compose them with the collecting semantics to obtain
the resulting analysis, which is thus sound (correct) by construction.
Monotonicity of $\overline{\mathcal{A}}\sbr{s}$ and
$\overline{\mathcal{A'}}\sbr{e}$ was shown in \cite{MBW} as well.

The (transfer) function $\overline{\mathcal{A}}\sbr{s}$ captures the effect of
analysing the statement $s$ in an input store $\overline{a}$ by computing an
output store $\overline{a}'$.  For the $\impskip$ statement, the analysis
function is an identity on lifted stores.  For the assignment statement,
$\impassign{\texttt{x}}{e}$, the value of variable $\texttt{x}$ is updated in
every component of the input store $\overline{a}$ by the value of the
expression $e$ evaluated in the corresponding component of $\overline{a}$. The
$\texttt{if}$ case results in the least upper bound (join) of the effects from
the two corresponding branches, and it abstracts away the analysis information
at the guard (condition) point.  For the $\texttt{while}$ statement, we compute
the least fixed point of a functional\footnote{The functional of the
	$\texttt{while}$ rule is: $\func{\varsignfuncname}{\func{\varsignstore}{ {
				\varsignstore ~\dot\sqcup~
				\varsignfunc{\overline{\mathcal{A}}\sbr{s} \, \varsignstore} }}}$.}
in order to capture the effect of running all possible iterations of the
$\texttt{while}$ loop.  This fixed point exists and is computable by Kleene's
fixed point theorem, since the functional is a monotone function over complete
lattice with finite height \cite{MBW,CousotCousot79}.  For the
$\impifdef{(\theta)}{s}$ statement, we check for each valid configuration $k$
\footnote{Since any $k \in \Kk_{\psi}$ is a valuation, we have that $k \not \models \theta$ and $k \models \neg \theta$
are equivalent for any $\theta \in FeatExp$.}
whether the feature constraint $\theta$ is satisfied and, if so, it updates the corresponding component
of the input store by the effect of evaluating the statement $s$. Otherwise,
the corresponding component of the store is not updated. The function $\overline{\mathcal{A'}}\sbr{e}$
describes the result of evaluating the expression $e$ in a lifted store.
Note that, for each binary operator $\oplus$, we define the corresponding constant propagation operator $\impparbinopname$,
which operates on values from $\textit{Const}$, as follows:
\begin{align*}
  \impparbinop{v_0}{v_1} =
  \begin{cases}
        \bot &\text{if}~ v_0 = \bot \lor v_1 = \bot
    \\  \texttt{n} &\text{if}~ v_0 = \texttt{n}_0 \land v_1 =
    \texttt{n}_1, ~\text{where}~ \texttt{n} = \impbinop{\texttt{n}_0}{\texttt{n}_1}
    \\  \top &\text{otherwise}
  \end{cases}
\end{align*}
We lift the above operation configuration-wise, and in this way obtain a new operation
$\dot\impparbinopname$ on tuples of $\textit{Const}$ values.

\begin{example} \label{exp_lif2}
Consider the $\overline{\text{IMP}}$ program $\ensuremath{S_1}$:
\[
	\begin{array}{l}
		\texttt{x} := 0;  \\
		\impifdef{(A)}{\texttt{x} := \texttt{x} + 1;} \\
		\impifdef{(B)}{\texttt{x} := 1}
	\end{array}
\]
with the set $\Kpsi=\{ A \land B, A \land \neg B, \neg A \land B\}$.  By using
the rules of Fig.\,\ref{fig:liftedanalysis}, we can calculate
$\overline{\mathcal{A}}\sbr{\ensuremath{S_1}}$ for a store in which
$\texttt{x}$ is uninitialized, i.e.\ it has the value $\top$.  We assume a
convention here that the first component of the store corresponds to
configuration $A \land B$,  the second to $A \land \neg B$, and the third to
$\neg A \land B$.  We write $\overline{a_0}
\stackrel{\overline{\mathcal{A}}\sbr{s}}{\longmapsto} \overline{a_1}$ when
$\overline{\mathcal{A}}\sbr{s} \overline{a_0} = \overline{a_1}$.  We have:
   \begin{align}
     & \big([\texttt{x} \! \mapsto \! \top],\! [\texttt{x} \! \mapsto \! \top],\! [\texttt{x} \! \mapsto \! \top] \big)
        \ \stackrel{\overline{\mathcal{A}}\sbr{\texttt{x} := 0}}{\longmapsto} \
        \big([\texttt{x} \! \mapsto \! 0],\! [\texttt{x} \! \mapsto \! 0],\!  [\texttt{x} \! \mapsto \! 0] \big)
      \notag
\\ &  \quad \stackrel{\overline{\mathcal{A}}\sbr{\mathsf{\#if} \, (A) \, \texttt{x} := \texttt{x} + 1}}{\longmapsto}
        \big([\texttt{x} \! \mapsto \! 1],\! [\texttt{x} \! \mapsto \! 1],\!  [\texttt{x} \! \mapsto \! 0] \big)
       \stackrel{\overline{\mathcal{A}}\sbr{\mathsf{\#if} \, (B) \, \texttt{x} := 1}}{\longmapsto}
        \big([\texttt{x} \! \mapsto \! 1],\! [\texttt{x} \! \mapsto \! 1],\!  [\texttt{x} \! \mapsto \! 1] \big)
      \notag
   \end{align}
After evaluating $\ensuremath{S_1}$, the variable
$\texttt{x}$ has the constant value 1 for all valid configurations.
Observe that in the above lifted stores many components are the same, i.e.\
many configurations have equivalent analysis information. Such lifted stores
can be more compactly represented using sharing (e.g., bit vectors or formulae), which in effect
will result in more efficient implementation of the lifted analysis.

Let $\ensuremath{S_2}$ be a program obtained from $\ensuremath{S_1}$, such that
$\impifdef{(B)}{\texttt{x} := 1}$ is replaced with
$\impifdef{(B)}{\texttt{x} := \texttt{x}-1}$. Then, we have:
\[
\overline{\mathcal{A}}\sbr{\ensuremath{S_2}} \big([\texttt{x} \! \mapsto \! \top],\! [\texttt{x} \! \mapsto \! \top],\! [\texttt{x} \! \mapsto \! \top] \big)
= \big([\texttt{x} \! \mapsto \! 0],\! [\texttt{x} \! \mapsto \! 1],\! [\texttt{x} \! \mapsto \! -1] \big) \qquad \qquad
\]
We will use programs $S_1$ and $S_2$ as running examples throughout the paper.\qed
\end{example}

\section{Variability Abstractions}\label{sec:variability_abstractions}

When the set of configurations $\Kpsi$ is large, calculations on the property
domain $\Aa^\Kpsi$ become expensive, even if using symbolic representations  or
sharing to avoid direct storage of $|\Kpsi|$-sized tuples as done in
\cite{TAOSD}.   We want to replace $\mathbb{A}^{\Kpsi}$ with a smaller domain
obtained by abstraction and perform an approximate, but feasible, lifted
analysis.

\subsection{Basic Abstractions}\label{sec:basic-abstractions}

We describe a compositional way of constructing abstractions over the domain
\(\Aa^\Kk\!\), where $\Kk$ represents an arbitrary set of valid configurations,
  using two basic constructors, join and projection, along
with a sequential and parallel composition of abstractions.  The set of
abstractions $Abs$ is generated by the following grammar:
\begin{equation}
\alpha ~::=~ \joina{} \, \mid \, \proja{\phi}
			\, \mid \,  \alpha \circ \alpha
			\, \mid \, \alpha \, \otimes \, \alpha \enspace
 \end{equation}
where $\phi \in FeatExp$. Below we define the constructors
and motivate them with examples.  For readability, we use the constant
propagation lattice \Aa\, however the results hold for any complete lattice.

\paragraph{{Join.}} Consider the following scenario.  An analysis is run
interactively, while a developer is typing in a development environment.  The
analysis finds simple errors and warnings.  In this scenario, the analysis must
be fast and it should consider all legal configurations $\Kk$. It is not
problematic if some spurious errors are introduced, since, like previously, a more thorough
analysis is run regularly.  Here, the precision with respect to configurations
can be reduced by confounding the control-flow of all the products, obtaining
an analysis that runs as if it was analyzing a single product, but involving
code variants that participate in all products.

The \emph{join} abstraction gathers the information about all valid
configurations $k \in \Kk$ into one value of $\mathbb{A}$.  We formulate the
abstraction $\joina{}: \Aa^\Kk \to \mathbb{A}^{\{ \bigvee_{k \in
		\Kk } k \}}$ and the concretization function $\joing{}:
\Aa^{\{ \bigvee_{k \in \Kk } k \}} \to \mathbb{A}^{\Kk}$ as follows:%
\begin{equation}
	\joina{}(\overline{a}) =
	\left( \textstyle\bigsqcup_{k \in \Kk }
				\confproj{k}{\overline{a}} \right)
	\quad\text{ and }\quad
	\joing{}(a) =  \prod_{k \in \Kk} a
\end{equation}
We overload abstraction names ($\alpha$) to apply not only to domain elements
but also to sets of features, sets of configurations, and, later, to program
code.  
The new set of valid configurations is $\joina{}(\Kk)=\{ \bigvee_{k \in \Kk
} k \}$.  Thus, we have only one valid configuration denoted by the formula
$\bigvee_{k \in \Kk} k$.  Observe that this means that the obtained abstract
domain is effectively $\Aa^1$, which is isomorphic to \Aa.  The proposed
abstraction--concretization pair is a Galois connection, which means that it
can be used to construct analyses using calculational abstract interpretation:
\begin{theorem}\label{thm:join-galois}
	$\poset{\Aa^\Kk}{\pwsqsubseteq}
	\galois{\joina{}}{\joing{}}
	\poset{\mathbb{A}^{\joina{}(\Kk)} }{\pwsqsubseteq}$
	is a Galois connection \footnote{$\poset{L}{\leq_L} \galois{\alpha}{\gamma} \poset{M}{\leq_M}$
is a \emph{Galois connection} between complete lattices $L$ and $M$ iff $\alpha$ and $\gamma$ are total functions that
satisfy: $\alpha(l) \leq_M m \iff l \leq_L \gamma(m)$ for all $l \in L, m \in M$.
}
\footnote{The proofs of all theorems in this section can be found in App.~\ref{App:oper}. }.
\end{theorem}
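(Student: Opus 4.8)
The plan is to verify the defining adjunction property directly: for all $\overline a\in\Aa^\Kk$ and all one-tuples $a\in\mathbb{A}^{\joina{}(\Kk)}$, establish $\joina{}(\overline a)\pwsqsubseteq a \iff \overline a \pwsqsubseteq \joing{}(a)$. The first thing I would do is unfold the two orders. Since $\joina{}(\Kk)=\{\bigvee_{k\in\Kk}k\}$ is a singleton, $\poset{\mathbb{A}^{\joina{}(\Kk)}}{\pwsqsubseteq}$ is order-isomorphic to $\poset{\Aa}{\sqsubseteq_\Aa}$, so on that side $\pwsqsubseteq$ reads as $\sqsubseteq_\Aa$ between single components; on the source side, $\overline a \pwsqsubseteq \joing{}(a)$ unfolds, by the component-wise definition of $\pwsqsubseteq$ and of $\joing{}$, to ``$\confproj{k}{\overline a}\sqsubseteq_\Aa a$ for every $k\in\Kk$''. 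I would also note up front that $\Aa=\Var\to\Const$ is a complete lattice (given in the excerpt), hence so are both powers under the component-wise order, with joins taken component-wise, so that $\bigsqcup_{k\in\Kk}\confproj{k}{\overline a}$ is a well-defined element of $\Aa$ and the two lattices in the statement are indeed complete; $\joina{}$ and $\joing{}$ are plainly total functions of the stated types.

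The core is then a two-line argument amounting to ``the join is the least upper bound''. For $(\Leftarrow)$: if $\confproj{k}{\overline a}\sqsubseteq_\Aa a$ for all $k\in\Kk$, then $a$ is an upper bound of $\{\confproj{k}{\overline a} : k\in\Kk\}$, so $\joina{}(\overline a)=\bigsqcup_{k\in\Kk}\confproj{k}{\overline a}\sqsubseteq_\Aa a$. For $(\Rightarrow)$: if $\bigsqcup_{k\in\Kk}\confproj{k}{\overline a}\sqsubseteq_\Aa a$, then for each fixed $k$ we have $\confproj{k}{\overline a}\sqsubseteq_\Aa\bigsqcup_{k'\in\Kk}\confproj{k'}{\overline a}\sqsubseteq_\Aa a=\confproj{k}{\joing{}(a)}$, i.e.\ $\overline a\pwsqsubseteq\joing{}(a)$. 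That is exactly the Galois-connection condition, so we are done. As an aside, I would remark that the equivalent ``monotone + $\joing{}\circ\joina{}$ extensive + $\joina{}\circ\joing{}$ reductive'' packaging falls out just as easily: monotonicity of both maps is immediate (monotonicity of $\sqcup$, resp.\ of duplication), $\confproj{k}{\joing{}(\joina{}(\overline a))}=\bigsqcup_{k'\in\Kk}\confproj{k'}{\overline a}\sqsupseteq_\Aa\confproj{k}{\overline a}$ gives extensivity, and $\joina{}(\joing{}(a))=\bigsqcup_{k\in\Kk}a=a$ (for $\Kk\neq\emptyset$) gives $\joina{}\circ\joing{}=\mathrm{id}$, hence reductivity.

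There is essentially no hard step here; the only things to handle with care are bookkeeping rather than mathematics: keeping straight that the target index set is the singleton $\joina{}(\Kk)$ so that the target order collapses to $\sqsubseteq_\Aa$, and, if one wants full generality, the degenerate case $\Kk=\emptyset$ (where $\bigsqcup_{k\in\emptyset}\confproj{k}{\overline a}=\bot_\Aa$ and $\joina{}\circ\joing{}$ is constant $\bot_\Aa$, still reductive) — though since $\Kk$ is the set of valid configurations it is naturally taken nonempty. I expect the proof in the appendix to be exactly this short adjunction check.
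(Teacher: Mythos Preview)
Your proposal is correct and matches the paper's proof essentially line for line: the appendix also verifies the adjunction $\joina{}(\abar)\pwsqsubseteq a \iff \abar\pwsqsubseteq\joing{}(a)$ directly, unfolding the definitions and using that $\bigsqcup$ is the least upper bound, in a three-step chain of equivalences. Your additional remarks (the extensive/reductive reformulation and the $\Kk=\emptyset$ corner case) go slightly beyond what the paper records but are accurate.
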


\begin{example} \label{exp1}
	Let us return to the scenario of using \emph{join} for improving analysis
	performance. Assume that the feature model is given by $\psi = A
	\lor B$ with valid configurations $ \Kk_{\psi} = \{  A \land B, A \land \neg
	B, \neg A \land B\}$. Now, the final stores we obtain by analyzing programs
	$S_1$ and $S_2$ from Example~\ref{exp_lif2} are $\overline a_{S_1} =
	\big([\texttt{x} \! \mapsto \! 1],\! [\texttt{x} \! \mapsto \! 1],\!
	[\texttt{x} \! \mapsto \! 1] \big)$ and $\overline a_{S_2} =
	\big([\texttt{x} \! \mapsto \! 0],\! [\texttt{x} \!  \mapsto \! 1],\!
	[\texttt{x} \! \mapsto \! -1] \big)$. Applying the join abstraction we
	obtain $\joina{}(\overline a_{S_1}) = \big([\texttt{x} \!  \mapsto \! 1]
	\big)$ and $\joina{}(\overline a_{S_2}) = \big([\texttt{x} \!  \mapsto \!
	\top] \big)$. In both cases the state representation has been significantly
	decreased.  In the former case, the abstraction promptly notices that
	\texttt{x} is a constant regardless of the configuration. In the latter
	case, the abstraction looses precision by saying that \texttt{x} is not a
	constant in general, even if it was a constant in each of the configurations
	considered.  We will continue using stores $\overline a_{S_1}$ and
	$\overline a_{S_2}$ in the subsequent examples.
\qed
\end{example}

\paragraph{{Projection.}} In industrial practice  the number of
products actually deployed is often only a small subset of
$\Kk$\,\cite{DBLP:conf/models/BergerNRACW14}.
 In such case, analyzing all legal (valid) configurations seems unnecessary,
and performance of analyses can be improved by abstracting many products away.
This is achieved by a configuration projection, which removes configurations
that do not satisfy a given constraint, for instance a disjunction of product
configurations of interest.  Projection can be helpful in other similar
scenarios;  for instance, 
to parallelize the analysis---by partitioning the product
space using project and analyzing each partition separately.

Let $\phi$ be a formula over feature names. We define a \emph{projection}
	abstraction mapping $\mathbb{A}^{\Kk}$ into the domain $\mathbb{A}^{\{k \in
	\Kk | k \models \phi\}}$, which preserves only  the values corresponding to
configurations from $\Kk$ that satisfy $\phi$.  The information about
configurations violating $\phi$ is disregarded.  The abstraction and
concretization functions between $\mathbb{A}^{\Kk}$ and $\mathbb{A}^{\{k \in
	\Kk | k \models \phi\}}$ are defined as follows:
\begin{align}
& \proja\phi(\abar) =
	\textstyle\prod_{k \in \Kk, k \models \phi} \confproj k\abar
\label{eq:alpha-2}
\\
& \projg\phi(\abar') =  \textstyle\prod_{k \in \Kk}
	\begin{cases}
		\confproj{k}{\abar'} 	& \textrm{if } k \models \phi \\
		\top 							& \textrm{if } k \not\models \phi
	\end{cases}
\end{align}
%
%
%
The new set of configurations is
$\proja\phi(\Kk) = \{k \in \Kk \mid k \models \phi\}$. Naturally, we also have a Galois connection here:
\begin{theorem}\label{thm:proj-galois}
	$\poset{\Aa^\Kk}{\pwsqsubseteq}
	\galois{\proja\phi}{\projg\phi}
	\poset{\Aa^{\proja\phi(\Kk)}}{\pwsqsubseteq}$
	is a Galois connection.
\end{theorem}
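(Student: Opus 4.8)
The plan is to verify the defining bi-implication of a Galois connection directly, by unfolding the configuration-wise ordering $\pwsqsubseteq$. First I would record that $\proja\phi$ and $\projg\phi$ are total and well defined: $\proja\phi$ merely selects the subtuple of components indexed by $\proja\phi(\Kk) = \{k \in \Kk \mid k \models \phi\}$, while $\projg\phi$ re-expands a $\proja\phi(\Kk)$-indexed tuple to a $\Kk$-indexed one by inserting $\top_{\Aa}$ at every component $k$ with $k \not\models \phi$; both have codomains that are products of the complete lattice $\Aa$, hence complete lattices, and the degenerate case $\proja\phi(\Kk) = \emptyset$ (where $\Aa^{\proja\phi(\Kk)}$ is the one-point lattice) is handled trivially.

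For the adjunction, fix $\abar \in \Aa^\Kk$ and $\abar' \in \Aa^{\proja\phi(\Kk)}$ and unfold both orderings pointwise. Since $\confproj{k}{\proja\phi(\abar)} = \confproj{k}{\abar}$ for every $k \in \proja\phi(\Kk)$, the condition $\proja\phi(\abar) \pwsqsubseteq \abar'$ is equivalent to: $\confproj{k}{\abar} \sqsubseteq_{\Aa} \confproj{k}{\abar'}$ for all $k \in \Kk$ with $k \models \phi$. On the other side, $\abar \pwsqsubseteq \projg\phi(\abar')$ means $\confproj{k}{\abar} \sqsubseteq_{\Aa} \confproj{k}{\projg\phi(\abar')}$ for all $k \in \Kk$; splitting on whether $k \models \phi$, this is $\confproj{k}{\abar} \sqsubseteq_{\Aa} \confproj{k}{\abar'}$ when $k \models \phi$ and $\confproj{k}{\abar} \sqsubseteq_{\Aa} \top_{\Aa}$ when $k \not\models \phi$; the latter always holds because $\top_{\Aa}$ is the greatest element of $\Aa$. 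So this side reduces to exactly the same condition, and the two are equivalent — which is precisely the Galois-connection requirement.

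Since the proof is a routine unfolding, there is no genuine obstacle; the only points needing care are being explicit that $\pwsqsubseteq$ on a product is the component-wise ordering (so quantifiers over $\Kk$ may be split according to whether $k \models \phi$), and observing that the $k \not\models \phi$ components contribute nothing, which rests solely on $\top_{\Aa}$ being the top of $\Aa$. As a byproduct one sees $\proja\phi \circ \projg\phi = \mathrm{id}$, so the connection is in fact a Galois insertion; I would mention this but it is not needed for the statement. If one prefers not to argue via the bi-implication, an equivalent route is to show that $\proja\phi$ and $\projg\phi$ are monotone with $\projg\phi \circ \proja\phi \pwsqsupseteq \mathrm{id}$ and $\proja\phi \circ \projg\phi \pwsqsubseteq \mathrm{id}$, then invoke the standard characterization of Galois connections, but the direct check above is shorter and is analogous to the proof of Theorem~\ref{thm:join-galois}.
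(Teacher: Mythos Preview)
Your proof is correct and follows essentially the same approach as the paper: both verify the Galois-connection bi-implication directly by unfolding the pointwise ordering, splitting the $\Kk$-indexed conjunction according to whether $k \models \phi$, and using that $\top_{\Aa}$ is the greatest element to discharge the $k \not\models \phi$ components. Your additional remarks on well-definedness, the degenerate empty case, and the Galois-insertion observation go beyond what the paper records, but the core argument is the same.
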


\vspace{-2mm}

\noindent Notice that $\proja\true$ is the identity function, since $k \models
\true$ for all $k \in \Kk$.  On the other hand $\proja\false$ is the coarsest
collapsing abstraction that maps any tuple into an empty one, since $k
\not \models \false$, for all $k$.

\begin{example} \label{exp2}
	Let us revisit our scenario, where a set of deployed configurations is much
	smaller than the set of configurations defined by the feature model $\psi$.
	 Let us consider the store
	$\overline{a}_{S_2}$ with the set of valid configurations $\Kpsi$ from
	Example\,\ref{exp1}.  The set of deployed products is defined by formula
	$\phi = A$ (so all possible programs with feature $A$ are marketed).  By
	definition of projection (\ref{eq:alpha-2}), we have:
	$\proja{A}(\overline{a}_{S_2}) = \big( \confproj{A \land
		B}{\overline{a}_{S_2}}, \confproj{A \land  \neg B}{\overline{a}_{S_2}}
	\big) = \big( [\texttt{x} \! \mapsto \! 0], [\texttt{x} \! \mapsto \! 1]
	\big)$, and $\proja{\neg A}(\overline{a}_{S_2}) = ( \confproj{ \neg A \land
		B}{\overline{a}_{S_2}} ) = ([\texttt{x} \! \mapsto \! -1])$.  The state
	representation is effectively decreased to two, respectively
        one, components.
	\qed
\end{example}

An attentive reader, might
discount the idea of the projection abstraction as being overly heavy. In the
end, it appears to be equivalent to running the original analysis, just with a
strengthened feature model ($\psi\land\phi$).  However, as we shall see in the
subsequent developments, projection is indeed useful.  Thanks to the
composition operators it can enter intricate scenarios, which cannot be
expressed using a simple strengthening of a global feature model.

\paragraph{{Sequential Composition.}} We use composition to build complex
abstractions out of the basic ones, which also allows us to keep the number of
operators in the framework and in the implementation low.

Recall that a composition of two Galois connections is also a Galois connection
\cite{CousotCousot92}.  Let $\poset{\mathbb{A}^{\Kk}}{\dot\sqsubseteq}
\galois{\alpha_{1}}{\gamma_{1}} \poset{\mathbb{A}^{\alpha_1(\Kk)}
}{\dot\sqsubseteq}$ and $\poset{\mathbb{A}^{\alpha_1(\Kk)}}{\dot\sqsubseteq}
\galois{\alpha_{2}}{\gamma_{2}} \poset{\mathbb{A}^{\alpha_2(\alpha_1(\Kk))}
}{\dot\sqsubseteq}$ be two Galois connections. 
Then, we define their \emph{composition} as $ \poset{\Aa^\Kk}{\pwsqsubseteq}
\galois{\alpha_{2} \circ \alpha_{1}}{\gamma_{1} \circ \gamma_{2} } \poset{
	\Aa^{(\alpha_2 \circ \alpha_1)(\Kk)} }{\pwsqsubseteq} $, where
\begin{equation}
	(\alpha_2\circ\alpha_1)(\abar) = \alpha_{2}(\alpha_1(\abar))
	\quad\text{ and }\quad
	(\gamma_1\circ\gamma_2)(\abar') = \gamma_1(\gamma_2(\abar'))
\end{equation}
for $\abar\in\Aa^\Kk$ and $\abar'\in\Aa^{(\alpha_2 \circ \alpha_1)(\Kk)}$.
Also 
$(\alpha_2 \circ \alpha_1) (\Kk) = \alpha_2 (\alpha_1 (\Kk))$.

\begin{example}\label{exmpl:sequential}
%
%
	Now consider the process of deriving an analysis, which only considers
	products actually deployed described by a formula $\phi$ (see previous
	example), but which should trade precision for speed, by confounding their
	execution.  Such an analysis is derived using the composed abstraction:
	\(\joina{}\circ\proja{\phi}\).

	Let $\phi=A$. Configurations $A \land B$ and $A \land \neg B$ satisfy
	$\phi$, whereas $\neg \phi$ is satisfied only by $ \neg A \land B$.  We
	have: $\joina{}\circ\proja{A} (\abar_{S_2}) = ( \confproj{A \land B}
	{\overline{a}_{S_2}} \sqcup \confproj{A \land \neg B}{\overline{a}_{S_2}} )
	= ([\texttt{x} \! \mapsto \! \top])$ and \(\joina{}\circ\proja{\neg A}
		(\abar_{S_2}) = ( \confproj{\neg A \land B}{\overline{a}_{S_2}}) =
		([\texttt{x} \! \mapsto \! -1])\).
	\qed
%
%
\end{example}

\paragraph{{Parallel Composition.}} Consider a product line where two disjoint
groups of products share the same code base: one group is correctness critical, the
other comprises correctness non-critical products.  The former should be
analyzed with highest precision possible to obtain the most precise analysis
results, the latter can be analyzed faster.
We can set up such analyses by using a projection abstraction to analyze the
correctness critical group precisely, and the join abstraction to analyze
the non-critical group.  However running the analyses twice, ignores the fact
that the code is shared between the groups.  We can combine two separate
analyses by creating a compound abstraction: a \emph{product} of the two.  The
product abstraction will correspond exactly to executing the projection on the
correctness critical products, and join on the non-critical ones.  But since the
product creates a single Galois connection of the two, it can be used to derive
an analysis which will deliver this in a single run, which is more efficient
overall, due to reuse of the states explored.

Galois connections $\poset{\Aa^\Kk}\pwsqsubseteq$
$\smash{\galois{\alpha_1}{\gamma_1}} \poset{\Aa^{\alpha_1(\Kk)\!\!}
}\pwsqsubseteq$ and $\poset{\Aa^\Kk\!\!}\pwsqsubseteq$
\smash{$\galois{\alpha_2}{\gamma_2}$}
$\poset{\Aa^{\alpha_2(\Kk)}\!\!}\pwsqsubseteq$ over the same domain $\Aa^\Kk$
can be composed into one that combines the abstraction results "side-by-side".
The result is a new compound abstraction, $\alpha_1\otimes\alpha_2$, of the
domain $\mathbb{A}^{\Kk}$ obtained by applying the two simpler abstractions in
parallel. The parallel composition of abstractions is defined using a direct tensor
product.
For the resulting Galois connection, we have
$\alpha_1 \otimes \alpha_2(\Kk) = \alpha_1 (\Kk) \cup \alpha_2 (\Kk)$.  Given
$\abar_1\in\Aa^{\alpha_1(\Kk)}$ and $\abar_2\in \Aa^{\alpha_2(\Kk)}$, we first
define $\abar_1 \times \abar_2 \in \alpha_1(\Kk) \cup \alpha_2(\Kk)$ as:%
\begin{equation}
	\abar_1 \times \abar_2 = \prod_{k \in \alpha_1(\Kk) \cup \alpha_2(\Kk)}
		\begin{cases}
			\pi_k(\abar_1)
					& \text{if } k\in\alpha_1(\Kk) \setminus \alpha_2(\Kk) \\
	   	\pi_{k}(\abar_1) \sqcup \pi_{k}(\abar_2)
					& \text{if } k\in\alpha_1(\Kk) \cap \alpha_2(\Kk) \\
    		\pi_{k}(\abar_2)
					& \text{if } k\in\alpha_2(\Kk) \setminus \alpha_1(\Kk)
		\end{cases}
\end{equation}
The direct tensor product is given as $\poset{\Aa^\Kk}\pwsqsubseteq
\galois{\alpha_1 \otimes \alpha_2}{\gamma_1 \otimes \gamma_2 }
\poset{\Aa^{(\alpha_1 \otimes \alpha_2)(\Kk)}}\pwsqsubseteq $, where%
\begin{align} \label{eq:alpha-3}
	(\alpha_1\otimes\alpha_2)(\abar)
			& = \alpha_1(\abar) \times \alpha_2(\abar)\\
	(\gamma_1\otimes\gamma_2)(\abar')
			& = \gamma_1(\pi_{\alpha_1(\Kk)}(\abar')) \sqcap
				\gamma_2(\pi_{\alpha_2(\Kk)}(\abar'))
				\enspace, \text{ where}
\end{align}
$\pi_{\!\alpha_1(\Kk)}(\abar')\!=\! \prod_{k \in \alpha_1(\Kk)}
\!\pi_k(\abar')$ and $\pi_{\!\alpha_2(\Kk)}(\abar') \!=\! \prod_{k \in
	\alpha_2(\Kk)} \!\pi_{k}(\abar')$, for $\abar' \!\in\!
\Aa^{(\alpha_1\otimes\alpha_2)(\Kk)}$\!.

%
%
%
%
%
%

\begin{theorem}\label{thm:product-galois}
$\poset{\Aa^\Kk}\pwsqsubseteq
\galois{\alpha_1 \otimes \alpha_2}{\gamma_1 \otimes \gamma_2 }
\poset{\Aa^{(\alpha_1 \otimes \alpha_2)(\Kk)}}\pwsqsubseteq$ is a Galois connection.
\end{theorem}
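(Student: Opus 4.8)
The plan is to verify the adjunction property directly from the definition of Galois connection given in the footnote, namely that for all $\abar \in \Aa^\Kk$ and all $\abar' \in \Aa^{(\alpha_1 \otimes \alpha_2)(\Kk)}$ one has
\[
(\alpha_1 \otimes \alpha_2)(\abar) \pwsqsubseteq \abar' \quad\Longleftrightarrow\quad \abar \pwsqsubseteq (\gamma_1 \otimes \gamma_2)(\abar') .
\]
The surrounding hygiene is routine: $\Aa^{(\alpha_1\otimes\alpha_2)(\Kk)} = \Aa^{\alpha_1(\Kk) \cup \alpha_2(\Kk)}$ is a complete lattice, being a finite point-wise lifting of the complete lattice $\Aa$, and both $\alpha_1 \otimes \alpha_2$ and $\gamma_1 \otimes \gamma_2$ are manifestly total, since $\times$, $\pi_{\alpha_1(\Kk)}$, $\pi_{\alpha_2(\Kk)}$ and $\sqcap$ are defined on all inputs. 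So the whole content is the displayed equivalence, which I would establish by a chain of ``iff'' steps.

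First I would unfold the left-hand side. Since $\pwsqsubseteq$ on $\Aa^{\alpha_1(\Kk) \cup \alpha_2(\Kk)}$ is taken component-wise, $(\alpha_1 \otimes \alpha_2)(\abar) \pwsqsubseteq \abar'$ holds iff $\pi_k\big(\alpha_1(\abar) \times \alpha_2(\abar)\big) \sqsubseteq \pi_k(\abar')$ for every $k \in \alpha_1(\Kk) \cup \alpha_2(\Kk)$. Substituting the three-case definition of $\abar_1 \times \abar_2$: for $k \in \alpha_1(\Kk) \setminus \alpha_2(\Kk)$ this is $\pi_k(\alpha_1(\abar)) \sqsubseteq \pi_k(\abar')$; for $k \in \alpha_2(\Kk) \setminus \alpha_1(\Kk)$ it is $\pi_k(\alpha_2(\abar)) \sqsubseteq \pi_k(\abar')$; and for $k \in \alpha_1(\Kk) \cap \alpha_2(\Kk)$ it is $\pi_k(\alpha_1(\abar)) \sqcup \pi_k(\alpha_2(\abar)) \sqsubseteq \pi_k(\abar')$. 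Using $x \sqcup y \sqsubseteq z \Leftrightarrow (x \sqsubseteq z \wedge y \sqsubseteq z)$ in $\Aa$, the overlap case splits into $\pi_k(\alpha_1(\abar)) \sqsubseteq \pi_k(\abar')$ and $\pi_k(\alpha_2(\abar)) \sqsubseteq \pi_k(\abar')$. Re-collecting the resulting atomic inequalities by which domain the index belongs to, the left-hand side is equivalent to the conjunction $\alpha_1(\abar) \pwsqsubseteq \pi_{\alpha_1(\Kk)}(\abar') \ \wedge\ \alpha_2(\abar) \pwsqsubseteq \pi_{\alpha_2(\Kk)}(\abar')$.

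Then I would apply the two given Galois connections. The hypothesis on $\langle \alpha_1, \gamma_1 \rangle$ turns the first conjunct into $\abar \pwsqsubseteq \gamma_1(\pi_{\alpha_1(\Kk)}(\abar'))$, and the hypothesis on $\langle \alpha_2, \gamma_2 \rangle$ turns the second into $\abar \pwsqsubseteq \gamma_2(\pi_{\alpha_2(\Kk)}(\abar'))$. Finally, the universal property of the greatest lower bound in $\Aa^\Kk$, $z \pwsqsubseteq x \sqcap y \Leftrightarrow (z \pwsqsubseteq x \wedge z \pwsqsubseteq y)$, combines the two into $\abar \pwsqsubseteq \gamma_1(\pi_{\alpha_1(\Kk)}(\abar')) \sqcap \gamma_2(\pi_{\alpha_2(\Kk)}(\abar')) = (\gamma_1 \otimes \gamma_2)(\abar')$. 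Reading the chain of equivalences end to end gives exactly the adjunction, and hence the Galois connection.

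The proof has no deep step; the one place to be careful is the bookkeeping in the middle paragraph---keeping the three index regions $\alpha_1(\Kk) \setminus \alpha_2(\Kk)$, $\alpha_1(\Kk) \cap \alpha_2(\Kk)$, $\alpha_2(\Kk) \setminus \alpha_1(\Kk)$ apart while expanding $\times$, and then re-assembling the split join-conditions into the two clean inequalities $\alpha_i(\abar) \pwsqsubseteq \pi_{\alpha_i(\Kk)}(\abar')$. An alternative, slightly more structural route is to factor the construction as the standard direct-product Galois connection $\Aa^\Kk \to \Aa^{\alpha_1(\Kk)} \times \Aa^{\alpha_2(\Kk)}$ followed by a ``coordinate-merging'' Galois connection onto $\Aa^{\alpha_1(\Kk) \cup \alpha_2(\Kk)}$ that joins the shared coordinates; since a composition of Galois connections is one, this also works, but verifying that the merging step is adjoint is the same split-join calculation, so the direct argument is just as short.
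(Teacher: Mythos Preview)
Your proof is correct and follows essentially the same chain of equivalences as the paper's own argument. The paper compresses your entire middle paragraph (the three-region case split and the use of $x \sqcup y \sqsubseteq z \Leftrightarrow x \sqsubseteq z \wedge y \sqsubseteq z$) into a single step justified ``by def.\ of $\abar_1 \times \abar_2$, $\pi_{\alpha_1(\Kk)}$, and $\pi_{\alpha_2(\Kk)}$''; your version simply makes that bookkeeping explicit.
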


\begin{example}
Let us assume that for products with feature $A$ we need precise analysis results, and
for products without this feature we do not need so precise results.  We are interested in analyzing products
with $A$ thoroughly, while the analysis of the products without $A$ can be
speeded up.  To this end we build the following abstraction: $\proja{A} \otimes
(\joina{} \circ \proja{\neg A})$.\qed
\end{example}




\subsection{Derived Abstractions}\label{sec:derived-abstractions}

We shall now discuss three more abstractions that can be derived from the above basic constructors.

\paragraph{{Join-Project.}} Recall the construction of
Example~\ref{exmpl:sequential}, where we combined projection with a join in
order to confound a subset of legal configurations. This pattern has occurred
so often in our exercises that we introduced a syntactic sugar for
it.  For a formula $\phi$ over features, the abstraction $\joina{\phi}$ gathers
the information about all valid configurations $k \in \Kk$ that satisfy $\phi$,
i.e.\ $k \models \phi$, into one value of $\mathbb{A}$, whereas the information
about all other valid configurations $k \in \Kk$ that do not satisfy $\phi$
 is disregarded.  We define
\begin{equation}
	\joina\phi = \joina{} \circ \proja\phi
	\quad \text{ and } \quad \joing{\phi} = \projg\phi \circ \joing{}\enspace
\end{equation}
where we have that \(\poset{\Aa^\Kk}\pwsqsubseteq
\galois{\proja\phi}{\projg\phi} \poset{\Aa^{\proja\phi(\Kk)}}\pwsqsubseteq\) and
$\poset{\Aa^{\proja\phi(\Kk)}}\pwsqsubseteq$
$\galois{\joina{}}{\joing{}}$
$\poset{\Aa^{(\joina{}\circ\proja\phi)(\Kk))}}\pwsqsubseteq$ are Galois
connections. Now the compositions in Example\,\ref{exmpl:sequential} can be
written simply as \joina{A} and \joina{\neg A}.

%
%

\paragraph{Ignoring features.} Consider a scenario, where a configurable
third-party component is integrated into a product line.  The code base is
large, and a static analysis does not scale to this size.  In a
compile-analyze-test cycle errors appear most often in the newly written code,
and are thus relatively little influenced by how the features of the third
party component are configured.  Lowering precision on analyzing external
components can allow finding errors faster.  
This scenario can be
realized using a feature projection, which simplifies feature domains by confounding
executions differing only on uninteresting features.

Before defining feature projection, let us consider a simpler case of ignoring
a single feature $A \in \Ff$ that is not directly relevant for current analysis. The
\emph{ignore feature} abstraction merges any configurations that only differ
with regard to $A$, and are identical with regard to remaining features,
$\Ff\!\setminus\!\{A\}$.
We write
\(\phi\setminus_A\) for a formula obtained by eliminating variable $A$ from
$\phi$.  The actual method of variable elimination is insignificant, as we
assume all equivalent formulas as identical in this paper. The new set of
configurations is given by $\fignorea{A}(\Kk) = \{ \bigvee_{k \in
		\Kk, k\setminus_A \equiv k' } k \mid k' \in \{ k\setminus_A \mid k\in\Kk \} \}$.
The abstraction $\fignorea{A}:\Aa^\Kk \to \Aa^{\fignorea{A}(\Kk)}$ and
concretization functions $\fignoreg{A}: \Aa^{\fignorea{A}(\Kk)} \to \Aa^\Kk$
are:
\begin{align}
	&\fignorea{A}(\abar) = \textstyle\prod_{k'\in\fignorea{A}(\Kk)}
	\textstyle\bigsqcup_{k\in\Kk, k \models k'}
	\pi_{k}(\abar)\label{eq:fignorea}\\
	&\fignoreg{A}(\abar') = \textstyle\prod_{k\in\Kk} \pi_{k'}(\abar') \ \text{ if } k \models k'
\end{align}

It turns out that ignoring features can be derived from the above basic abstractions as shown in the following theorem:
\begin{theorem}\label{thm:fproj-expansion}
Let $\fignorea{A}(\Kk) = \{k_1',\dots,k_n'\}$. Then:
\begin{equation*}
	\fignorea{A} = \joina{k_1'} \otimes \dots \otimes \joina{k_n'}
		 \quad\text{ and }\quad
	\fignoreg{A} = \joing{k_1'} \otimes \dots \otimes \joing{k_n'} \enspace.
\end{equation*}
\end{theorem}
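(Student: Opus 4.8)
The plan is to reduce the whole statement to one combinatorial observation about valuations plus a componentwise unfolding of the definitions.

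\textbf{Step 1 (the one real lemma).} The elimination $\setminus_A$ induces an equivalence on $\Kk$: put $k_1 \sim_A k_2$ iff $k_1\setminus_A \equiv k_2\setminus_A$. By construction each $k_i'$ is the disjunction $\bigvee_{k}k$ over one $\sim_A$-class. The key fact is that, because every $k \in \Kk$ is a complete valuation (a minterm over $\Ff$), for $k \in \Kk$ we have $k \models k_i'$ \emph{iff} $k$ belongs to the class whose disjunction is $k_i'$: the $\Leftarrow$ direction is immediate, and $\Rightarrow$ holds since a minterm that implies a disjunction of minterms must equal one of them. Consequently the sets $C_i \aasg \{k\in\Kk \mid k\models k_i'\}$ partition $\Kk$, the formulas $k_1',\dots,k_n'$ are pairwise distinct (distinct classes have distinct minterm-sets, hence distinct disjunctions), and $\bigvee_{k\in C_i}k \equiv k_i'$. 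I expect this to be the only step with genuine content; everything else is bookkeeping.

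\textbf{Step 2 (domains and degeneracy of $\otimes$).} From Step 1, $\proja{k_i'}(\Kk) = C_i$ and therefore $\joina{k_i'}(\Kk) = \joina{}(C_i) = \{\bigvee_{k\in C_i}k\} = \{k_i'\}$, so $\joina{k_i'}$ indeed lands in (a copy of) $\Aa$. Hence $(\joina{k_1'}\otimes\dots\otimes\joina{k_n'})(\Kk) = \bigcup_i\{k_i'\} = \{k_1',\dots,k_n'\} = \fignorea{A}(\Kk)$, matching the left-hand domain. Moreover, since the singletons $\{k_i'\}$ are \emph{pairwise disjoint}, the iterated tensor product is unambiguous (all unions in its definition are disjoint, so associativity/commutativity are on the nose, no reassociation subtleties), and the $\times$-combination in (\ref{eq:alpha-3}) degenerates: every component of the result comes from exactly one factor, so $\times$ is just the direct product and the $\sqcap$ in $\gamma_1\otimes\gamma_2$ never actually meets two non-trivial values. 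That the right-hand side is a bona fide Galois connection to begin with follows from Theorem~\ref{thm:product-galois} together with closure of Galois connections under composition.

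\textbf{Step 3 (componentwise check).} For the abstraction: $\joina{k_i'}(\abar) = \joina{}(\proja{k_i'}(\abar)) = \bigsqcup_{k\in\Kk,\,k\models k_i'}\pi_k(\abar) = \bigsqcup_{k\in C_i}\pi_k(\abar)$, which is exactly the $k_i'$-component of $\fignorea{A}(\abar)$ by (\ref{eq:fignorea}); since $\otimes$ is a plain product here, $(\joina{k_1'}\otimes\dots\otimes\joina{k_n'})(\abar) = \prod_i \bigsqcup_{k\in C_i}\pi_k(\abar) = \fignorea{A}(\abar)$. For the concretization: unfold $\joing{k_i'} = \projg{k_i'}\circ\joing{}$ to get $\joing{k_i'}(b) = \prod_{k\in\Kk}\big(b \text{ if } k\in C_i,\ \top \text{ otherwise}\big)$; then for $k$ in class $C_j$, the $k$-component of $\bigsqcap_i\joing{k_i'}(\pi_{k_i'}(\abar'))$ is $\pi_{k_j'}(\abar')\sqcap\bigsqcap_{i\neq j}\top = \pi_{k_j'}(\abar')$, which is precisely $\fignoreg{A}(\abar')$ at $k$ (the unique $k'$ with $k\models k'$ being $k_j'$). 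Hence both abstraction and concretization functions coincide, proving the theorem.

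\textbf{Anticipated obstacle.} The substantive hinge is the minterm argument in Step 1 — it is what makes the classes disjoint and guarantees that ``$k\models k_i'$'' selects exactly class $i$; without completeness of valuations the $k_i'$ could share minterms and the $\otimes$ would no longer degenerate to a direct product. A minor care point is simply to state clearly that the $n$-fold $\otimes$ is read up to the evident commutativity/associativity of the tensor product, which is harmless here precisely because all the index sets involved are disjoint.
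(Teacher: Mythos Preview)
Your proposal is correct and follows essentially the same route as the paper's proof: both hinge on the pairwise disjointness of the $k_i'$ as index formulas, which makes the iterated $\otimes$ collapse to a plain direct product on the $\alpha$ side and to a meet of one non-$\top$ term against $\top$'s on the $\gamma$ side. The paper carries this out by explicit induction on the number of tensor factors, while you isolate the disjointness upfront (your Step~1, via the minterm argument) and then do a single componentwise check; your presentation is slightly tidier and supplies the justification for disjointness that the paper uses but only asserts in passing.
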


 \begin{example}
 We consider the lifted store $\overline{a}_{S_2}$ with $\Kpsi=\{A \land B, A \land \neg B, \neg A \land B \}$.
 Then, we have $\fignorea{A}(\Kpsi)=\{(A \land B) \lor (\neg A \land B),  A \land \neg B\}$ and
 $\fignorea{A}(\overline{a}_{S_2}) = ( \confproj{A \land B}{\overline{a}_{S_2}} \sqcup \confproj{\neg A \land B}{\overline{a}_{S_2}},
  \confproj{A \land \neg B}{\overline{a}_{S_2}})
 =([\texttt{x} \! \mapsto \! \top], [\texttt{x} \! \mapsto \! 1])$.
 On the other hand, we have $\fignorea{B}(\Kpsi)=\{(A \land B) \lor (A \land \neg B), \neg A \land B\}$ and
 $\fignorea{A}(\overline{a}_{S_2}) = ( \confproj{A \land B}{\overline{a}_{S_2}} \sqcup \confproj{ A \land \neg B}{\overline{a}_{S_2}},
  \confproj{\neg A \land B}{\overline{a}_{S_2}})
 =([\texttt{x} \! \mapsto \! \top], [\texttt{x} \! \mapsto \! -1])$. \qed
 \end{example}

\paragraph{{Feature Projection.}} Now, if we need to ignore a larger number of
features (say features outside a certain component of interest), we can do it
using a feature projection operator which simply ignores a set of features 
$\{A_1,\dots,A_k\} \subseteq \Ff$:%
\begin{equation*}
\fproja{\{A_1,\dots,A_k\}} = \fignorea{A_1} \circ \dots \circ \fignorea{A_k}
\quad\text{ and }\quad
\fprojg{\{A_1,\dots,A_k\}} = \fignoreg{A_k} \circ \dots \circ \fignoreg{A_1}
\end{equation*}

\medskip

\noindent It follows from the theorems of Section\,\ref{sec:basic-abstractions}
that all the derived pairs of abstraction--concretization are Galois
connections.

%


%
%

%

\vspace{-2mm}

\section{Abstracting Lifted Analyses} \label{sec:analysis}

\begin{figure}[t]
	\small
   \begin{align*}
&
		(\alpha \circ \overline{\mathcal{A}}\sbr{\impifdef{(\theta)}{s}}
			\circ \gamma) (\overline{d})
		= \alpha (
				\overline{\mathcal{A}}\sbr{\impifdef{(\theta)}{s}}
					( \gamma (\overline{d}) ) ) =
		\tag{by def. of $\circ$}
\\[2mm] & =
		\alpha \bigg( \prod_{k \in \Kpsi}
         \begin{cases}
					\confproj{k}{\overline{\mathcal{A}}\sbr{s}\gamma(\dbar)}
							& \text{if } k \models \theta \\
					\confproj{k}{\gamma(\dbar)}
							& \text{if } k \not \models \theta
			\end{cases} ~\bigg)
      \tag{def.\ of $\overline{\mathcal{A}}$ in Fig.\,\ref{fig:liftedanalysis}}
\\ & \pwsqsubseteq
		\!\!\!\prod_{k'\in\alpha(\Kpsi)}\!\!
          \begin{cases}
				 \confproj{k'}{\alpha(\overline{\mathcal{A}}\sbr{s} \gamma(\dbar))}
				 		& \text{if } k' \models \theta \\[1mm]
         	\confproj{k'}{\alpha(\gamma (\overline{d}))} \sqcup \confproj{k'}{\alpha (\overline{\mathcal{A}}\sbr{s} \gamma (\overline{d}) )}
						& \text{if sat} (k' \!\land\! \theta) \land \text{sat} (k' \!\land\! \neg \theta) \\[1mm]
				\confproj{k'}{\alpha(\gamma (\overline{d}))}
						& \text{if } k' \models \neg \theta
			\end{cases}
      \tag{Lemma\,\ref{lemma:12}, App.\,\ref{AppHelp}}
\\ & \pwsqsubseteq
		\!\!\!\prod_{k' \in \alpha(\Kpsi)}\!\!
         \begin{cases}
				\confproj{k'}{ \overline{\mathcal{D}}_{\alpha}\sbr{s} \dbar}
					& \text{if } k' \models \theta \\[1mm]
         	\confproj{k'}{\dbar} \sqcup \confproj{k'}{\overline{\mathcal{D}}_{\alpha}\sbr{s} \dbar}
					& \text{if sat} (k' \!\land\! \theta) \land \textrm{sat} (k' \!\land\! \neg \theta) \\[1mm]
				\confproj{k'}{\overline{d}} & \text{if } k' \models \neg \theta
			\end{cases}
      \tag{by IH and $\alpha \circ \gamma$ reductive}
\\ & =
    \overline{\mathcal{D}}_{\alpha}\sbr{\impifdef{(\theta)}{s}}\,\overline{d}
        \notag
   \end{align*}
	\caption{Calculational derivation of
		$\overline{\mathcal{D}}_\alpha\sbr{\impifdef{\!(\theta)}s}$, the
		abstraction of
		$\overline{\mathcal{A}}\sbr{\impifdef{\!(\theta)}s}$.
		The `reductive' property of all Galois connections is $(\alpha \circ
		\gamma)(\dbar) \,
		\sqsubseteq \, \dbar$ for all \dbar.%
		\label{fig:deriving-if}}
\vspace{-3mm}
\end{figure}

We will now demonstrate how to derive abstracted lifted analyses using the
operators of Section\,\ref{sec:variability_abstractions}, using the case of
constant propagation for \VarIMP programs as an example.  Recall that this
analysis has been specified by: 1) the domain $\mathbb{A}^{\Kpsi}$; 2) the statement
transfer function $\overline{\mathcal{A}}\sbr{s}:(\mathbb{A} \to
\mathbb{A})^{\Kpsi}$; and 3) the expression evaluation function
$\overline{\mathcal{A'}}\sbr{e} : (\mathbb{A} \to \textit{Const})^{\Kpsi}$\!.
Let  $\poset{\mathbb{A}^{\Kpsi}}{\dot\sqsubseteq} \galois{\alpha}{\gamma}
\poset{\mathbb{A}^{\alpha(\Kpsi)} }{\dot\sqsubseteq}$ be a Galois connection
constructed using the abstractions presented in
Section\,\ref{sec:variability_abstractions}. 
We will also write $(\alpha,\gamma) \in Abs$
to denote a Galois connection obtained in such way.

Any function $f$ defined on the concrete domain of a Galois connection can be
abstracted to work on the abstract domain by applying concretization to its
argument and an abstraction to its value, i.e.\ by the function $F = \alpha
\circ f \circ \gamma$, where $\circ$ denotes the usual composition of
functions.  In fact, any monotone over-approximation of the composition $\alpha
\circ f \circ \gamma$ is sufficient for a sound analysis.  Even fixed   points
can be transferred from a concrete to an abstract domain of a Galois
connection. If both domains are complete lattices and $f$ is a monotone
function on the concrete domain, then by the fixed point transfer theorem (FPT
for short)~\cite{CousotCousot79}: $\alpha(\mathrm{lfp} f) ~\sqsubseteq~
\mathrm{lfp} F ~\sqsubseteq~ \mathrm{lfp} F^\#$.  Here $F = \alpha \circ f
\circ \gamma$ and $F^\#$ is some monotone, conservative
\emph{over}-approximation of $F$, i.e.\ $F \sqsubseteq F^\#$.  The
calculational approach to abstract interpretation~\cite{Cousot99} used in this
work, advocates simple algebraic manipulation to obtain a \emph{direct
	expression} for the function $F$ (if it exists) or for an over-approximation
$F^\#$.

In our case, for any lifted store $\abar \in \Aa^\Kpsi$, we calculate an
abstracted lifted store by $\alpha(\abar)=\overline{d} \in
\Aa^{\alpha(\Kpsi)}$.  Now, we use a Galois connection to derive an
over-approximation of $\alpha \circ \overline{\mathcal{A}}\sbr{s} \circ \gamma$
obtaining a new abstracted statement transfer function $\overline{\mathcal{D}}_{\alpha}\sbr{s}: (\Aa
\to \Aa)^{\alpha(\Kpsi)}$\!.  Similarly, one can derive an abstracted analysis
for expressions $\overline{\mathcal{D'}}_{\alpha}\sbr{e}$, approximating
$\alpha \circ \overline{\mathcal{A'}}\sbr{e} \circ \gamma$.
These approximations are derived using structural induction on statements
(respectively on expressions), in a process that resembles a simple algebraic
calculation, deceivingly akin to equation reasoning.

Let us consider the derivation steps for the static conditional
statement
($\impifdef{(\theta)}{s}$) in detail.  Our inductive hypothesis (IH) is that for
statements $\ensuremath{s'}$ that are structurally smaller than ($\impifdef{(\theta)}{s}$) the
(yet-to-be-calculated) $\overline{\mathcal{D}}_{\alpha}\sbr{s'}$ soundly
approximates $\alpha \circ \overline{\mathcal{A}}\sbr{s'} \circ \gamma$,
formally: $\alpha \circ \overline{\mathcal{A}}\sbr{s'} \circ \gamma
~\pwsqsubseteq~ \overline{\mathcal{D}}_{\alpha}\sbr{s'}$.  The derivation in
Fig.\,\ref{fig:deriving-if} begins with composing the concretization and
abstraction functions with the concrete transfer function and then proceeds by
expanding definitions. An (inner) induction on the structure of the abstraction
$\alpha$ follows, delegated to the Appendix for brevity. In the last step we
apply the inductive hypothesis, to obtain a closed representation independent
of $\overline{\mathcal{A}}$. This representation, just before the final
equality, is the newly obtained (calculated) definition of the abstracted
analysis $\overline{\mathcal{D}}_{\alpha}$.  Interestingly, the derivation is
independent of the structure of the abstraction $\alpha$, so this form works
for any abstraction specified using our operators.
We give a sketch of derivational steps for $\impbinop{e_0}{e_1}$ in Fig.~\ref{fig:deriving-op}.

\begin{figure}[t]
	\small
\begin{align*}
      & (\alpha \circ \overline{\mathcal{A'}}\sbr{\impbinop{e_0}{e_1}} \circ \gamma) (\overline{d})
\\ & =
        \alpha \big( \prod_{k \in \Kpsi}
                    \confproj{k}{ \overline{\mathcal{A'}}\sbr{e_0} \gamma (\overline{d})} \,\impparbinopname\,
                    \confproj{k}{ \overline{\mathcal{A'}}\sbr{e_1} \gamma (\overline{d})} \big)
        \tag{by def. of $\circ$, and $\overline{\mathcal{A'}}$ in
		 Fig.\,\ref{fig:liftedanalysis}}
\\ & =
        \prod_{k' \in \alpha(\Kpsi)}
                    \confproj{k'}{ \alpha \big( \overline{\mathcal{A'}}\sbr{e_0} \gamma (\overline{d}) \,\dot\impparbinopname\,
                     \overline{\mathcal{A'}}\sbr{e_1} \gamma (\overline{d}) \big) }
        \tag{by def. of $\pi_{k}$, $\dot\impparbinopname$, and $\alpha$}
\\ & \dot\sqsubseteq
        \prod_{k' \in \alpha(\Kpsi)}
                    \confproj{k'}{ \alpha ( \overline{\mathcal{A'}}\sbr{e_0} \gamma (\overline{d}) ) \,\dot\impparbinopname\,
                     \alpha ( \overline{\mathcal{A'}}\sbr{e_1} \gamma (\overline{d}) )}
        \tag{by Lemma 3 in App.~\ref{AppHelp}}
\\ & \dot\sqsubseteq
        \prod_{k' \in \alpha(\Kpsi)}
                    \confproj{k'}{ \overline{\mathcal{D'}}_{\alpha}\sbr{e_0} \overline{d}} \,\impparbinopname\,
                    \confproj{k'}{ \overline{\mathcal{D'}}_{\alpha}\sbr{e_1} \overline{d}}
        \tag{by IH, and def. of $\pi_{k'}$ and $\dot\impparbinopname$}
\\ & =
        \overline{\mathcal{D'}}_{\alpha}\sbr{\impbinop{e_0}{e_1}} \overline{d}
        \notag
\end{align*}
	\caption{Calculational derivation of $\overline{\mathcal{D}}_{\alpha}\sbr{\impbinop{e_0}{e_1}}$.
\label{fig:deriving-op}}
\vspace{-3mm}
\end{figure}

\begin{figure}[t]
\begin{scriptsize}
\begin{align*}
 \overline{\mathcal{D}}_{\alpha}\sbr{\impskip} &=
    \func{\overline{d}}{ \overline{d} }
\\
  \overline{\mathcal{D}}_{\alpha}\sbr{\impassign{\texttt{x}}{e}} &=
       \func{\overline{d}}{\prod_{k' \in \alpha(\Kpsi)}
           (\confproj{k'}{\overline{d}}) [ \texttt{x} \mapsto \confproj{k'}{\overline{\mathcal{D'}}_{\alpha}\sbr{e} \overline{d}} ] }
\\
  \overline{\mathcal{D}}_{\alpha}\sbr{\impseq{s_0}{s_1}} &=
    \overline{\mathcal{D}}_{\alpha}\sbr{s_1} \circ \overline{\mathcal{D}}_{\alpha}\sbr{s_0}
\\
  \overline{\mathcal{D}}_{\alpha}\sbr{\impif{e}{s_0}{s_1}} &=
    \func{\overline{d}}{\overline{\mathcal{D}}_{\alpha}\sbr{s_0} \overline{d} \,\dot\sqcup\, \overline{\mathcal{D}}_{\alpha}\sbr{s_1} \overline{d}}
\\
  \overline{\mathcal{D}}_{\alpha}\sbr{\impwhile{e}{s}} &=
    \mathrm{lfp}
      \func{\varsignfuncname}{\func{\overline{d}}{
           { \overline{d} ~\dot\sqcup~ \varsignfunc{\overline{\mathcal{D}}_{\alpha}\sbr{s} \, \overline{d}} }}}
\\[1.5ex]
  \overline{\mathcal{D}}_{\alpha}\sbr{\impifdef{(\theta)}{s}} &=
       \func{\overline{d}}{
\prod_{k' \in \alpha(\Kpsi)}
         \left\{ \begin{array}{ll} \confproj{k'}{ \overline{\mathcal{D}}_{\alpha}\sbr{s} \overline{d}} & \quad \textrm{if} \ k' \models \theta \\[1.5ex]
         \confproj{k'}{ \overline{d}} \sqcup \confproj{k'}{ \overline{\mathcal{D}}_{\alpha}\sbr{s} \overline{d}} & \quad \textrm{if sat} (k' \!\land\! \theta) \land \textrm{sat} (k' \!\land\! \neg \theta) \\[1.5ex]
		\confproj{k'}{\overline{d}} & \quad \textrm{if} \  k' \models \neg \theta \end{array} \right.
         }
\\[1.5ex]
 \overline{\mathcal{D'}}_{\alpha}\sbr{\mathit{n}} &=
        \func{\overline{d}}{
          \prod_{k' \in \alpha(\Kpsi)}
            \texttt{n} }
\\
  \overline{\mathcal{D'}}_{\alpha}\sbr{\texttt{x}} &=
        \func{\overline{d}}{
          \prod_{k' \in \alpha(\Kpsi)}{
            \confproj{k'}{\overline{d}} (\texttt{x}) }}
\\
  \overline{\mathcal{D'}}_{\alpha}\sbr{\impbinop{e_0}{e_1}} &=
        \func{\overline{d}}{
          \prod_{k' \in \alpha(\Kpsi)}{
            \impparbinop
              { \confproj{k'}{ \overline{\mathcal{D'}}_{\alpha}\sbr{e_0} \overline{d}}}
              { \confproj{k'}{ \overline{\mathcal{D'}}_{\alpha}\sbr{e_1} \overline{d}}} }}
\end{align*}%
\end{scriptsize}%
\vspace{-3mm}
\caption{Definitions of
$\overline{\mathcal{D}}_{\alpha}\sbr{\overline{s}} : (\mathbb{A} \to \mathbb{A})^{\alpha(\Kpsi)}$
and
$\overline{\mathcal{D'}}_{\alpha}\sbr{\overline{e}} : (\Aa \to \Const)^{\alpha(\Kpsi)}$.%
\label{fig:absanalysis}}
\end{figure}

The derivations for other cases are similar and can be found in
App.~\ref{App1}.  The process results in the definitions of
$\overline{\mathcal D}_\alpha\sbr{s}$ and
$\overline{\mathcal{D'}}_\alpha\sbr{e}$ presented in
Fig.\,\ref{fig:absanalysis}.
Monotonicity of $\overline{\mathcal{D}}_{\alpha}\sbr{s}$ and
$\overline{\mathcal{D'}}_{\alpha}\sbr{e}$ is shown in App.~\ref{App3}.
Soundness of the abstracted analysis follows by
construction; more precisely the complete calculation constitutes an
inductive proof of the following theorem:
\begin{theorem}[Soundness of Abstracted Analysis] \label{theorem:stmtsoundness}
\begin{enumerate}
\item[(i)]
  $\forall e \in \mathit{Exp}, (\alpha,\gamma) \in Abs, \overline{d} \in \mathbb{A}^{\alpha(\Kpsi)}:~ \alpha \circ \overline{\mathcal{A'}}\sbr{e} \circ \gamma (\overline{d}) ~\pwsqsubseteq~ \overline{\mathcal{D'}}_{\alpha}\sbr{e} \, \overline{d}$

\item[(ii)]  $\forall s \in \mathit{Stm}, (\alpha,\gamma) \in Abs, \overline{d} \in \mathbb{A}^{\alpha(\Kpsi)}:~ \alpha \circ \overline{\mathcal{A}\phantom'}\sbr{s} \circ \gamma (\overline{d}) ~\pwsqsubseteq~ \overline{\mathcal{D}\phantom'}_{\alpha}\sbr{s} \, \overline{d}$
\end{enumerate}
\end{theorem}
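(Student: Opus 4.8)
The plan is to prove statements (i) and (ii) together by a single structural induction on the expression $e$, respectively on the statement $s$, in which each inductive case is precisely one of the calculational chains that define $\overline{\mathcal{D'}}_\alpha$ and $\overline{\mathcal{D}}_\alpha$. Read top-to-bottom, a chain of the form $\alpha \circ \overline{\mathcal{A}}\sbr{\cdot} \circ \gamma \pwsqsubseteq \cdots = \overline{\mathcal{D}}_\alpha\sbr{\cdot}$ \emph{is} the proof obligation for that construct; two of these chains are already displayed (Fig.~\ref{fig:deriving-op} for $\impbinop{e_0}{e_1}$ and Fig.~\ref{fig:deriving-if} for $\impifdef{(\theta)}{s}$), so the remaining work is to supply the analogous chains for $n$, $\texttt{x}$, $\impskip$, $\impassign{\texttt{x}}{e}$, $\impseq{s_0}{s_1}$, $\impif{e}{s_0}{s_1}$, and $\impwhile{e}{s}$, and to isolate the handful of structural facts they invoke.

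First I would collect the reusable ingredients: (a) $\alpha$, being a lower adjoint, is monotone and completely additive, so in particular $\alpha(\overline a \pwsqcup \overline a') = \alpha(\overline a)\pwsqcup\alpha(\overline a')$; (b) the two defining inequalities of a Galois connection, $\overline a \pwsqsubseteq \gamma(\alpha(\overline a))$ ($\gamma\circ\alpha$ extensive) and $(\alpha\circ\gamma)(\overline d)\pwsqsubseteq\overline d$ ($\alpha\circ\gamma$ reductive); (c) the two auxiliary lemmas quoted in the figures --- Lemma~\ref{lemma:12} (how a composite abstraction interacts, up to $\pwsqsubseteq$, with the configuration-wise case split of the $\impifdef$-rule) and Lemma~3 of App.~\ref{AppHelp} (that $\alpha$ pushes inside the lifted operator: $\alpha(\overline v_0 \mathbin{\dot\impparbinopname} \overline v_1) \pwsqsubseteq \alpha(\overline v_0)\mathbin{\dot\impparbinopname}\alpha(\overline v_1)$), together with monotonicity of $\dot\impparbinopname$; and (d) the fixed-point transfer theorem together with monotonicity of $\overline{\mathcal{D}}_\alpha\sbr{s}$ (App.~\ref{App3}).

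With these in hand the base and composite cases are short. For $\impskip$ and the constant expression $n$, the left-hand side is just $(\alpha\circ\gamma)$ applied to the argument, which is $\pwsqsubseteq$ the argument, while the right-hand side is the identity. For $\texttt{x}$ and $\impassign{\texttt{x}}{e}$, I would push $\alpha$ and $\gamma$ through the component-wise product and the store update and invoke part~(i) of the induction hypothesis for the subexpression. For $\impseq{s_0}{s_1}$, I insert $\mathrm{id}\pwsqsubseteq\gamma\circ\alpha$ between the two transfer functions, then use monotonicity of $\overline{\mathcal{A}}\sbr{s_1}$ and of $\alpha$ and the induction hypothesis on $s_0$ and $s_1$. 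For $\impif{e}{s_0}{s_1}$, additivity of $\alpha$ distributes it over $\dot\sqcup$ and the hypothesis on $s_0,s_1$ closes the case. For $\impwhile{e}{s}$, I apply the fixed-point transfer theorem: writing $\Psi$ for the concrete functional of the $\texttt{while}$-rule and $\Psi^\#$ for the abstract one, we get $\alpha(\mathrm{lfp}\,\Psi)\pwsqsubseteq\mathrm{lfp}\,\Psi^\#$ once we check that $\Psi^\#$ is monotone (App.~\ref{App3}) and that $\alpha\circ\Psi\circ\gamma\pwsqsubseteq\Psi^\#$, and the latter reduces --- after distributing $\alpha$ over $\pwsqcup$ and using $\mathrm{id}\pwsqsubseteq\gamma\circ\alpha$ --- to the induction hypothesis for $s$. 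The remaining cases, $\impbinop{e_0}{e_1}$ and $\impifdef{(\theta)}{s}$, are verbatim Fig.~\ref{fig:deriving-op} and Fig.~\ref{fig:deriving-if}.

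The real obstacle is the $\impifdef{(\theta)}{s}$ step, i.e.\ Lemma~\ref{lemma:12}, which hides an inner induction on the syntax of the abstraction $\alpha$ (over the constructors $\joina{}$, $\proja{\phi}$, $\circ$, $\otimes$). One must show that applying any such $\alpha$ to the concrete case split ``run $s$ where $k\models\theta$, do nothing where $k\not\models\theta$'' is over-approximated by the three-way split over $\alpha(\Kpsi)$ used in $\overline{\mathcal{D}}_\alpha\sbr{\impifdef{(\theta)}{s}}$ --- the delicate branch being the middle one, where a post-abstraction configuration $k'$ is satisfiable both with $\theta$ and with $\neg\theta$ and therefore must receive the join of the ``did $s$'' and ``did nothing'' contributions. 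Carrying that case analysis through the four constructors, and verifying it meshes with how each constructor transforms $\Kpsi$ into $\alpha(\Kpsi)$, is where the genuine effort goes; everything else, including the $\texttt{while}$ case once the transfer-theorem hypotheses are discharged, is routine.
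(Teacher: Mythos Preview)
Your proposal is correct and mirrors the paper's proof (App.~\ref{App1}) almost case-for-case, including correctly identifying Lemma~\ref{lemma:12} as the step that hides an inner induction on the syntax of~$\alpha$. One minor slip: the constant expression $n$ does not reduce to $(\alpha\circ\gamma)(\overline d)$ as you claim, since $\overline{\mathcal{A'}}\sbr{n}$ returns the constant tuple $\prod_k n$ rather than its input; the paper handles this with a separate helper (Lemma~\ref{lemma:1}) showing $\alpha(\prod_{k\in\Kk} n) = \prod_{k'\in\alpha(\Kk)} n$, itself proved by induction on the structure of~$\alpha$.
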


\begin{example} \label{ex-three}
	Consider the program $\ensuremath{S_1}$ from Example\,\ref{exp_lif2}, with
	$\Kpsi = \{ A \land B, A \land \neg B, \neg A \land B \}$.  We calculate
	$\overline{\mathcal{D}}_{\alpha_1}\sbr{S_1}$ for
	$\alpha_1=\joina{A}$. Following the rules of Fig.\,\ref{fig:absanalysis}, we
	obtain the following confounded abstract execution off all configurations
	containing the feature $A$:
   \begin{align}
     & \big([\texttt{x} \! \mapsto \! \top] \big)
 \stackrel{\overline{\mathcal{D}}_{\alpha_1}\sbr{\texttt{x} := 0}}{\longmapsto}
        \big([\texttt{x} \! \mapsto \! 0] \big)
  \stackrel{\overline{\mathcal{D}}_{\alpha_1}\sbr{\mathsf{\#if} \, (A) \, \texttt{x} := \texttt{x} + 1}}{\longmapsto}
        \big( [\texttt{x} \! \mapsto 1] \big)
 \stackrel{\overline{\mathcal{D}}_{\alpha_1}\sbr{\mathsf{\#if} \, (B) \, \texttt{x} := 1}}{\longmapsto}
        \big( [\texttt{x} \! \mapsto \! 1] \big)
      \notag
   \end{align}
	In the last step we used
	$\overline{\mathcal{D}}_{\alpha_1}\sbr{\impifdef{\!\!(B)}{\texttt{x} :=
			1}}([\texttt{x} \mapsto 1]) = ([\texttt{x} \mapsto 1]) ~\dot\sqcup~
	\overline{\mathcal{D}}_{\alpha_1}\sbr{\texttt{x} := 1}([\texttt{x} \mapsto
	1])$ since $((A \land B) \lor (A \land \neg B)) \land B$ and $((A \land B) \lor (A \land \neg B)) \land \neg B$ are both satisfiable.  The
	final result shows that the value of $\texttt{x}$ is the constant 1 for
	every configuration that satisfies $A$.
	%
	On the other hand, for the program $\ensuremath{S_2}$ and the same abstraction we obtain
	$\overline{\mathcal{D}}_{\alpha_1}\sbr{S_2}([\texttt{x} \mapsto
	\!\!\top]) = ([\texttt{x} \mapsto \!\!\top])$, so the value of $x$ is lost
	(approximated) by $\overline{\mathcal D}_{\alpha_1}$.
	\qed
\end{example}

We may implement the abstracted analysis in
Fig.~\ref{fig:absanalysis} directly by using Kleene's fixed point theorem
to calculate fixed points of loops iteratively.
But, we can also extract corresponding data-flow equations, and then apply the known
iterative algorithms to calculate fixed-point solutions.
We assume that the individual
statements are uniquely labelled with labels $\ell$.
Given an abstraction $\alpha$,
for each statement $s^\ell$ we generate two abstracted stores $\varstmtin{s^\ell}^{\alpha},
\varstmtout{s^\ell}^{\alpha} : \mathbb{A}^{\alpha(\Kpsi)}$, which describe the input and output abstract store
for all configurations before and after executing the statement $s^\ell$.
They are related with the definitions for abstracted analysis $\overline{\mathcal{D}}_{\alpha}$
given in Fig.~\ref{fig:absanalysis} as follows:
for each statement $s$ the input store $\varstmtin{s^\ell}^{\alpha}$
is substituted for the parameter $\overline{d}$, and the output store
$\varstmtout{s^\ell}^{\alpha}$ for the value of the corresponding function.
Some variability dependent data-flow equations are given in Fig~\ref{fig:lifteddataflow}.
The complete list of data-flow equations along with the proof of their soundness can be found in App.~\ref{App4}.

\begin{figure}[t]
\begin{scriptsize}
\vspace{-2mm}
\begin{align*}
  \underline{\forall k' \in \alpha(\Kpsi)}{:}~
    \confproj{k'}{ \varstmtout{\impassign[\ell]{\texttt{x}}{e^{\ell_0}}}^{\alpha} }
                           &= \confproj{k'}{\varstmtin{\impassign[\ell]{\texttt{x}}{e^{\ell_0}}}}^{\alpha}
                              [ \texttt{x} \mapsto \confproj{k'}{
                                   \overline{\mathcal{D}'}_{\alpha}\sbr{e^{\ell_0}}
                                     \varstmtin{\impassign[\ell]{\texttt{x}}{e^{\ell_0}}}^{\alpha}
                              } ]
\\[0.6em]
\underline{\forall k' \in  \alpha(\Kpsi)}{:}~
   \confproj{k'}{\varstmtout{\impifdef[\ell]{(\theta)}{s^{\ell_0}}}^{\alpha}}
                           &= \left\{ \begin{array}{ll} \confproj{k'}{\varstmtout{s^{\ell_0}}^{\alpha}} & \ \textrm{if} \ k' \models \theta \\[1.5ex]
         \confproj{k'}{\varstmtin{\impifdef[\ell]{(\theta)}{s^{\ell_0}}}^{\alpha}}  \sqcup \confproj{k'}{\varstmtout{s^{\ell_0}}^{\alpha}} & \ \textrm{if sat} (k' \!\!\land\!\! \theta) \! \land \! \textrm{sat} (k' \!\!\land\!\! \neg \theta) \\[1.5ex]
		\confproj{k'}{\varstmtin{\impifdef[\ell]{(\theta)}{s^{\ell_0}}}^{\alpha}} & \ \textrm{if} \  k' \models \neg \theta \end{array} \right.
\\ \underline{\forall k' \in  \alpha(\Kpsi)}{:}~
     \confproj{k'}{\varstmtin{s^{\ell_0}}^{\alpha}}
                           &= \confproj{k'}{\varstmtin{\impifdef[\ell]{(\theta)}{s^{\ell_0}}}^{\alpha}}
                                                     \quad\text{if sat} (k' \land \theta)
\end{align*}
\end{scriptsize}
\vspace{-2mm}
\caption{Selected data-flow\,equations\,for\,abstracted\,constant\,propagation.}
\label{fig:lifteddataflow}
\end{figure}

\section{Variability Abstraction with Syntactic Transformation}
\label{sec:reconfigurator}
The analyses $\overline{\mathcal{A}}$ and $\overline{\mathcal{D}}_{\alpha}$ can
be implemented either directly by using definitions of
Figs.\,\ref{fig:liftedanalysis} and \ref{fig:absanalysis}, or by extracting the
corresponding data-flow equations.  An entirely different way to implement
$\overline{\mathcal D}_\alpha$ is to execute the abstraction on the source
program, before running the analysis, and then running the previously existing
analysis $\overline{\mathcal{A}}$ on this transformed program.  We take this
route as it allows to completely reuse the effort invested in designing and
implementing $\overline{\mathcal A}$.

Any \VarIMP\ program $s$ with sets of features $\Ff$ and valid configurations \Kk\ is translated into a
corresponding abstract program $\alpha(s)$ with  corresponding set of features $\alpha(\Ff)$
and set of valid configurations $\alpha(\Kk)$.  We define the translation recursively over the
structure of $\alpha$.  All statements other than \texttt{\#if}\ are copied.
For example, $\alpha(\impskip)=\impskip$ and $\alpha(\impseq{s_0}{s_1}) =
\impseq{\alpha(s_0)}{\alpha(s_1)}$.   We discuss the rewrites for \texttt{\#if}\ statements below.

In the rewrite, we associate a fresh feature name $Z \notin \Ff$, with
every join abstraction $\joina{}$ (consequently written $\joina{\mquoted{Z}}$).  The new
feature $Z$ is an abstract name (renaming) of the compound formula $\bigvee_{k \in \Kk} k$.
It denotes the single valid configuration obtained from $\joina{}$.  The new
feature name is used to simplify conditions in the transformed code.  The
$\joina{\mquoted{Z}}$ rewrite is defined as follows:%
\begin{align*}
	&  \joina{\mquoted{Z}}( \mathbb F  ) = \{ Z \},
	\quad \joina{\mquoted{Z}}( \Kk ) = \{ Z \}
\\[1mm] &
	\joina{\mquoted{Z}}( \impifdef{(\theta)}s ) =
		\begin{cases}
			\impifdef{(Z)}{\joina{\mquoted{Z}}(s)}
				& \text{if } \ \bigvee_{k \in \Kk} k \models \theta \\[2.5mm]
			\impifdef{(Z)}{\texttt{lub}(\joina{\mquoted{Z}}(s),\impskip)} &
					\text{if sat}(\bigvee_{k \in \Kk } k \!\land\! \theta) \land \\
				& \qquad \text{sat}(\bigvee_{k \in \Kk } k \!\land\! \neg\theta) \\
			\impifdef{(\neg Z)}{\joina{\mquoted{Z}}(s)}
				& \text{if } \bigvee_{k \in \Kk } k \models \neg \theta
		\end{cases}
\end{align*}
In effect of applying the $\joina{\mquoted{Z}}$ transformation to any program $s$ we
obtain a single variant program, i.e.\ a SPL with only one valid product where
the feature $Z$ is enabled.  It can be analyzed with existing single-program
analyses. Note that it enables performing family-based analyses with
implementations of single-program analyses, albeit with loss of precision.  The
newly introduced statement $\texttt{lub}(s_0,s_1)$ represents the least upper
bound (join) of the results obtained by executing $s_0$ and $s_1$. This is the
only language-dependent aspect of \texttt{reconfigurator}. It can have
different implementations depending on the programming language and the
analysis we work with.
In our case, we exploit the fact that
$\overline{\mathcal{A}}\sbr{\impif{e}{s_0}{s_1}}$ ignores the branching
condition (cf.\ Fig.\,\ref{fig:liftedanalysis}) and use
$\texttt{lub}(s_0,s_1)=\impif{(n)}{s_0}{s_1}$ for some fixed integer $n$.
Finally, observe that  $\impifdef{(\neg Z)}{\joina{\mquoted{Z}}(s)}$ is equivalent to
$\impskip$, however it is useful to keep the statement in the program, which
makes it easy to merge programs when we use compound abstractions (below).


The rewrite for projection only changes the set of legal configurations:
\begin{equation*}
	\proja\phi (\Ff) = \Ff, \quad
	\proja\phi (\Kk) = \{ k \!\in\! \Kk \mid k \!\models\! \phi \}, \quad
	\proja\phi (\impifdef{\!(\theta)}{s})
		= \impifdef{\!(\theta)}{\proja\phi(s)}
\end{equation*}
%
%
%
%
Note that the general scheme for the basic rewrites of $\texttt{\#if}$
statements can be summarized as $\alpha(\impifdef{(\theta)}{s}) =
\impifdef{(\overline{\alpha}(\theta))}{\overline{\alpha}(s,\theta)}$, where
$\overline{\alpha}$ are functions transforming the condition $\theta$ and the
statement $s$.  It is easy to extract $\overline{\alpha}(\theta)$ and
$\overline{\alpha}(s,\theta)$ from the above rewrites for \joina{\mquoted{Z}} and
\proja{\phi}.  We will use them in defining transformations for binary
operators.

Now, for the case of parallel composition $\alpha_1 \otimes \alpha_2$, recall
that the set $\alpha_{1} \otimes \alpha_{2} ( \Kk  )$ is the union of
$\alpha_{1}( \Kk )$ and $\alpha_{2}( \Kk )$.  However in the rewrite semantics,
we are sometimes modifying the set of features. If $\alpha_1(\mathbb F)
\neq \alpha_2(\mathbb F)$ then some of valid configurations in $\alpha_1 (\Kk)
\cup \alpha_2 (\Kk)$ will not assign truth values to all features in
$\alpha_1(\mathbb F) \cup \alpha_2(\mathbb F)$.  To take a meaningful union of
configurations, we need to first unify their alphabets. To achieve this aim,
each valid configuration can be extended by information that the missing
features are excluded from it (negated).  Now the rewrite rules for parallel composition
are given by:
\begin{align*}
   &  \alpha_1 \!\otimes\! \alpha_2 (\Ff) \!=\! \alpha_1 (\Ff) \cup \alpha_2 (\Ff)
\\[0.5mm] &
      \alpha_1 \!\otimes\! \alpha_2 (\Kk) \!=\!
			\{ k_1\!\land\! {\scriptstyle{\bigwedge_{f\in\alpha_2(\Ff)
				\setminus \alpha_1(\Ff) }}}\!\!\neg f
				\!\mid\! k_1 \!\in\! \alpha_1(\Kk) \}
			\cup \{ k_2 \!\land\! {\scriptstyle{\bigwedge_{f\in\alpha_1(\Ff)
				\setminus \alpha_2(\Ff) }}} \!\!\neg f
				\!\mid\! k_2 \!\in\! \alpha_{2}(\Kk) \}
\\[1.3mm] &
      \alpha_{1} \otimes \alpha_{2} ( \impifdef{(\theta)}{s}  ) =
            \left\{ \begin{array}{ll}  \impifdef{\big(\overline{\alpha_1}(\theta) \lor \overline{\alpha_2}(\theta)\big)\,}{\overline{\alpha_1}(s,\theta)} & \quad  \textrm{if} \ \overline{\alpha_{1}} (s,\theta) = \overline{\alpha_{2}} (s,\theta) \\[1.5ex]
         \alpha_{1} ( \impifdef{(\theta)}{s}  ); \alpha_{2} ( \impifdef{(\theta)}{s}  ) & \quad  \textrm{otherwise } \end{array} \right.
\end{align*}
Observe that the second case of the parallel composition transformation can
only appear if the second case of a join transformation has been used somewhere
in recursive rewriting of $s$ (perhaps deep). All the other rewrites leave $s$
intact. However, in such case the branches have disjoint feature alphabets, as
every join is using a fresh feature name as parameter.  This ensures that only
one of the sequenced copies of $s$, $\overline{\alpha_{1}} (s,\theta)$ and $\overline{\alpha_{2}} (s,\theta)$,
will actually be executed (and the other
will amount to skip) in any given configuration of the product.

For sequential composition of abstractions $\alpha_2 \circ \alpha_1$ we use the
following  rewrites: $\alpha_2 \circ \alpha_1 (\Ff) = \alpha_2
(\alpha_1(\Ff))$,
      $\alpha_{2} \circ \alpha_{1} ( \Kk  ) =  \alpha_{2} ( \alpha_{1} ( \Kk  )  )$ and
$\alpha_{2} \circ \alpha_{1} ( \impifdef{(\theta)}{s}  ) =
             \impifdef{\big( \overline{\alpha_{2}} ( \overline{\alpha_{1}}(\theta))\big) \,}{\overline{\alpha_{2}} ( \overline{\alpha_{1}}(s,\theta), \overline{\alpha_{1}}(\theta))}$.

\begin{example} \label{ex-transform2}
Consider the program $S'_1$:
\(
\impifdef{\!\!(A)}{\texttt{x} := \texttt{x}+1}; \impifdef{\!(B)}{\texttt{x} :=
	1} \)\\
with $\Ff = \{ A, B\}$, $\psi = A \lor B$, and $\Kpsi = \{ A \land B, A \land \neg B, \neg A \land B \}$. Then
\begin{equation}
\joina{\mquoted{Z}}\circ\proja A(S'_1)=\impifdef{(Z)}{\texttt{x} := \texttt{x}+1}; \impifdef{(Z)}{\texttt{lub}(\texttt{x} := 1,\impskip)}\label{eq:join-proj-rewrite}
\end{equation}
The set of valid configurations after projection is changed to $\{A\land B,
A\land\neg B\}$, and after join again to just $\{Z\}$.  The obtained program
has only one configuration, the one that satisfies $Z$. The projection does not
change the statements of the program.  The join rewrite however, simplifies the
first \texttt{\#if} (it is statically determined; cf.\ the first case of
\joina{\mquoted{Z}} transformation), and joins the second statement with \texttt{skip} as
it is unknown whether it will be executed or not, in the lack of information
about the assignment to $B$ in the abstracted program. Note that since $Z$ is the only
one valid configuration, the obtained program is equivalent to:
$\texttt{x} := \texttt{x}+1; \texttt{lub}(\texttt{x} := 1,\impskip)$. Similarly,
we can calculate:
$\joina{\mquoted{Z}}\circ\proja B(S'_1) = \impifdef{(Z)}{ \texttt{lub}(\texttt{x} := \texttt{x}+1,\impskip)}; \impifdef{(Z)}{\texttt{x} := 1}$.

Now consider $((\joina{\mquoted{Z}} \circ \proja A) \otimes \proja B)(S_1')$.
The new set
of features is $\{Z, A, B\}$.  The subset \{A,B\} is retained from the right
projection component, and \{Z\} comes from the left join-project component.
After extending the configurations of both components with negations of absent
feature names we get the following set of valid configurations:
$\Kk'=\{ Z \land \neg A \land \neg B, \neg Z \land A \land B,
	\neg Z \land \neg A \land B\}$.
The result of the left join-project operand is the program
\eqref{eq:join-proj-rewrite}, and the right rewrite (projection) never changes
the statements, so its result is identical to $S_1'$. Thus we are composing
programs \eqref{eq:join-proj-rewrite} and $S_1'$ using the parallel composition
rewrites.
Then $((\proja Z\circ\joina A)
\otimes \proja B)(S'_1)$ is:
\begin{equation*}
\impifdef{(Z \lor A)}{\texttt{x} := \texttt{x}+1};
 \impifdef{(Z)}{\texttt{lub}(\texttt{x} := 1,\impskip)};
 \impifdef{(B)}{\texttt{x} := 1}
\end{equation*}
 The first \texttt{\#if}\ has been unified using the first case of the
transformation for $\otimes$, and the second \texttt{\#if} is transformed into two copies of
the statement with different guards, using the second case of the rewrite
definition.  For any legal configuration in $\Kk'$ at most one of them does not reduce to
\texttt{skip}.\qed

\end{example}

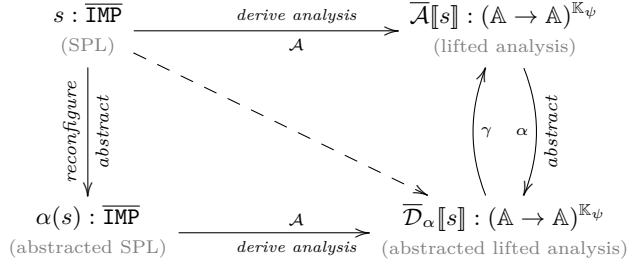
\begin{figure}[t]
\vspace{-3mm}
$$
\xymatrix@C=6pc@R=4pc{
\text{\begin{tabular}{c}$s : \overline{\texttt{IMP}}$\\{\scriptsize\textcolor{gray}{(SPL)}}\end{tabular}} \ar@{-->}[rd] \ar[r]_{\mathcal{A}}^{\textit{derive analysis}} \ar[d]^{\rotatebox{90}{\scriptsize\textit{abstract}}}_{\rotatebox{90}{\scriptsize\textit{reconfigure}}} & \ar@/^1.0pc/[d]_{\alpha}^{\rotatebox{90}{\scriptsize\textit{abstract}}} \text{\begin{tabular}{c}$\overline{\mathcal{A}}[\![s]\!] : (\mathbb{A} \rightarrow \mathbb{A})^{\Kk_\psi}$\\{\scriptsize\textcolor{gray}{(lifted analysis)}}\end{tabular}} \\
\text{\begin{tabular}{c}$\alpha(s) :\overline{\texttt{IMP}}$\\{\scriptsize\textcolor{gray}{(abstracted SPL)}}\end{tabular}} \ar[r]^{\mathcal{A}}_{\textit{derive analysis}} & \ar@/^1.0pc/[u]_{\gamma} \text{\begin{tabular}{c}$\overline{\mathcal{D}}_\alpha[\![s]\!] : (\mathbb{A} \rightarrow \mathbb{A})^{\Kk_\psi}$\\{\scriptsize\textcolor{gray}{(abstracted lifted analysis)}}\end{tabular}} \\
}
$$
\vspace{-3mm}
\caption{Illustration of \emph{derive} vs \emph{abstract}: $\overline{\mathcal{D}}_{\alpha}\sbr{s} = \overline{\mathcal{A}}\sbr{\alpha(s)}$.}
\label{fig:commutative}
\end{figure}

Now the analysis $\overline{\mathcal{A}}\sbr{\alpha(s)}$ and
$\overline{\mathcal{D}}_{\alpha}\sbr{s}$ coincide up to renaming of valid configurations.
So the \texttt{reconfigurator} together with an existing implementation of
$\overline{\mathcal{A}}$ gives us the abstracted analysis
$\overline{\mathcal{D}}_{\alpha}$.  The above equality is illustrated
by Fig.\,\ref{fig:diagram}.

\begin{theorem} \label{theorem:correctness} $\\$
 $\forall s \in Stm, \alpha: \mathbb{A}^{\Kpsi} \to \mathbb{A}^{\alpha(\Kpsi)}  \in Abs, \overline{d} \in \mathbb{A}^{\alpha(\Kpsi)}:~
    \overline{\mathcal{D}}_{\alpha}\sbr{s} \, \overline{d} = \overline{\mathcal{A}}\sbr{\alpha(s)} \, \overline{d}$~\footnote{The proof of this theorem is in App.~\ref{App5}.}.
\end{theorem}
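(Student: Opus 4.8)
The plan is to prove the identity by structural induction on the program $s$, treating the syntactic transformation $\alpha(\cdot)$ and the abstracted analysis $\overline{\mathcal{D}}_{\alpha}\sbr{\cdot}$ as two structurally defined objects and showing that they produce the same transfer function at every node. The equality is read modulo the evident bijection between the domain-level configuration set $\alpha(\Kpsi)$ that indexes $\overline{\mathcal{D}}_{\alpha}$ and the configuration set carried by $\alpha(s)$: these differ only because each $\joina{}$ introduces a fresh name (written $\joina{\mquoted{Z}}$) for the formula $\bigvee_{k\in\Kk}k$, and because $\otimes$ completes configurations with negated absent features. One also needs the expression companion $\overline{\mathcal{D'}}_{\alpha}\sbr{e}\,\overline{d}=\overline{\mathcal{A'}}\sbr{\alpha(e)}\,\overline{d}$; this is immediate, since $\alpha$ leaves expressions untouched and $\overline{\mathcal{D'}}_{\alpha}$ and $\overline{\mathcal{A'}}$ are given by the same clauses over the index set $\alpha(\Kpsi)$ (Figs.~\ref{fig:liftedanalysis} and \ref{fig:absanalysis}), so a one-line induction on $e$ suffices.

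For every construct other than $\impifdef{(\theta)}{s}$ the transformation is a homomorphism ($\alpha(\impskip)=\impskip$, $\alpha(\impseq{s_0}{s_1})=\impseq{\alpha(s_0)}{\alpha(s_1)}$, and likewise for assignment, \texttt{if}, and \texttt{while}), and the clauses defining $\overline{\mathcal{A}}$ and $\overline{\mathcal{D}}_{\alpha}$ for these constructs are identical up to the index set, so each case collapses to the induction hypothesis on the subterms (and, for assignment, to the expression companion). The only case worth a word is \texttt{while}: both $\overline{\mathcal{A}}\sbr{\impwhile{e}{\alpha(s)}}$ and $\overline{\mathcal{D}}_{\alpha}\sbr{\impwhile{e}{s}}$ are least fixed points over the complete lattice $\Aa^{\alpha(\Kpsi)}$ of the functional $\func{\varsignfuncname}{\func{\overline{a}}{\overline{a}~\dot\sqcup~\varsignfunc{F\,\overline{a}}}}$ with $F$ equal to $\overline{\mathcal{A}}\sbr{\alpha(s)}$, respectively $\overline{\mathcal{D}}_{\alpha}\sbr{s}$; by the induction hypothesis these agree, hence so do the functionals, and (the requisite monotonicity being already established) so do their least fixed points.

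The substance is the case $s=\impifdef{(\theta)}{s_0}$, which I would handle by an inner induction on the structure of $\alpha$, with the outer induction hypothesis available in the form $\overline{\mathcal{D}}_{\beta}\sbr{s_0}\,\overline{d}=\overline{\mathcal{A}}\sbr{\beta(s_0)}\,\overline{d}$ for every $\beta\in Abs$ and $\overline{d}$. For $\alpha=\joina{}$ I split on the three guards of the $\joina{\mquoted{Z}}$-rewrite ($\bigvee_k k\models\theta$; $\sat(\bigvee_k k\land\theta)\land\sat(\bigvee_k k\land\neg\theta)$; $\bigvee_k k\models\neg\theta$), evaluate $\overline{\mathcal{A}}$ on the rewritten statement over the singleton configuration $\{Z\}$ --- using $\overline{\mathcal{A}}\sbr{\impif{(n)}{s_0'}{\impskip}}=\func{\overline{a}}{\overline{\mathcal{A}}\sbr{s_0'}\,\overline{a}~\dot\sqcup~\overline{a}}$ to unfold the $\texttt{lub}$ in the middle guard --- and match the outcome, via the outer hypothesis on $s_0$, against the three lines of the $\overline{\mathcal{D}}_{\joina{}}\sbr{\impifdef{(\theta)}{s_0}}$ clause over $\{\bigvee_k k\}$, the renaming $\bigvee_k k\leftrightarrow Z$ finishing the identification. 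For $\alpha=\proja{\phi}$ the rewrite changes neither conditions nor statements, and every $k'\in\proja{\phi}(\Kpsi)$ is a complete valuation, so the middle (``both satisfiable'') line of the $\overline{\mathcal{D}}$-clause is vacuous and the remaining two lines are literally the \texttt{\#if}-clause of $\overline{\mathcal{A}}$ over $\proja{\phi}(\Kpsi)$; the outer hypothesis on $s_0$ then closes it.

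The composition subcases are where the work lies, and the $\otimes$ case is the step I expect to be the main obstacle. I would first record a ``functoriality of the rewrite'' lemma --- $(\alpha_2\circ\alpha_1)(s')=\alpha_2(\alpha_1(s'))$, and $(\alpha_1\otimes\alpha_2)(s')$ is the union/sequencing of $\alpha_1(s')$ and $\alpha_2(s')$ described in Section~\ref{sec:reconfigurator}, with the analogous identities on $\Ff$ and $\Kk$ --- proved by a routine induction on $s'$ directly from the rewrite definitions; and the auxiliary identity $\overline{\mathcal{D}}_{\alpha_2\circ\alpha_1}\sbr{s'}=\overline{\mathcal{D}}_{\alpha_2}\sbr{\alpha_1(s')}$, proved by the same structural induction and exploiting that the \texttt{\#if}-clause of $\overline{\mathcal{D}}$ depends on the abstraction only through its image $\alpha(\Kpsi)$. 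With these in hand, the $\circ$ subcase reduces, via $\overline{\mathcal{A}}\sbr{(\alpha_2\circ\alpha_1)(s)}=\overline{\mathcal{A}}\sbr{\alpha_2(\alpha_1(s))}$ together with the inner hypotheses for $\alpha_1$ and $\alpha_2$, to that auxiliary identity. For $\otimes$, the crux is to show that running $\overline{\mathcal{A}}$ on the rewritten program --- in general a sequence of two modified copies of the \texttt{\#if} whose feature alphabets are disjoint because every $\joina{}$ uses a fresh name --- computes, configuration by configuration, exactly the side-by-side combination prescribed by $\times$ in $\overline{\mathcal{D}}_{\alpha_1\otimes\alpha_2}$: alphabet disjointness guarantees that in any legal configuration of $(\alpha_1\otimes\alpha_2)(\Kpsi)$ at most one sequenced copy is active and the other reduces to $\impskip$, so that sequencing realizes the pointwise $\sqcup$ on the shared configurations of $\alpha_1(\Kpsi)\cap\alpha_2(\Kpsi)$ and leaves the rest unchanged. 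Unwinding this against \eqref{eq:alpha-3} and the $\otimes$ \texttt{\#if}-rewrite is the one genuinely delicate part of the argument.
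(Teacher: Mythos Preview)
Your overall scaffolding matches the paper: a double induction on $s$ and $\alpha$, with the non-\texttt{\#if} constructs immediate and the \texttt{\#if} case carrying an inner case split on $\alpha$; the $\joina{}$ and $\proja{\phi}$ subcases are handled essentially as the paper does.

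The divergence is in the compound subcases, and there your plan has a genuine gap. The paper does \emph{not} factor through $\alpha_2(\alpha_1(s))$: for $\alpha_2\circ\alpha_1$ it unfolds $\overline{\mathcal{D}}_{\alpha_2\circ\alpha_1}\sbr{\impifdef{(\theta)}{s_0}}$ directly over $k''\in(\alpha_2\circ\alpha_1)(\Kpsi)$, uses only the outer hypothesis on $s_0$ (namely $\overline{\mathcal{D}}_{\alpha_2\circ\alpha_1}\sbr{s_0}=\overline{\mathcal{A}}\sbr{(\alpha_2\circ\alpha_1)(s_0)}$), and then matches the three-way split against the $\circ$-rewrite via a renaming calculation; $\otimes$ is handled analogously. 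Your route via the functoriality lemma and the auxiliary identity $\overline{\mathcal{D}}_{\alpha_2\circ\alpha_1}\sbr{s'}=\overline{\mathcal{D}}_{\alpha_2}\sbr{\alpha_1(s')}$ is more modular in spirit, but under your chosen nesting (outer on $s$, inner on $\alpha$) it does not close: after applying both lemmas the residual goal is $\overline{\mathcal{D}}_{\alpha_2}\sbr{\alpha_1(s)}=\overline{\mathcal{A}}\sbr{\alpha_2(\alpha_1(s))}$, i.e.\ the theorem at $\alpha_2$ on the program $\alpha_1(s)$. Your inner hypothesis for $\alpha_2$ is at $s$, not at $\alpha_1(s)$; your outer hypothesis is at $s_0$; and $\alpha_1(s)$ is in general not a subterm of $s$ (the $\joina{}$-rewrite inserts a $\texttt{lub}$ wrapper). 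So neither hypothesis applies. Making the induction on $\alpha$ the outer one --- so that in the $\circ$ step the theorem is already available for $\alpha_2$ at every program --- repairs this immediately.

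A second, smaller point: the auxiliary identity itself is less routine than you suggest. In its \texttt{\#if} case the LHS tests $k''\models\theta$ with $\theta$ over the original feature alphabet, while the RHS tests $k''$ against $\overline{\alpha_1}(\theta)$ over the $\alpha_1$-renamed alphabet; reconciling those two three-way splits is precisely the renaming argument the paper carries out inline, so you have not actually bypassed it.
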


\begin{example} \label{exp_comm1}
Consider the program $\ensuremath{S_1}$ from Example~\ref{exp_lif2} with
$\Kpsi = \{ A \land B, A \land \neg B, \neg A \land B \}$.
We have calculated in Example~\ref{ex-three} that $\overline{\mathcal{D}}_{\joina{A}}\sbr{S_1}([\texttt{x} \! \mapsto \! \top])=([\texttt{x} \! \mapsto \! 1])$.
We now calculate $\overline{\mathcal{A}}\sbr{\joina{A,\mquoted{Z}}(S_1)}([\texttt{x} \! \mapsto \! \top])$
(here $\joina{A,\mquoted{Z}}=\joina{\mquoted{Z}} \circ \proja A$):
   \begin{align}
     & \big([\texttt{x} \! \mapsto \! \top] \big)
 \stackrel{\overline{\mathcal{A}}\sbr{\texttt{x} := 0}}{\longmapsto}
        \big([\texttt{x} \! \mapsto \! 0] \big)
  \stackrel{\overline{\mathcal{A}}\sbr{\mathsf{\#if} \, (Z) \, \texttt{x} := \texttt{x} + 1}}{\longmapsto}
        \big( [\texttt{x} \! \mapsto 1] \big)
 \stackrel{\overline{\mathcal{A}}\sbr{\mathsf{\#if} \, (Z) \, \texttt{lub}(\texttt{x} := 1,\texttt{skip})}}{\longmapsto}
        \big( [\texttt{x} \! \mapsto \! 1] \big)
      \notag
   \end{align}
\end{example}

\section{Evaluation} \label{sec:evaluation}
Recall that there are two ways to speed up lifted analyses: improving
\emph{representation} and increasing \emph{abstraction}.
First, we will compare the performance of the two using an unoptimized
lifted analysis as a baseline.  Then, we demonstrate that
abstraction may be used to turn previously infeasible analysis into
feasible ones.  Finally, we consider example scenarios that use
projection and join and show that abstraction may be applied to an
entire product line or when just analyzing a single method.

For our experiments, we use an existing implementation of lifted data-flow
analyses for Java Object-Oriented SPLs~\cite{TAOSD}.  The implementation is based on
SOOT's intra-procedural data-flow analysis framework~\cite{SOOT} for
analyzing Java programs.  It uses CIDE (Colored IDE)~\cite{KastnerPhD} to
annotate statements using background colors rather than
$\texttt{\#ifdef}$ directives.  Every feature is thus associated with
a unique color.

We will consider an unoptimized lifted intra-procedural analysis,
known as $\mathcal{A}2$ (from \cite{TAOSD}), that uses
$|\Kk_\psi|$-tuples of analysis information, one analysis value per
configuration.  Also, we consider $\mathcal{A}3$
(from \cite{TAOSD}) which is the same analysis, but with improved
representation via sharing of analysis-equivalent configurations using
a high-performance bit vector library.  Note that
$\mathcal{A}2$ corresponds to $\overline{\mathcal A}$ in
Fig.~\ref{fig:liftedanalysis} and we will thus refer to it as
$\overline{\mathcal A}$, while we will use $\overline{\mathcal S}$ for
the analysis with sharing ($\mathcal{A}3$ in \cite{TAOSD}).
The performance of abstracted analyses depends on the size of tuples
they work on. Therefore
as variability abstractions, we have chosen $\overline{\mathcal
  D}_{\joina{}}$ which joins together (confounds) information from all
configurations down to just one abstracted analysis value, and
$\overline{\mathcal  D}_{\proja{N/2} \otimes \joina{N/2}}$ (where $N=|\Kk_\psi|$) which
is a parallel composition of a projection of $1/2$ (randomly selected) configurations and
a \emph{join} of the remaining $1/2$ configurations.
We abbreviate them as $\Done_1$ and $\Done_{N/2}$ in the following.  We have chosen those
variability abstractions because they represent the coarsest abstraction
$\Done_1$ that works on 1-sized tuples, and the medium abstraction
$\Done_{N/2}$ that works on $N/2$-sized tuples.
Any other abstraction will have a speed up anywhere
between $\Atwo$ (no abstraction), $\Done_{N/2}$ (medium abstraction) and $\Done_1$ (maximum abstraction).
It thus quantifies the potential of abstractions.

\begin{figure}[t]
\begin{center}
\begin{footnotesize}
\begin{tabular}{||l|c|c|c||l|c|c|c||}                                                                           \hhline{|t:====:t:=t=t=t=:t|}
~Benchmark~              & ~avg.\,$|\Kk_\psi|$~ & ~$|\Ff|$~ & ~LOC~      & ~max variability mth~ & ~$|\Kk_\psi|$~ & ~$|\Ff|$~ & ~LOC~ \\[-0.2ex] \hhline{|:====::====:|}
 ~\texttt{Prevayler}~   & ~N=1.3~                 & 5             & ~8,000~    & ~\texttt{P'F'.publisher()}~ & ~N=8~ &  ~3~ & 10   \\[-0.2ex] \hhline{||----||----||}
 ~\texttt{BerkelyDB}~  & ~N=1.6~                 & 42           &  ~84,000~ & ~\texttt{DBRunAction.main()}~   & ~N=40~ &             ~7~ & 165  \\[-0.2ex] \hhline{||----||----||}
 ~\texttt{GPL}~            & ~N=3.9~                & 18           &  ~1,350~   & ~\texttt{Vertex.display()}~          & ~N=106~ &             ~9~ & 31 \\[-0.2ex] \hhline{|b:====:b:=b=b=b=:b|}
\hline
\end{tabular}
\caption{Characteristics of our three SPL benchmarks (average \#configurations in all methods in SPL, total \#features, and LOC) along with, for each SPL, its method with maximum variability (\#configurations, local \#features, and LOC).}\label{fig:benchmarks}
\end{footnotesize}
\end{center}
\end{figure}

For our experiment\footnote{
The implementation, benchmarks, and all results obtained from our experiments
are available in the supplemental material submitted with this paper.}, we have chosen two analyses:
\emph{reaching definitions} and \emph{uninitialized variables};  and three SPL
benchmarks~\cite{KastnerPhD}.  Graph PL (GPL) is a small desktop
application with intensive feature usage, Prevayler is a slightly
larger product line with low feature usage, and BerkelyDB is a larger
database library with moderate feature usage.
Fig.~\ref{fig:benchmarks} summarises relevant characteristics for
each benchmark: the average number of valid configurations in all methods in the SPL,
the total number of features in the entire SPL, the total number of
lines of code (LOC).  Also, for each SPL, the figure details
information about the method with the highest variability (most
configurations): its number of valid configurations, features, and
lines of code.

\begin{figure*}[t]
\centering
\subfigure{
\begin{tabular}{c}
\begin{minipage}{3.6cm}
\centering
\begin{scriptsize}
  \begin{barenv}
    \setwidth{7}
    \sethspace{0.2}
    \setstretch{0.25} 
    \setyaxis{0}{200}{50}
    \setyname{$\mu$s}
    \setnumberpos{empty}
    \hspace{1.5}
   \shadebar{184}{20}[{\begin{tabular}{c}\\[1.3ex]{}$\Atwo$\\{\scriptsize{}}\end{tabular}}]
    \hspace{1.5}
   \shadebar{178}{30}[{\begin{tabular}{c}\\[1.3ex]{}$\Athree$\\{\scriptsize{}1.0x}\end{tabular}}]
    \hspace{1.5}
   \shadebar{126}{60}[{\begin{tabular}{c}\\[1.3ex]{}$\Done_{N/2}$\\{\scriptsize{}1.5x}\end{tabular}}]
    \hspace{1.5}
   \shadebar{67}{100}[{\begin{tabular}{c}\\[1.3ex]{}$\Done_1$\\{\scriptsize{}2.7x}\end{tabular}}]
    \hspace{1.5}
  \end{barenv}
\end{scriptsize}
\vspace*{0.3cm}\phantom{x}
~\\
\begin{scriptsize}
  \begin{barenv}
    \setwidth{7}
    \sethspace{0.2}
    \setstretch{0.2} 
    \setyaxis{0}{250}{50}
    \setyname{}
    \setnumberpos{empty}
    \hspace{1.5}
   \shadebar{238}{20}[{\begin{tabular}{c}\\[1.3ex]{}$\Atwo$\\{\scriptsize{}}\end{tabular}}]
    \hspace{1.5}
   \shadebar{193}{30}[{\begin{tabular}{c}\\[1.3ex]{}$\Athree$\\{\scriptsize{}1.2x}\end{tabular}}]
    \hspace{1.5}
   \shadebar{153}{60}[{\begin{tabular}{c}\\[1.3ex]{}$\Done_{N/2}$\\{\scriptsize{}1.6x}\end{tabular}}]
    \hspace{1.5}
   \shadebar{85}{100}[{\begin{tabular}{c}\\[1.3ex]{}$\Done_1$\\{\scriptsize{}2.8x}\end{tabular}}]
    \hspace{1.5}
  \end{barenv}
\end{scriptsize}\\
~\\
~\\
\small\texttt{Prevayler::publisher()}\\
N=8\\
\end{minipage}\\
\end{tabular}
}
\subfigure{
\begin{tabular}{c}
\begin{minipage}{3.6cm}
\centering
\begin{scriptsize}
  \begin{barenv}
    \setwidth{7}
    \sethspace{0.2}
    \setstretch{0.333333333} 
    \setyaxis{0}{150}{50}
    \setyname{ms}
    \setnumberpos{empty}
    \hspace{1.5}
    \shadebar{123}{20}[{\begin{tabular}{c}\\[1.3ex]{}$\Atwo$\\{\scriptsize{}}\end{tabular}}]
    \hspace{1.5}
    \shadebar{104}{30}[{\begin{tabular}{c}\\[1.3ex]{}$\Athree$\\{\scriptsize{}1.2x}\end{tabular}}]
    \hspace{1.5}
    \shadebar{61}{60}[{\begin{tabular}{c}\\[1.3ex]{}$\Done_{N/2}$\\{\scriptsize{}2.0x}\end{tabular}}]
    \hspace{1.5}
    \shadebar{5.05}{100}[{\begin{tabular}{c}\\[1.3ex]{}$\Done_1$\\{\scriptsize{}24x}\end{tabular}}]
    \hspace{1.5}
 \end{barenv}
\end{scriptsize}
\vspace*{0.3cm}\phantom{x}
~\\
\begin{scriptsize}
  \begin{barenv}
    \setwidth{7}
    \sethspace{0.2}
    \setstretch{2.5} 
    \setyaxis{0}{20}{5}
    \setyname{}
    \setnumberpos{empty}
    \hspace{1.5}
   \shadebar{18.0}{20}[{\begin{tabular}{c}\\[1.3ex]{}$\Atwo$\\{\scriptsize{}}\end{tabular}}]
    \hspace{1.5}
   \shadebar{14.8}{30}[{\begin{tabular}{c}\\[1.3ex]{}$\Athree$\\{\scriptsize{}1.2x}\end{tabular}}]
    \hspace{1.5}
   \shadebar{9.4}{60}[{\begin{tabular}{c}\\[1.3ex]{}$\Done_{N/2}$\\{\scriptsize{}1.9x}\end{tabular}}]
    \hspace{1.5}
   \shadebar{1.54}{100}[{\begin{tabular}{c}\\[1.3ex]{}$\Done_1$\\{\scriptsize{}12x}\end{tabular}}]
    \hspace{1.5}
  \end{barenv}
\end{scriptsize}\\
~\\
~\\
\small\texttt{BerkeleyDB::main()}\\
N=40\\
\end{minipage}\\
\end{tabular}
}
\subfigure{
\begin{tabular}{c}
\begin{minipage}{3.6cm}
\centering
\begin{scriptsize}
  \begin{barenv}
    \setwidth{7}
    \sethspace{0.2}
    \setstretch{1} 
    \setyaxis{0}{50}{10}
    \setyname{ms}
    \setnumberpos{empty}
    \hspace{1.5}
    \shadebar{45}{20}[{\begin{tabular}{c}\\[1.3ex]{}$\Atwo$\\{\scriptsize{}}\end{tabular}}]
    \hspace{1.5}
    \shadebar{25}{30}[{\begin{tabular}{c}\\[1.3ex]{}$\Athree$\\{\scriptsize{}1.8x}\end{tabular}}]
    \hspace{1.5}
    \shadebar{27}{60}[{\begin{tabular}{c}\\[1.3ex]{}$\Done_{N/2}$\\{\scriptsize{}1.7x}\end{tabular}}]
    \hspace{1.5}
    \shadebar{0.95}{100}[{\begin{tabular}{c}\\[1.3ex]{}$\Done_1$\\{\scriptsize{}47x}\end{tabular}}]
    \hspace{1.5}
 \end{barenv}
\end{scriptsize}
\vspace*{0.3cm}\phantom{x}
~\\
\begin{scriptsize}
  \begin{barenv}
    \setwidth{7}
    \sethspace{0.2}
    \setstretch{2.5} 
    \setyaxis{0}{20}{5}
    \setyname{}
    \setnumberpos{empty}
    \hspace{1.5}
   \shadebar{17.2}{20}[{\begin{tabular}{c}\\[1.3ex]{}$\Atwo$\\{\scriptsize{}}\end{tabular}}]
    \hspace{1.5}
   \shadebar{7.86}{30}[{\begin{tabular}{c}\\[1.3ex]{}$\Athree$\\{\scriptsize{}2.2x}\end{tabular}}]
    \hspace{1.5}
   \shadebar{8.91}{60}[{\begin{tabular}{c}\\[1.3ex]{}$\Done_{N/2}$\\{\scriptsize{}1.9x}\end{tabular}}]
    \hspace{1.5}
   \shadebar{0.62}{100}[{\begin{tabular}{c}\\[1.3ex]{}$\Done_1$\\{\scriptsize{}28x}\end{tabular}}]
    \hspace{1.5}
  \end{barenv}
\end{scriptsize}\\
~\\
~\\
\small\texttt{GPL::display()}\\
N=106\\
\end{minipage}\\
\end{tabular}
}
\caption{Analysis time for \emph{reaching definitions} (above) and
  \emph{uninitialized variables} (below): $\Atwo$ (baseline) and
  $\Athree$ (sharing) vs.\ $\Done_{N/2}$ (medium abstraction) and
  $\Done_1$ (maximum abstraction).}
\label{fig:performance}
\end{figure*}

\paragraph{{Performance.}}
Fig.~\ref{fig:performance} shows the time it takes to run each of our
three maximum variability methods, as a relative comparion between
$\Atwo$ (baseline) and $\Athree$ (sharing) \emph{vs} $\Done_{N/2}$
(medium abstraction) and $\Done_1$ (maximum abstraction).
The experiments are executed on a 64-bit Intel$^\circledR$Core$^{TM}$
i5 CPU with 8 GB memory.
All times are reported as averages over ten runs with the highest and
lowest number removed.  For each benchmark method, we give the speed
up factor relative to the baseline (normalized with factor 1) and
recall the number of configurations, N.

Our experiment confirms previous results that sharing is indeed
effective and especially so for larger values of N~\cite{TAOSD}.  On our
methods, it translates to speed ups
(i.e., $\Atwo$~\textit{vs}~$\Athree$) anywhere between 3\% faster (for
N=8) and slightly more than twice as fast (for N=106).
We also observe that abstraction is not surprisingly significantly
faster than unabstracted analyses (i.e., $\Done$~\textit{vs}~$\Atwo$
and $\Athree$); i.e., abstraction yields significant performance
gains, especially for benchmarks with higher variability.  For
\texttt{GPL} with N=106, we see a dramatic 47 and 28 times speed up
depending on the analysis (i.e., $\Done_1~\textit{vs}~\Atwo$).  Also, we
note that increased abstraction is up to 26 times faster than improved
representation (i.e., $\Done_1~\textit{vs}~\Athree$).  In general, it is
obviously possible to combine the benefits from representation and
abstraction to yield even more efficient analyses.

\paragraph{{From Infeasible to Feasible Analysis.}}
Of course, for very large values of N, analyses may become
impractically slow or infeasible.  As an experiment, we took a large
method (\texttt{processFile()} from \texttt{BerkeleyDB}) and kept
adding unconstrained variability.  For N=$2^{13}$=8,192
configurations, the analysis $\Atwo$ took 138 seconds.  For
N=$2^{14}$=16,384, it ran more than ten minutes until it eventually
produced an out-of-memory error.  In contrast, variability abstraction
$\Done_1$ analyses the same high variability method in less than 8 ms
(albeit less precisely).  Hence, abstraction can not only speed up
analyses, but also turn previously infeasible analyses feasible.

\paragraph{{Projection on Entire SPL.}}
GPL is a family of classical graph applications with variability on
its representation and algorithms.  For instance, the features
\texttt{Directed} and \texttt{Undirected} control whether or not
graphs are \emph{directed}; \texttt{Weighted} and \texttt{Unweighted}
control whether or not the graphs are \emph{weighted}; and, the
features \texttt{BFS} and \texttt{DFS} control the search algorithm
used (\emph{breadth-first search} or \emph{depth-first search}).  It is
common industrial practice, to ship products with a subset of
configurations, and thereby functionality.  Here, we may use
projection to \emph{disable} features \texttt{BFS} and
\texttt{Undirected}, along with any features that only work on
undirected graphs: (\texttt{Connected}, \texttt{MSTKruskal}, and
\texttt{MSTPrim} for  implementing \emph{connected components} and \emph{minimum
  spanning trees} algorithms) which can be obtained from GPL's feature model,
detailing such feature dependencies.  With this projection
(abstraction), the configuration space of GPL is reduced from 528 to
370 valid configurations.  This, in turn, cuts analysis time of
reaching definitions in half (from 90ms to 49ms).  For 123 out of 135
methods, the abstracted analysis computes the exact same analysis
information.  For larger product lines and projections, lots of time
may be saved in this way.

\paragraph{{Join on One Method.}}
Figure~\ref{fig:BKDBcode} shows a fragment extracted from
\texttt{BerkeleyDB}'s \texttt{main()} method with N=40 valid
configurations.  A local variable, \texttt{doAction} is defined and
initialized to zero, after which it is conditionally assigned three
times in statements guarded by \texttt{\#ifdef}s.  (Actually, there
are two more similar \texttt{\#ifdef}s involving features
\texttt{Evictor} and \texttt{DeleteOp}, but we have omitted those for
brevity in the code fragment.)  We can use a join abstraction of the
reaching definitions analysis to compute what are the possible values
(definitions) that \emph{reach} the condition of the \texttt{switch}
statement in line 12.  An abstracted analysis would be able to
determine that these are the assignments in lines 1, 3, 6, 8, and 10, by
analyzing only \emph{one} crudely over-approximated configuration
instead of all (N=40) configurations.  In general, by inspecting the
structure of the code and the features used, we can tailor abstactions
that can analyze individual methods much faster than analyzing all
configurations.

\begin{figure}
\centering
\begin{scriptsize}
\begin{tabular}{ll}
\texttt{void} & \texttt{main(..) \{}\\
\textcolor{gray}{1} & \texttt{.. int doAction = 0; ..}\\ \hhline{~-}
\textcolor{gray}{2} & \cellcolor{lightgray!25}{\texttt{\textcolor{darkblue}{\underline{\#ifdef} Cleaner}}}\\
\textcolor{gray}{3} & \cellcolor{lightgray!25}{\texttt{if (..) doAction = CLEAN;}}\\
\textcolor{gray}{4} & \cellcolor{lightgray!25}{\texttt{\textcolor{darkblue}{\underline{\#endif}}}}\\ \hhline{~=}
\textcolor{gray}{5} & \cellcolor{lightgray!25}{\texttt{\textcolor{darkblue}{\underline{\#ifdef} INCompresser}}}\\
\textcolor{gray}{6} & \cellcolor{lightgray!25}{\texttt{if (..) doAction = COMPRESS;}}\\
\textcolor{gray}{7} & \cellcolor{lightgray!25}{\texttt{\textcolor{darkblue}{\underline{\#endif}}}}\\ \hhline{~-}
\textcolor{gray}{8} & \texttt{if (..) doAction = CHECKPOINT;}\\ \hhline{~-}
\textcolor{gray}{9} & \cellcolor{lightgray!25}{\texttt{\textcolor{darkblue}{\underline{\#ifdef} Statistics}}}\\
\textcolor{gray}{10} & \cellcolor{lightgray!25}{\texttt{if (..) doAction = DBSTATS;}}\\
\textcolor{gray}{11} & \cellcolor{lightgray!25}{\texttt{\textcolor{darkblue}{\underline{\#endif}}}}\\ \hhline{~-}
\textcolor{gray}{12} & \texttt{.. switch (doAction) \{ .. \} ..}\\
\texttt{\}} & \\
\end{tabular}
\end{scriptsize}
\caption{Code fragment extracted from \texttt{BerkeleyDB::main()} with N=40.}
\label{fig:BKDBcode}
\end{figure}

\section{Related Work}\label{sec:related}

Static analyses can be accelerated by devicing more efficient
representations or by introducing abstraction.  In family-based
analysis for software product lines the representation improvements
primarily rely on sharing state information for variants with
analysis-equivalent information (which implies reducing redundant
computation).  This can optimize the analyses
considerably~\cite{TAOSD,model-checking-spls,TypeCheckingSPL}.
However, in the worst case, the number of variants that a lifted
analysis has to consider is still inherently exponential in the number
of features, $|\Ff|$.  Thus with a large number
of features lifted analyses may become impractical or even infeasible.
In this work we have taken the alternate route of using
abstraction. Our experiments show that abstraction introduces
speed-ups independently of representation gains.  Thus our results can
be beneficially combined with efficient representations.

An efficient implementation of lifted analysis formulated within the
IFDS framework\,\cite{ifds} for inter-procedural distributive
environments was proposed in SPL$^\text{LIFT}$~\cite{SPLLIFT}.  It
uses binary decision diagrams to represent shared feature constraints.
The authors have found that the running time of analysing all variants
in a family is close to the analysis of a single-program. In such
case, further benefit of applying abstraction, as presented in this
paper, is unlikely to bring any significant improvement.  However,
notice that the method of SPL$^\text{LIFT}$ is limited only to
distributive data-flow analysis encoded within the IFDS framework.  Many
analyses, including constant propagation, are not distributive and
hence cannot be expressed in IFDS.  Let alone static analyses that are
not expressible as data-flow analyses (including type checking,
model-checking, etc).

The formal developments in this paper are based on \emph{variational
abstract interpretation}, a formal methodology for systematic
derivation of lifted analyses for $\texttt{\#ifdef}$-based product
lines, proposed in \,\cite{MBW}.  The method is based
on the calculational approach to abstract interpretation of
Cousot\,\cite{Cousot99}, applied and contextualized to product lines.
In that work, Galois connections are not used for lifting, but only
for derivation of single program analyses as shown in
\,\cite{Cousot99}, so they are variability-unaware.  Calculations are
used to derive a directly operational \emph{abstracted lifted
analysis} which is \emph{correct} by construction.  
In the present paper,
we assume that lifted analyses exist (possibly obtained using the
methodology of \cite{MBW}), and focus on abstracting variability using
them.  We devise an expressive calculus for specifying abstraction
operators. 
Also, thanks to our tool, all
abstractions specifiable in our calculus, are now automatically
executable.

A good collection of analyses that have been lifted manually is
presented in the survey \cite{TRfin2012}.  We should remark, that the
join operation \joina{} allows
applying single program analyses to program families, even if with
precision loss. In that sense, the our approach is the
first ever method that can \emph{automatically} lift single program
analyses to work on program families.
Besides the family-based strategy, the survey\,\cite{TRfin2012}
identifies a \emph{sampling strategy} as a suitable way of analyzing
product lines (see also\,\cite{apel.ea-sampling}). In the sampling strategy only a random subset
of products is analyzed. We remark that once the sample is selected,
our projection operator \fproja{\phi}\ can be used to realize the
sampling strategy in a simultanous way by exploiting an existing
family-based analysis. 

In fact, the agebraic specification framework of
Section\,\ref{sec:variability_abstractions} allows specifying any
analysis in the spectrum between a fully family-based analyses, and a
single variant, \emph{product-based}, analysis.  We can specify
abstractions that select (sample) any subsets of configurations and
then analyze this subset with selected choice of precision, either all
variants precisely, like in sampling, or confounding some executions
for efficiency.  In this sense, we show how to design analyses placed
anywhere in the design spectrum painted in\,\cite{TRfin2012}.
Consider, the \emph{feature-based} analysis strategy as an example. In
this strategy an analysis explores the program code feature-by-feature
(as opposed to configuration-by-configuration).  Analyses following
this strategy can now be systematically obtained using our
abstractions, by projecting away (ignoring) all but one feature and
running a single program analysis on the result.  This is quite
remarkable. It has been well recognized that designing such analyses
is very difficult, yet now there exists a systematic way of doing
that, so it is no longer an impenetrable art.  


\section{Conclusion}\label{sec:conclusion}

We have defined variability-aware abstractions given as Galois
connections, and used them to derive efficient and
correct-by-construction abstract analyses of program families.  We
have designed a calculus for the abstractions, and shown how
abstractions specified in this language can be applied not only on
analyses, but also on programs, obtaining a convenient implementation
strategy of the abstractions in form of a source-to-source \texttt{reconfigurator}
transformation.

The \texttt{reconfigurator} transformation presently requires that the
programming language is able to express \emph{sequential composition}
(e.g., ``\texttt{;}'' in \texttt{IMP}) and \emph{join of statements}
(i.e., \texttt{lub} as in ``$\sqcup$'') with respect to the analysis
in question.  It would be interesting to consider lifting those
assumptions in future, and apply this method to more modeling and programming languages.

We evaluated the method 
on three Java-based product lines.  We found that the abstractions
improve performance of analyses independently of improvements in
the data representations used in the implementations of these
analyses.  This indicates that the proposed abstraction strategies
will be instrumental in tackling error finding analysis in large
configurable software systems, like the Linux kernel.  Indeed we have
developed these techniques with the intention of scaling error finding
tools to such challenging cases in future.  Besides this, we would
like to experiment with applying these abstraction techniques to
alternative quality assurance methods including model checking, and
testing.

\bibliographystyle{splncs}

\begin{thebibliography}{10}

\bibitem{apel.ea-sampling}
S. Apel, A. von Rhein, P. Wendler, A. Groslinger, D. Beyer.
\newblock Strategies for product-line verification: case studies and experiments.
\newblock In: 35th International Conference on Software Engineering, ICSE'13, 2013, pp.
 482--491.

\bibitem{Bat}
D.~Batory.
\newblock Feature Models, Grammars, and Propositional Formulas.
\newblock In: Obbink, J.~H. and Pohl, K. (eds.) SPLC 2006.
LNCS vol. 3714, pp. 7--20. Springer, 2006.



\bibitem{DBLP:conf/models/BergerNRACW14}
T.~Berger, D.~Nair, R.~Rublack, J.~M. Atlee, K.~Czarnecki, A.~Wasowski.
\newblock Three cases of feature-based variability modeling in industry
\newblock In: {MODELS} 2014, 2014, pp. 302--319.



\bibitem{SPLLIFT}
E.~Bodden, T.~Tol{\^e}do, M.~Ribeiro, C.~Brabrand, P.~Borba,  M.~Mezini.
\newblock {SPL}$^\text{LIFT}$ - Statically Analyzing Software Product Lines in Minutes Instead of Years.
\newblock In: Proc.\ ACM SIGPLAN Conference on Programming Language Design and Implementation (PLDI), 2013, pp. 355--364.

\bibitem{TAOSD}
C.~Brabrand, M.~Ribeiro, T.~Tol{\^e}do, J.~Winther, P.~Borba.
\newblock Intraprocedural  dataflow analysis for software product lines.
\newblock In: Transactions on Aspect-Oriented
  Software Development 10, 2013, pp. 73--108.

\bibitem{model-checking-spls}
A.~Classen, P.~Heymans, P.-Y. Schobbens, A.~Legay.
\newblock Symbolic model checking of software product lines.
\newblock In: ICSE'11, 2011, pp. 321--330.

\bibitem{SPL}
P.~Clements, L.~Northrop.
\newblock {\em Software Product Lines: Practices and Patterns}.
\newblock  Addison-Wesley, 2001.

\bibitem{CousotCousot79}
P.~Cousot, R.~Cousot.
\newblock Systematic design of program analysis frameworks.
\newblock In: POPL'79, 1979, pp. 269--282.

\bibitem{CousotCousot92}
P.~Cousot, R.~Cousot.
\newblock Abstract interpretation and application to logic programs.
\newblock In: Journal of Logic Programming 13~(2--3) (1992) 103--179.

\bibitem{TypeSystemsAsAbsInt}
P.~Cousot.
\newblock Types as abstract interpretations.
\newblock In: POPL'97, 1997, pp. 316--331.

\bibitem{Cousot99}
P.~Cousot.
\newblock The Calculational Design of a Generic Abstract Interpreter.
\newblock Calculational System Design, NATO ASI Series F. IOS Press, 1999.

\bibitem{ModelCheckingAsAbsInt1}
P.~Cousot, R.~Cousot.
\newblock Refining model checking by abstract interpretation.
\newblock  Autom. Softw. Eng. 6~(1) (1999) 69--95.

\bibitem{TestingAsAbsInt}
D.~Guilbaud, E.~Goubault, A.~Pacalet, B.~S.~F. V{\'e}drine.
\newblock A simple abstract interpreter for threat detection and test case generation.
\newblock In: WAPATV'01, with ICSE'01, Toronto, 2001.


\bibitem{TypeCheckingSPL}
C.~K{\"a}stner, S.~Apel. Thomas Th\"um. Gunter Saake.
\newblock Type checking annotation-based product lines.
\newblock ACM Transactions on Software Engineering and Methodology. Volume 21
Issue 3, June 2012.

\bibitem{ifdef}
C.~K{\"a}stner, S.~Apel, M.~Kuhlemann.
\newblock Granularity in software product lines.
\newblock In: Schafer, W., Dwyer, M.B., Gruhn, V. (eds.), ICSE'08. ACM, pp. 311--320, 2008.

\bibitem{KastnerPhD}
C.~K{\"a}stner.
\newblock {\em Virtual Separation of Concerns: Toward Preprocessors 2.0}.
\newblock  PhD thesis, University of Magdeburg, 2010.

\bibitem{MBW}
J.~ Midtgaard, C.~Brabrand, and A.~Wasowski.
\newblock Systematic Derivation of Static Analyses for Software Product Lines.
\newblock In: 13th Int'l Conference on Modularity, 2014.


\bibitem{ifds}
T.~Reps, S.~Horwitz, M.~Sagiv. Precise interprocedural dataflow analysis via
  graph reachability. In: Proc. 22nd POPL '95, 1995, pp. 49--61.

\bibitem{TRfin2012}
T.~Th{\"{u}}m, S.~Apel, C.~K{\"{a}}stner, I.~Schaefer, G.~Saake. A
  classification and survey of analysis strategies for software product lines.
In {ACM} Comput. Surv. 47(1), 2014.


\bibitem{SOOT}
R.~Vallee-Rai, P.~Co, E.~Gagnon, L.~J.~Hendren, P.~Lam, V.~Sundaresan.
\newblock Soot - a Java bytecode optimization framework.
\newblock In: MacKay, S.~A., and Johnson, J.~H. (eds.), CASCON 1999. IBM, pp. 13, (1999).

\bibitem{winskel}
G.~Winskel.
\newblock {\em The Formal Semantics of Programming Languages}.
\newblock Foundation of  Computing Series, The MIT Press, 1993.



%
%
%
%
%


%
%
%
%

\end{thebibliography}

\newpage
\appendix



\section{Properties of Abstraction Operators} \label{App:oper}

We recall properties of Galois connections for completeness.

A pair $\poset{L}{\leq_L} \galois{\alpha}{\gamma}
\poset{M}{\leq_M}$ is a \emph{Galois connection} between complete lattices $L$
and $M$ iff $\alpha$ and $\gamma$ are total functions that satisfy: $\alpha(l)
\leq_M m \iff l \leq_L \gamma(m)$ for all $l \in L, m \in M$.

Some important properties of Galois connections \cite{CousotCousot92} include:
1) $\gamma \circ \alpha$ is \emph{extensive}, i.e.\ $l \, \leq_L
	\, (\gamma \circ \alpha)(l)$ for all $l \in L$;
2) $\alpha \circ \gamma$ is \emph{reductive}, i.e.\ $(\alpha \circ
	\gamma)(m) \, \leq_M \, m$ for all $m \in M$;
3) $\alpha$	is a \emph{complete join morphism} (CJM), i.e.\
   $\alpha(\bigsqcup_{l\in L'} l) ~~=~~ \bigsqcup_{l\in L'} \alpha(l)$
   for all $L' \subseteq L$.

\bigskip

\noindent Now we turn to proving theorems of
Sect.\ref{sec:basic-abstractions}.

\begin{proof}[Thm.\,\ref{thm:join-galois}]
	Let $\abar\in\Aa^\Kk$ and $a \in \Aa^{\joina{}(\Kk)} \equiv \Aa$;
	recall that $\joina{}(\Kk)$ is always a singleton.  We have:
\begin{align*}
   &  \joina{}(\abar) ~\pwsqsubseteq~ ( a )
\\ &\iff \left(\textstyle\bigsqcup_{k\in\Kk }
				\confproj{k}{\overline{a}}\right) \sqsubseteq (a)
      \tag{by def. of \joina{}}
\\ &\iff
       \forall k \in \Kk. \, \confproj{k}{\overline{a}} \sqsubseteq a
      \tag{by def. of $\sqcup$}
\\ &\iff
       \overline{a} \, \pwsqsubseteq\, \joing{}(a)
      \tag{by def. of \joing{}}
    \end{align*}
\qed
\end{proof}

\begin{proof}[Thm.\,\ref{thm:proj-galois}]
	Let $\abar\in\Aa^\Kk$ and $\abar'\in\Aa^{\{k \in \Kk \mid k \models \phi\}}$.
	We have:
	\begin{align*}
	&  \proja\phi(\abar) \, \pwsqsubseteq \, \abar'
\\ &\iff \forall k\in\Kk,  k\models\phi.\,
		\confproj k\abar \sqsubseteq \confproj k {\abar'}
		\tag{by def.\ of $\proja\phi$}
\\ &\iff \forall k \in \Kk, k \models \phi.\,
		\confproj{k}{\abar} \sqsubseteq \confproj{k}{\overline{a'}} \, \land \,
		\forall k\in\Kk, k \not\models \phi.\,
			\confproj k\abar \sqsubseteq \top
      \tag{by def.\ of $\top$}
\\ &\iff
       \abar \,\pwsqsubseteq\, \projg\phi(\abar')
      \tag{by def. of $\gamma^{proj}_{\phi}$}
    \end{align*} \qed
\end{proof}

\bigskip
For sequential composition Galois connection properties follow
directly from the definition and the standard results about
compositions of Galois connections. Let's consider the parallel
composition:
\begin{proof}[Thm.\,\ref{thm:product-galois}]
To verify that this defines a Galois connection, we calculate:
  \begin{align*}
   &  \alpha_{1} \otimes \alpha_{2} (\overline{a}) \, \dot\sqsubseteq \, \overline{a'}
\\ &\iff \alpha_{1} (\overline{a}) \, \times \, \alpha_{2} (\overline{a}) \dot\sqsubseteq \, \overline{a'}
      \tag{by def. of $\alpha_{1} \otimes \alpha_{2}$}
\\ &\iff \alpha_{1} (\overline{a}) \, \dot\sqsubseteq \, \pi_{\alpha_1(\Kk)}(\overline{a'}) \land \alpha_{2} (\overline{a}) \, \dot\sqsubseteq \, \pi_{\alpha_2(\Kk)}(\overline{a'})
      \tag{by def. of $\overline{a_1} \times \overline{a_{2}}$, $\pi_{\alpha_1(\Kk)}$, and $\pi_{\alpha_2(\Kk)}$}
\\ &\iff
       \overline{a} \, \dot\sqsubseteq \, \gamma_{1} (\pi_{\alpha_1(\Kk)}(\overline{a'})) \land \overline{a} \, \dot\sqsubseteq \,
       \gamma_{2} (\pi_{\alpha_2(\Kk)}(\overline{a'}))
      \tag{by def. of Galois conn.}
\\ &\iff
       \overline{a} \, \dot\sqsubseteq \, \gamma_{1}(\pi_{\alpha_1(\Kk)}(\overline{a'})) \sqcap \gamma_{2} (\pi_{\alpha_2(\Kk)}(\overline{a'}))
        \tag{by def. of $\sqcap$}
\\ &\iff
       \overline{a} \, \dot\sqsubseteq \, \gamma_{1} \otimes \gamma_{2} (\overline{a'})
      \tag{by def. of $\gamma_{1} \otimes \gamma_{2}$}
    \end{align*}
 \end{proof}

\bigskip

\noindent We now turn our attention to proving properties of the
derived abstraction operators introduced in
Sect.\,\ref{sec:derived-abstractions}. Observe that all derived
abstractions are Galois connections thanks to theorems of
Sect.\,\ref{sec:basic-abstractions}.

We proceed to show that $\fignorea{A}$ can be expressed using the basic
abstractions. This will allow us to disregard it in further proofs,
which are mostly done by structural induction on the structure of the
abstractions.  In this proof, it is convenient to name the
configuration formulas of the abstract domain, so let
$\{k_1',\dots,k_n'\} = \fignorea{A}(\Kk)$, indexed in the order of
components in vectors indexed by \(\fignorea{A}(\Kk)\).
Also, recall that \(\joina{\phi} = \joina{}\circ\proja{\phi}\) is
another derived operator, which we use in this theorem.

\begin{proof}[Thm.\,\ref{thm:fproj-expansion}]
	We first look into the expansion of \fignorea A and establish that
	the types of both sides are correct.   By definition
	(equation~\eqref{eq:fignorea}) the type of $\fignorea A$ is
	$\Aa^\Kk \to \Aa^{\{k_1',\dots,k_n'\}}$. The type of each
	\smash{$\joina{k_l'}$} in the right hand side of the equality is $\Aa^{\Kk}
	\to \Aa^{\{k_l'\}}$, and consequently the type of the entire
	product in the left-hand-side is $\strut\Aa^{\Kk} \to
	\Aa^{\{k_1',\dots,k_n'\}}$ as required; cf.\ the definition of
	parallel composition for configuration sets.

	\noindent The proof proceeds by mathematical induction with the
	following hypothesis:
	\begin{equation}
		\left(\joina{k_1'} \otimes \dots \otimes \joina{k_i'}\right)(\abar)=
		\textstyle\prod_{l=1}^i\bigsqcup_{k\in\Kk,
			k \models\, k_l'} \pi_k(\abar)
	\end{equation}

	\noindent \emph{Base case.}  Consider a single $\joina{k_i'}$ and let $\abar
	\in\Aa^{\Kk}$. We proceed by equational reasoning from left to
	right:
	\begin{align}
		\joina{k_l'}(\abar) =~
		& (\joina{} \circ \proja{k_l'}) (\abar)
		\tag{def.\ of \joina{\phi}}
\\ =~ &
		\joina{} \left(
				\textstyle\prod_{\{k\in\Kk \mid k\entails k_l'\}}
					\pi_k(\abar)\right)
		\tag{def.\ of \proja{\phi}}
\\ =~ &
	\left(\textstyle\bigsqcup_{\{k\in\Kk \mid k\entails k_l'\}}
		\pi_k(\abar)\right)
		\tag{def.\ of \joina{\phantom\phi}}
	\end{align}

	\noindent\emph{Inductive step} (again by equational reasoning from left to
	right):%
	\begin{align*}
		&
		\left(\joina{k_1'} \otimes \dots \otimes
			\joina{k_{i+1}'}\right)(\abar)=
	\\ =~ &
		\left(\left(\lambda \abar.\, \textstyle\prod_{l=1}^i
				\bigsqcup_{k\in\Kk, k\models k_l'} \pi_k(\abar)\right)
				\otimes \joina{k_{i+1}'}\right)(\abar)
		\tag{by IH}
	\\ =~ &
		\left(\left(\lambda \abar.\, \textstyle\prod_{l=1}^i
		\bigsqcup_{k\in\Kk,  k\models k_l'} \pi_k(\abar)\right) \otimes
		\left(\lambda\abar.\,\left(\textstyle\bigsqcup_{k\in\Kk,
					k\entails k_{i+1}'} \!\pi_k(\abar)\right)\right)\right)(\abar)
		\tag{the base case above}
	\\ =~ &
			\left(\lambda \abar.\, \textstyle\prod_{l=1}^{i+1}
				\bigsqcup_{k\in\Kk, k\models k_l'}
				\pi_k(\abar)\right)(\abar)
		\tag{def.\ of $\otimes$; $k_{i+1}'$ is a different formula from any of $k_l'$s}
	\\ =~ &
			\textstyle\prod_{l=1}^{i+1}
				\bigsqcup_{k\in\Kk, k\models k_l'} \pi_k(\abar)
		\tag{beta reduction}
	\end{align*}

	The above completes the inductive proof. The inductive hypothesis for $i=n$
	concludes the proof of correctness for expansion of \fignorea
	A\kern-5mm.

	\bigskip

	\noindent The proof for the expansion of \fignoreg A is similar.
	The type of $\fignoreg A$ is by definition $\Aa^{\{k_1',\dots,k_n'\}} \to
	\Aa^\Kk$.  The type of each of
	the factors in the right-hand-side is $\joing{k_l'} :
	\Aa^{\{k_l'\}} \to \Aa^\Kk$.  Now, by definition of the product the
	type of the entire term is: \(\Aa^{\{k_1',\dots,k_n'\}} \to
		\Aa^\Kk\) (since $k_l'$ are different formulea).

	\noindent The inductive hypothesis is ($\abar''\in\Aa^{\{k_1',...,k_i'\}}$):%
	\begin{equation*}
		\left(\joing{k_1'} \otimes \dots \otimes
			\joing{k_i'}\right)(\abar'')=
		\prod_{k\in\Kk}
		\begin{cases}
			\pi_{k_l'} (\abar'') & \text{if }
				k\models k_l' \text{ for some } l\in1..i\\
			\top & \text{otherwise}
		\end{cases}
	\end{equation*}

	\noindent \emph{Base case.}
	\begin{align*}
&\fignoreg{k_l'} = \projg{k_l'} \circ \joing{} =  \projg{k_l'} \circ
(\lambda \abar''.\, \textstyle\prod_{k\in\Kk, k\entails k_l'}
\pi_{k_l'}(\abar''))
\\ =~ &
\left(\lambda\abar'.\, \prod_{k\in\Kk}
		\begin{cases}
	 		\pi_k(\abar') & \text{if } k\entails k_l'\\
	 		\top			   & \text{otherwise }
		\end{cases}
					 \right) \circ
					 \left(\lambda \abar''.\,
						 \textstyle\prod_{k\in\Kk, k\entails k_l'}
\pi_{k_l'}(\abar'')\right)
	\tag{def.\ of projection}
\\ =~ &
\lambda\abar''.\, \prod_{k\in\Kk}
		\begin{cases}
	 		\pi_{k_l'}(\abar'') & \text{if } k\entails k_l'\\
	 		\top			   & \text{otherwise }
		\end{cases}
\tag{composition}
\end{align*}

\noindent\emph{Inductive step.}%
\begin{align*}
& \left(\joing{k_1'} \otimes \dots \otimes
			\joing{k_{i+1}'}\right)(\abar'') = \left((\joing{k_1'} \otimes \dots \otimes
			\joing{k_i'}) \otimes \joing{k_{i+1}'}\right)(\abar'')
\\ =~ &
 \left(\lambda \abar''.\, \left(
			\prod_{k\in\Kk}
				\begin{cases}
					\pi_{k_l'} (\abar'')
							& \text{if } k\models k_l' \text{ for some } l\in1..i\\
					\top
							& \text{otherwise}
				\end{cases}\right)
		\otimes\right. \\
		& \left.\hspace{1cm}\left(
			\prod_{k\in\Kk}
				\begin{cases}
					\pi_{k_{i+1}}(\abar'')	
							& \text{if } k\models k_{i+1}'\\
					\top
							& \text{otherwise }
				\end{cases}
		\right)
 \right) (\abar'')
		\tag{IH and the base case}
\\ =~ &
\left(\lambda \abar''.\,
		\prod_{k\in\Kk}
				\begin{cases}
					\pi_{k_l'} (\abar'')
							& \text{if } k\models k_l' \text{ for some }
							l\in1..i+1\\
					\top
							& \text{otherwise}
				\end{cases}\right)
		(\abar'')
	\tag{$k_l'$ formulas are not equivalent and $\otimes$ uses $\sqcup$}
\\ =~ &
		\prod_{k\in\Kk}
				\begin{cases}
					\pi_{k_l'} (\abar'')
							& \text{if } k\models k_l' \text{ for some }
							l\in1..i+1\\
					\top
							& \text{otherwise}
				\end{cases}
	\tag{beta reduction}
	\end{align*}

	\bigskip

	Now, instantiate the inductive hypothesis for $i=n$, and observe that
	for any $k\in\Kk$ there exists a $k_l'\in\fignorea{\Kk}$, such that $k \models k_l'$, so the second
	case is never exercised and we end up concluding that:
\begin{equation*}
\left(\joing{k_1'} \otimes \dots \otimes
			\joing{k_n'}\right)(\abar'') =
					\fignoreg{A} (\abar'')
\end{equation*}
\qed
\end{proof}


\section{Appendix: Proof of Soundness of Abstracted Analyses} \label{App1}

We denote with $(*)$ the equation:
\[
\alpha(\overline{a})=\prod_{k' \in \alpha(\Kk)} \confproj{k'}{\alpha(\overline{a})}
\]
where $\overline{a} \in \mathbb{A}^{\Kk}$ and $\alpha : \mathbb{A}^{\Kk} \to \mathbb{A}^{\alpha(\Kk)} \in Abs$.

\begin{proposition} \label{proposition:1}
  $\forall e \in \mathit{Exp}, (\alpha,\gamma) \in Abs, \overline{d} \in \mathbb{A}^{\alpha(\Kpsi)}:~ \alpha \circ \overline{\mathcal{A'}}\sbr{e} \circ \gamma (\overline{d})
      ~\dot\sqsubseteq~
    \overline{\mathcal{D'}}_{\alpha}\sbr{e} \, \overline{d}$
\end{proposition}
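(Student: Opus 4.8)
The plan is to prove Proposition~\ref{proposition:1} by structural induction on the expression $e$, following and completing the derivation already sketched in Fig.~\ref{fig:deriving-op}. There are three cases: the base cases $e = n$ and $e = \texttt{x}$, and the inductive case $e = \impbinop{e_0}{e_1}$. Throughout, $\alpha$ denotes the relevant variability abstraction instantiated on tuples over the complete lattice $\Const$ (the operators of Sect.~\ref{sec:variability_abstractions} are defined for an arbitrary complete lattice, so this instantiation is legitimate), and I will freely use that every $(\alpha,\gamma)\in Abs$ is a Galois connection --- hence $\alpha$ is monotone, a complete join morphism, and $\alpha\circ\gamma$ is reductive --- together with the monotonicity of $\overline{\mathcal{D'}}_\alpha\sbr{e}$ shown in App.~\ref{App3}.

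For $e = n$: unfolding definitions, $\overline{\mathcal{A'}}\sbr{n}(\gamma(\overline{d})) = \prod_{k\in\Kpsi}\texttt{n}$ is a constant tuple (independent of $\overline{d}$), and a one-line induction on the structure of $\alpha$ --- join contributes $\bigsqcup_k\texttt{n}=\texttt{n}$, parallel composition contributes $\texttt{n}\sqcup\texttt{n}=\texttt{n}$, projection merely selects components, and sequential composition composes --- shows that $\alpha$ sends this constant tuple to $\prod_{k'\in\alpha(\Kpsi)}\texttt{n}$, which is exactly $\overline{\mathcal{D'}}_\alpha\sbr{n}\,\overline{d}$; so here the two sides coincide. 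For $e = \texttt{x}$: by a similarly short induction on the structure of $\alpha$, using that the join of stores is computed pointwise, $\alpha$ commutes with the configuration-wise lookup of $\texttt{x}$, i.e.\ $\alpha(\overline{\mathcal{A'}}\sbr{\texttt{x}}(\overline{a})) = \overline{\mathcal{D'}}_\alpha\sbr{\texttt{x}}(\alpha(\overline{a}))$ for all $\overline{a}$; instantiating $\overline{a} = \gamma(\overline{d})$ and combining with monotonicity of $\overline{\mathcal{D'}}_\alpha\sbr{\texttt{x}}$ and the reductive law $\alpha(\gamma(\overline{d}))\pwsqsubseteq\overline{d}$ yields the claim.

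For the inductive case $e = \impbinop{e_0}{e_1}$ I carry out the four steps displayed in Fig.~\ref{fig:deriving-op}: first unfold the definitions of $\circ$ and of $\overline{\mathcal{A'}}\sbr{\impbinop{e_0}{e_1}}$, rewriting the left-hand side as $\alpha$ applied to the configuration-wise operation $\overline{\mathcal{A'}}\sbr{e_0}\gamma(\overline{d})\,\dot\impparbinopname\,\overline{\mathcal{A'}}\sbr{e_1}\gamma(\overline{d})$; then push $\alpha$ inside $\dot\impparbinopname$, obtaining $\alpha(\overline{\mathcal{A'}}\sbr{e_0}\gamma(\overline{d}))\,\dot\impparbinopname\,\alpha(\overline{\mathcal{A'}}\sbr{e_1}\gamma(\overline{d}))$ as an upper bound; and finally apply the induction hypothesis to $e_0$ and $e_1$ together with monotonicity of $\impparbinopname$ to fold the expression back into $\overline{\mathcal{D'}}_\alpha\sbr{\impbinop{e_0}{e_1}}\,\overline{d}$. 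The complete calculation is literally the chain in Fig.~\ref{fig:deriving-op}, now read as an inductive proof rather than a sketch.

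The main obstacle is the sub-lemma invoked in the second step (cited as Lemma~3 in App.~\ref{AppHelp} from Fig.~\ref{fig:deriving-op}): that for every $(\alpha,\gamma)\in Abs$ and every $k'\in\alpha(\Kpsi)$ one has $\pi_{k'}\bigl(\alpha(\overline{b}_0\,\dot\impparbinopname\,\overline{b}_1)\bigr)\sqsubseteq \pi_{k'}\bigl(\alpha(\overline{b}_0)\bigr)\impparbinopname \pi_{k'}\bigl(\alpha(\overline{b}_1)\bigr)$. This is proved by induction on the structure of $\alpha$: for projection it is an equality (components are copied verbatim); for join it reduces to $\bigsqcup_k(x_k\impparbinopname y_k)\sqsubseteq(\bigsqcup_k x_k)\impparbinopname(\bigsqcup_k y_k)$, which holds because $\impparbinopname$ is monotone in both arguments on $\Const$; and the sequential- and parallel-composition cases follow from the induction hypothesis, the extra $\sqcup$ introduced by $\otimes$ again being absorbed via monotonicity of $\impparbinopname$. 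This is exactly the step where equality degrades to $\pwsqsubseteq$ --- i.e.\ where the abstracted analysis genuinely over-approximates --- whereas everything else in the argument is equality- or monotonicity-driven bookkeeping.
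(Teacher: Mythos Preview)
Your proposal is correct and follows essentially the same approach as the paper: structural induction on $e$, with the helper facts for the $n$ and $\impbinop{e_0}{e_1}$ cases established by inner induction on the structure of $\alpha$ (these are the paper's Lemmas~\ref{lemma:1} and~\ref{lemma:2} in App.~\ref{AppHelp}), and the $\texttt{x}$ case handled via reductivity of $\alpha\circ\gamma$. Your treatment of the $\texttt{x}$ case and of the $\otimes$ sub-case of the key lemma is in fact slightly more explicit than the paper's, but the argument is the same.
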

\begin{proof}
By induction on the structure of expressions.

  \begin{description}
  \item[Case $\mathit{n}$:]
    \begin{align*}
   &  (\alpha \circ \overline{\mathcal{A'}}\sbr{\mathit{n}} \circ \gamma) (\overline{d})
\\ &=
      \alpha ( \overline{\mathcal{A'}}\sbr{\mathit{n}} ( \gamma (\overline{d}) ) )
    \tag{by def. of $\circ$}
\\ &=
      \alpha ( \prod_{k \in \Kpsi} \mathit{n} )
    \tag{by def. of $\overline{\mathcal{A'}}$ in Fig.~\ref{fig:liftedanalysis}}
\\ &=
      \prod_{k' \in \alpha(\Kpsi)} \mathit{n}
    \tag{by helper Lemma 1 in App.~\ref{AppHelp}}
\\ &=
      \overline{\mathcal{D'}}_{\alpha}\sbr{\mathit{n}}
    \notag
    \end{align*}

\item[Case $\texttt{x}$:]
    \begin{align*}
      & (\alpha \circ \overline{\mathcal{A'}}\sbr{\texttt{x}} \circ \gamma) (\overline{d})
\\ & =
        \alpha ( \overline{\mathcal{A'}}\sbr{\texttt{x}}
                ( \gamma (\overline{d}) ) )
        \tag{by def. of $\circ$}
\\ & =
        \alpha \big( \prod_{k \in \Kpsi}
                    \confproj{k}{ \gamma (\overline{d}) } (\texttt{x})  \big)
        \tag{by def. of $\overline{\mathcal{A'}}$ in
		 Fig.\,\ref{fig:liftedanalysis}}
\\ & =
        \prod_{k' \in \alpha(\Kpsi)}
                    \confproj{k'}{ \alpha ( \gamma (\overline{d}) )} (\texttt{x})
        \tag{by def. of ($*$)}
\\ & ~\dot\sqsubseteq~
        \prod_{k' \in \alpha(\Kpsi)}
                    \confproj{k'}{ \overline{d}} (\texttt{x})
        \tag{$\alpha \circ \gamma$ is reductive}
\\ & =
        \overline{\mathcal{D'}}_{\alpha}\sbr{\texttt{x}} \overline{d}
        \notag
    \end{align*}

\item[Case $\impbinop{e_0}{e_1}$:]
    \begin{align*}
      & (\alpha \circ \overline{\mathcal{A'}}\sbr{\impbinop{e_0}{e_1}} \circ \gamma) (\overline{d})
\\ & =
        \alpha ( \overline{\mathcal{A'}}\sbr{\impbinop{e_0}{e_1}}
                ( \gamma (\overline{d}) ) )
        \tag{by def. of $\circ$}
\\ & =
        \alpha \big( \prod_{k \in \Kpsi}
                    \confproj{k}{ \overline{\mathcal{A'}}\sbr{e_0} \gamma (\overline{d})} \,\impparbinopname\,
                    \confproj{k}{ \overline{\mathcal{A'}}\sbr{e_1} \gamma (\overline{d})} \big)
        \tag{by def. of $\overline{\mathcal{A'}}$ in
		 Fig.\,\ref{fig:liftedanalysis}}
\\ & =
        \alpha \big( \prod_{k \in \Kpsi}
                    \confproj{k}{ \overline{\mathcal{A'}}\sbr{e_0} \gamma (\overline{d}) \,\dot\impparbinopname\,
                     \overline{\mathcal{A'}}\sbr{e_1} \gamma (\overline{d})} \big)
        \tag{by def. of $\pi_{k}$ and $\dot\impparbinopname$}
\\ & =
        \prod_{k' \in \alpha(\Kpsi)}
                    \confproj{k'}{ \alpha \big( \overline{\mathcal{A'}}\sbr{e_0} \gamma (\overline{d}) \,\dot\impparbinopname\,
                     \overline{\mathcal{A'}}\sbr{e_1} \gamma (\overline{d}) \big) }
        \tag{by def. of ($*$)}
\\ & \dot\sqsubseteq
        \prod_{k' \in \alpha(\Kpsi)}
                    \confproj{k'}{ \alpha ( \overline{\mathcal{A'}}\sbr{e_0} \gamma (\overline{d}) ) \,\dot\impparbinopname\,
                     \alpha ( \overline{\mathcal{A'}}\sbr{e_1} \gamma (\overline{d}) )}
        \tag{by helper Lemma 3 in App.~\ref{AppHelp}}
\\ & \dot\sqsubseteq
        \prod_{k' \in \alpha(\Kpsi)}
                    \confproj{k'}{ \overline{\mathcal{D'}}_{\alpha}\sbr{e_0} \overline{d} \,\dot\impparbinopname\,
                     \overline{\mathcal{D'}}_{\alpha}\sbr{e_1} \overline{d}}
        \tag{by IH, twice}
\\ & =
        \prod_{k' \in \alpha(\Kpsi)}
                    \confproj{k'}{ \overline{\mathcal{D'}}_{\alpha}\sbr{e_0} \overline{d}} \,\impparbinopname\,
                    \confproj{k'}{ \overline{\mathcal{D'}}_{\alpha}\sbr{e_1} \overline{d}}
        \tag{by def. of $\pi_{k'}$ and $\dot\impparbinopname$}
\\ & =
        \overline{\mathcal{D'}}_{\alpha}\sbr{\impbinop{e_0}{e_1}} \overline{d}
        \notag
    \end{align*}
 \end{description}

\end{proof}

\begin{proposition} \label{proposition:2}
 $\forall s \in \mathit{Stm}, (\alpha,\gamma) \in Abs, \overline{d} \in \mathbb{A}^{\alpha(\Kpsi)}:~ \alpha \circ \overline{\mathcal{A}}\sbr{s} \circ \gamma (\overline{d})
      ~\dot\sqsubseteq~
    \overline{\mathcal{D}}_{\alpha}\sbr{s} \, \overline{d}$
\end{proposition}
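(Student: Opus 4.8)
The plan is to prove Proposition~\ref{proposition:2} by structural induction on $s$, in exactly the style already exhibited for the $\impifdef{(\theta)}{s}$ case in Fig.~\ref{fig:deriving-if} and for $\impbinop{e_0}{e_1}$ in Fig.~\ref{fig:deriving-op}. The inductive hypothesis asserts the claim for all structurally smaller statements, and we additionally invoke Proposition~\ref{proposition:1} wherever expressions occur. The base case $\impskip$ is immediate: both $\overline{\mathcal{A}}\sbr{\impskip}$ and $\overline{\mathcal{D}}_{\alpha}\sbr{\impskip}$ are the identity, so $\alpha\circ\mathrm{id}\circ\gamma = \alpha\circ\gamma \pwsqsubseteq \mathrm{id}$ by reductivity of $\alpha\circ\gamma$. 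For $\impassign{\texttt{x}}{e}$ we expand $\overline{\mathcal{A}}$, push $\alpha$ inside the $\Kpsi$-indexed product using equation $(*)$, note that $\alpha$ commutes (up to $\pwsqsubseteq$) with the pointwise store update, then apply Proposition~\ref{proposition:1} to the component computing $\texttt{x}$ and reductivity of $\alpha\circ\gamma$ to the unchanged components. For $\impseq{s_0}{s_1}$ we insert $\gamma\circ\alpha$ between the two transfer functions, using extensivity of $\gamma\circ\alpha$ together with monotonicity of $\overline{\mathcal{A}}\sbr{s_1}$ and $\alpha$, then apply the IH to $s_0$ and to $s_1$ and monotonicity of $\overline{\mathcal{D}}_{\alpha}\sbr{s_1}$. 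For $\impif{e}{s_0}{s_1}$ we use that $\alpha$ is a complete join morphism to distribute $\alpha$ over $\dot\sqcup$, apply the IH to each branch, and close by monotonicity of $\dot\sqcup$.

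For $\impwhile{e}{s}$ the concrete transfer function is the least fixed point of the monotone functional $F = \func{\varsignfuncname}{\func{\varsignstore}{\varsignstore \,\dot\sqcup\, \varsignfunc{\overline{\mathcal{A}}\sbr{s}\,\varsignstore}}}$ over the complete lattice $\Aa^{\Kpsi}$. We invoke the fixed-point transfer theorem of \cite{CousotCousot79}: $\alpha(\mathrm{lfp}\,F) \pwsqsubseteq \mathrm{lfp}\,F^{\#}$, where $F^{\#}$ is the functional of the $\overline{\mathcal{D}}_{\alpha}\sbr{\impwhile{e}{s}}$ rule in Fig.~\ref{fig:absanalysis}. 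The obligation to discharge is the side condition $\alpha\circ F(\gamma(\cdot))\circ\gamma \pwsqsubseteq F^{\#}$; unfolding the two functionals, the right-hand $\dot\sqcup$ is split via complete-join-morphism of $\alpha$, the identity summand is handled by reductivity of $\alpha\circ\gamma$, and the $\alpha\circ\overline{\mathcal{A}}\sbr{s}\circ\gamma$ summand by the IH for $s$. Monotonicity of $F$ is from \cite{MBW} and of $F^{\#}$ from App.~\ref{App3}, so the fixed points exist and the transfer applies.

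The crux is the $\impifdef{(\theta)}{s}$ case, which is precisely the calculation of Fig.~\ref{fig:deriving-if}: after expanding $\overline{\mathcal{A}}\sbr{\impifdef{(\theta)}{s}}$ we apply Lemma~\ref{lemma:12} of App.~\ref{AppHelp} to move $\alpha$ past the $k$-indexed, two-way case split ($k\models\theta$ versus $k\not\models\theta$), obtaining a $k'$-indexed three-way case split over $\alpha(\Kpsi)$ with branches $k'\models\theta$, $\mathrm{sat}(k'\land\theta)\land\mathrm{sat}(k'\land\neg\theta)$, and $k'\models\neg\theta$; then the IH applied to $s$ together with reductivity of $\alpha\circ\gamma$ closes the expression into $\overline{\mathcal{D}}_{\alpha}\sbr{\impifdef{(\theta)}{s}}\,\overline{d}$. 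Notably this derivation does not inspect the shape of $\alpha$, so the one calculation covers all abstractions; all abstraction-specific reasoning is confined to Lemma~\ref{lemma:12}, proved by an inner induction on the four constructors of $Abs$ (and relying on Theorem~\ref{thm:fproj-expansion} to dispense with $\fignoreasym$).

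The main obstacle is exactly this Lemma~\ref{lemma:12}: pushing an arbitrary $\alpha\in Abs$ past a configuration-wise case split requires a case analysis on $\joina{}$, $\proja{\phi}$, sequential composition, and parallel composition, and careful bookkeeping of how each constructor rewrites the set of valid configurations --- in particular, in the $\joina{}$ case a single abstract configuration $\bigvee_{k\in\Kk}k$ may satisfy neither $\theta$ nor $\neg\theta$, which is what forces the extra ``both satisfiable'' branch and weakens the equality obtained in the concrete case to $\pwsqsubseteq$. The $\impwhile{e}{s}$ case is the second most delicate point, since the argument must be phrased so that the fixed-point transfer theorem applies verbatim, i.e.\ identifying $F^{\#}$ with the functional of the abstracted while rule, checking its monotonicity, and reducing its side condition exactly to the inductive hypothesis plus reductivity; the remaining cases are routine algebraic manipulation with the standard Galois-connection properties (extensivity, reductivity, complete-join-morphism) recalled in App.~\ref{App:oper}.
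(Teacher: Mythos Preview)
Your proposal is correct and follows the paper's own proof essentially step for step: structural induction on $s$, with Lemma~\ref{lemma:12} handling the $\texttt{\#if}$ case and the fixed-point transfer theorem handling $\texttt{while}$. The one place where the paper is more explicit than your sketch is the $\texttt{while}$ case: since $\mathrm{lfp}\,f$ lives in the function space $\Aa^{\Kpsi}\!\to\Aa^{\Kpsi}$ rather than in $\Aa^{\Kpsi}$, writing ``$\alpha(\mathrm{lfp}\,F)\pwsqsubseteq\mathrm{lfp}\,F^{\#}$'' does not typecheck as stated; the paper resolves this by introducing the higher-order Galois connection $\hoalpha(\Phi)=\alpha\circ\Phi\circ\gamma$, $\hogamma(\Phi')=\gamma\circ\Phi'\circ\alpha$ between the two function spaces and applying FPT at that level, after which evaluating at $\overline{d}$ yields exactly the claim. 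Your side-condition calculation (CJM of $\alpha$ on the $\dot\sqcup$, reductivity on the identity summand, IH on the $\overline{\mathcal{A}}\sbr{s}$ summand) is precisely the paper's derivation of $\hoalpha\circ f\circ\hogamma\;\dot\sqsubseteq\;F$, so only this packaging is missing.
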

\begin{proof}
By induction on the structure of statements.
First, we define
$\overline{d}[\texttt{x} \dot\mapsto \overline{v}] = \prod_{k' \in \alpha(\Kpsi)}
                    \confproj{k'}{\overline{d}} [\texttt{x} \mapsto \confproj{k'}{\overline{v}} ]$,
for all $\overline{d} \in \mathbb{A}^{\alpha(\Kpsi)}$ and $\overline{v} \in Const^{\alpha(\Kpsi)}$.
Thus, $\overline{d}[\texttt{x} \dot\mapsto \overline{v}]$
is a tuple that is as $\overline{d}$ except that in each component of $\overline{d}$
the variable $\texttt{x}$ is mapped to the corresponding component of the tuple $\overline{v}$.

  \begin{description}
  \item[Case $\impskip$:]
    \begin{align*}
   &  (\alpha \circ \overline{\mathcal{A}}\sbr{\impskip} \circ \gamma) (\overline{d})
\\ &=
      \alpha ( \overline{\mathcal{A}}\sbr{\impskip} ( \gamma (\overline{d}) ) )
    \tag{by def. of $\circ$}
\\ &=
      \alpha ( \gamma (\overline{d}) )
    \tag{by def. of $\overline{\mathcal{A}}$ in Fig.~\ref{fig:liftedanalysis}}
\\ & ~\dot\sqsubseteq~
      \overline{d}
    \tag{$\alpha \circ \gamma$ is reductive}
\\ &=
      \overline{\mathcal{D}}_{\alpha}\sbr{\impskip} \overline{d}
    \notag
    \end{align*}

\item[Case $\impassign{\texttt{x}}{e}$:]
    \begin{align*}
      & (\alpha \circ \overline{\mathcal{A}}\sbr{\impassign{\texttt{x}}{e}} \circ \gamma) (\overline{d})
\\ & =
        \alpha \big( \prod_{k \in \Kpsi}
                    \confproj{k}{ \gamma (\overline{d}) } [\texttt{x} \mapsto \confproj{k}{ \overline{\mathcal{A'}}\sbr{e} \gamma (\overline{d}) }]  \big)
        \tag{by def. of $\overline{\mathcal{A}}$ in
		 Fig.\,\ref{fig:liftedanalysis}}
\\ & =
        \alpha \big( \prod_{k \in \Kpsi}
                    \confproj{k}{ \gamma (\overline{d}) [\texttt{x} \dot\mapsto \overline{\mathcal{A'}}\sbr{e} \gamma (\overline{d}) ] }   \big)
        \tag{by def. of $\dot\mapsto$}
\\ & =
        \prod_{k' \in \alpha(\Kpsi)}
                    \confproj{k'}{ \alpha \big( \gamma (\overline{d}) [\texttt{x} \dot\mapsto \overline{\mathcal{A'}}\sbr{e} \gamma (\overline{d}) ] \big)}
        \tag{by def. of ($*$)}
\\ & =
        \prod_{k' \in \alpha(\Kpsi)}
                    \confproj{k'}{ \alpha ( \gamma (\overline{d}) ) [\texttt{x} \dot\mapsto \alpha ( \overline{\mathcal{A'}}\sbr{e} \gamma (\overline{d}) ) ] }
        \tag{by helper Lemma 4 in App.~\ref{AppHelp}}
\\ & \,\dot\sqsubseteq\,
       \prod_{k' \in \alpha(\Kpsi)}
                    \confproj{k'}{ \overline{d} [\texttt{x} \dot\mapsto \overline{\mathcal{D'}}_{\alpha}\sbr{e} \overline{d} ) ]  }
        \tag{IH, and $\alpha \circ \gamma$ is reductive}
\\ & =
        \prod_{k' \in \alpha(\Kpsi)}
                    \confproj{k'}{ \overline{d} } [\texttt{x} \mapsto \confproj{k'}{\overline{\mathcal{D'}}_{\alpha}\sbr{e} \overline{d} } ]
        \tag{by def. of $\dot\mapsto$}
\\ & =
        \overline{\mathcal{D}}_{\alpha}\sbr{\impassign{\texttt{x}}{e}} \overline{d}
        \notag
    \end{align*}

\item[Case $\impif{e}{s_0}{s_1}$:]
   \begin{align}
     & (\alpha \circ \overline{\mathcal{A}}\sbr{\impif{e}{s_0}{s_1}} \circ \gamma) (\overline{d})
        \notag
\\ & =
        \alpha (
                  \overline{\mathcal{A}}\sbr{\impif{e}{s_0}{s_1}}
                  ( \gamma (\overline{d}) ) )
      \tag{by def. of $\circ$}
\\ & =
        \alpha \big(  \overline{\mathcal{A}}\sbr{s_0} \gamma (\overline{d}) \ \dot \sqcup \ \overline{\mathcal{A}}\sbr{s_1} \gamma (\overline{d}) \big)
      \tag{by def. of $\overline{\mathcal{A}}$ in
		 Fig.\,\ref{fig:liftedanalysis}}
\\ & =
         \alpha \big(  \overline{\mathcal{A}}\sbr{s_0} \gamma (\overline{d}) \big) \ \dot \sqcup \ \alpha \big( \overline{\mathcal{A}}\sbr{s_1} \gamma (\overline{d}) \big)
      \tag{by $\alpha$ is a CJM }
\\ & \,\dot\sqsubseteq\,
          \overline{\mathcal{D}}_{\alpha}\sbr{s_0} \overline{d} \ \dot \sqcup \  \overline{\mathcal{D}}_{\alpha}\sbr{s_1} \overline{d}
      \tag{by IH, twice}
\\ & =
    \overline{\mathcal{D}}_{\alpha}\sbr{\impif{e}{s_0}{s_1}}\,\overline{d}
        \notag
   \end{align}

\item[Case $\impifdef{(\theta)}{s}$: ]
   \begin{align}
     & (\alpha \circ \overline{\mathcal{A}}\sbr{\impifdef{(\theta)}{s}} \circ \gamma) (\overline{d})
        \notag
\\ & =
        \alpha (
                  \overline{\mathcal{A}}\sbr{\impifdef{(\theta)}{s}}
                  ( \gamma (\overline{d}) ) )
      \tag{by def. of $\circ$}
\\ & =
        \alpha \big(  \prod_{k \in \Kpsi}
         \left\{ \begin{array}{ll} \confproj{k}{ \overline{\mathcal{A}}\sbr{s} \gamma (\overline{d})} & \quad \textrm{if} \ k \models \theta \\[1.5ex]
		\confproj{k}{\gamma (\overline{d})} & \quad \textrm{if} \  k \not \models \theta \end{array} \right. \big)
      \tag{by def. of $\overline{\mathcal{A}}$ in
		 Fig.\,\ref{fig:liftedanalysis}}
\\ & \dot\sqsubseteq
          \prod_{k' \in \alpha(\Kpsi)}
         \left\{ \begin{array}{ll} \confproj{k'}{ \alpha (\overline{\mathcal{A}}\sbr{s} \gamma (\overline{d}) )} & \quad \textrm{if} \ k' \models \theta \\[1.5ex]
         \confproj{k'}{\alpha(\gamma (\overline{d}))} \sqcup \confproj{k'}{\alpha (\overline{\mathcal{A}}\sbr{s} \gamma (\overline{d}) )} & \quad \textrm{if sat} (k' \!\land\! \theta) \land \textrm{sat} (k' \!\land\! \neg \theta) \\[1.5ex]
		\confproj{k'}{\alpha(\gamma (\overline{d}))} & \quad \textrm{if} \  k' \models \neg \theta \end{array} \right.
      \tag{by helper Lemma~\ref{lemma:12} in App.~\ref{AppHelp}}
\\ & \dot\sqsubseteq
          \prod_{k' \in \alpha(\Kpsi)}
         \left\{ \begin{array}{ll} \confproj{k'}{ \overline{\mathcal{D}}_{\alpha}\sbr{s} \overline{d}} & \quad \textrm{if} \ k' \models \theta \\[1.5ex]
         \confproj{k'}{ \overline{d}} \sqcup \confproj{k'}{ \overline{\mathcal{D}}_{\alpha}\sbr{s} \overline{d}} & \quad \textrm{if sat} (k' \!\land\! \theta) \land \textrm{sat} (k' \!\land\! \neg \theta) \\[1.5ex]
		\confproj{k'}{\overline{d}} & \quad \textrm{if} \  k' \models \neg \theta \end{array} \right.
      \tag{by IH, and $\alpha \circ \gamma$ is reductive}
\\ & =
    \overline{\mathcal{D}}_{\alpha}\sbr{\impifdef{(\theta)}{s}}\,\overline{d}
        \notag
   \end{align}

\item[Case ${\impwhile{e}{s}}$:]
We introduce a higher-order Galois connection between
$\mathbb{A}^{\Kpsi} \rightarrow \mathbb{A}^{\Kpsi}$ and $\mathbb{A}^{\alpha(\Kpsi)} \rightarrow \mathbb{A}^{\alpha(\Kpsi)}$
defined as:
\begin{align*}
  \hoalpha(\overline{\Phi}) &= \func{\overline{d}}
                                        {\alpha (\overline{\Phi}(\gamma (\overline{d}))) }, \
  \textrm{for} \ \overline{\Phi} :  \mathbb{A}^{\Kpsi} \rightarrow \mathbb{A}^{\Kpsi}
\\
  \hogamma(\overline{\Phi'}) &= \func{\overline{a}}
                                        {\gamma(\overline{\Phi'}(\alpha(\overline{a}))) }, \
  \textrm{for} \ \overline{\Phi'} :  \mathbb{A}^{\alpha(\Kpsi)} \rightarrow \mathbb{A}^{\alpha(\Kpsi)}
\end{align*}
Let $f=\func{\overline{\Phi}}{\func{\overline{a}}{}}
          {\overline{a}} \, \dot\sqcup \, \overline{\Phi} (\overline{\mathcal{A}}\sbr{s} \overline{a})$
be the functional in $\overline{\mathcal{A}}\sbr{\impwhile{e}{s}}$. We calculate
an over-approximation of $\hoalpha \circ f \circ \hogamma$, denoted as $F$, and then apply
the fixed point transfer (FPT) theorem ~\cite{CousotCousot79}  on the result.
Given a monotone function $\overline{\Phi'}$, we have:
\begin{align*}
   &
             (\hoalpha \circ f \circ \hogamma) \overline{\Phi'}
\\ & =
             \hoalpha (f (\func{\overline{a}}
                               {\gamma(\overline{\Phi'}(\alpha(\overline{a})))}))
             \tag{by def. of $\circ$ and $\hogamma$}
\\ & =
             \hoalpha (
              \func{\overline{a}}{}
                    \overline{a} \dot\sqcup
                          {\gamma(\overline{\Phi'}(\alpha(\overline{\mathcal{A}}\sbr{s}\overline{a})))} )
             \tag{by def. of $f$ and $\beta$-reduction}
\\ & ~=~
             \func{\overline{d}}{}
                    \alpha \big( \gamma(\overline{d}) \dot\sqcup
                          {\gamma(\overline{\Phi'}(\alpha(\overline{\mathcal{A}}\sbr{s}\gamma(\overline{d}))))}
             \tag{by def. of $\hoalpha$}
\\ & =
           \func{\overline{d}}{}
                    \alpha( \gamma(\overline{d}) ) \dot\sqcup
                         \alpha ( \gamma(\overline{\Phi'}(\alpha(\overline{\mathcal{A}}\sbr{s}\gamma(\overline{d})))))
             \tag{by $\alpha$ is a CJM}
\\ & \,\dot\sqsubseteq\,
              \func{\overline{d}}{}
                    \overline{d} \,\dot\sqcup\,
                         \overline{\Phi'}(\overline{\mathcal{D}}_{\alpha}\sbr{s}\overline{d})
             \tag{by IH; and $\alpha \circ \gamma$ is reductive, twice}
\\ & =      F \overline{\Phi'}
            \notag
\end{align*}
Thus, we obtain $F = \func{\overline{\Phi'}}{\func{\overline{d}}{}}
                    \overline{d} \dot\sqcup
                         \overline{\Phi'}(\overline{\mathcal{D}}_{\alpha}\sbr{s}\overline{d})$.
Since $\overline{\Phi'}$ and $\overline{\mathcal{D}}_{\alpha}$ are monotone, $F$ is also monotone.
We now have:
    \begin{align}
      & (\alpha \circ \overline{\mathcal{A}}\sbr{\impwhile{e}{s}} \circ \gamma) (\overline{d})
        \notag
\\ & =
        \alpha ( \mathrm{lfp}\, f  (\gamma (\overline{d})))
        \tag{by def. of $\overline{\mathcal{A}}$ in Fig.~\ref{fig:liftedanalysis}}
\\ & =
         \hoalpha
         (\mathrm{lfp} \, f)
         (\overline{d})
         \tag{by def. of $\hoalpha$}
\\ & ~\dot\sqsubseteq~
         (\mathrm{lfp} F)
         \overline{d}
         \tag{by fixed point transfer (FPT) theorem}
\\ & ~=~
         \overline{\mathcal{D}}_{\alpha}\sbr{\impwhile{e}{s}} \overline{d}
         \tag{by def. of $\overline{\mathcal{D}}$ in Fig.~\ref{fig:absanalysis}}
    \end{align}
\end{description}
\end{proof}

 \section{Appendix: Helper Lemmas} \label{AppHelp}

\begin{lemma} \label{lemma:1}
 $\forall \alpha \in Abs:~ \alpha (\prod_{k \in \Kk} n ) = \prod_{k' \in \alpha(\Kk)} n$
\end{lemma}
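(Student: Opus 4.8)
The plan is to prove the lemma by structural induction on the grammar of $Abs$, i.e.\ on the four constructors $\joina{}$, $\proja{\phi}$, sequential composition $\alpha_2\circ\alpha_1$, and parallel composition $\alpha_1\otimes\alpha_2$. The derived abstractions ($\joina{\phi}$, $\fignorea{A}$, $\fproja{\cdot}$) need not be treated separately, since by the theorems of Section~\ref{sec:basic-abstractions} they unfold into compositions of these four. One point to get right up front: the induction hypothesis should be stated uniformly over \emph{all} configuration sets $\Kk$, not just $\Kpsi$, because the sequential-composition step applies it at the set $\alpha_1(\Kk)$ rather than at the original one.

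For the base cases I would argue as follows. If $\alpha = \joina{}$, expanding the definition gives $\joina{}(\prod_{k\in\Kk}n) = \bigsqcup_{k\in\Kk}\pi_k(\prod_{k\in\Kk}n) = \bigsqcup_{k\in\Kk}n = n$, the last equality by idempotency of $\sqcup$; since $\joina{}(\Kk)$ is a singleton, $\prod_{k'\in\joina{}(\Kk)}n$ is the one-tuple $(n)$, and the two sides agree. If $\alpha = \proja{\phi}$, the claim is immediate from the definition: $\proja{\phi}(\prod_{k\in\Kk}n) = \prod_{k\in\Kk,\,k\models\phi}n = \prod_{k'\in\proja{\phi}(\Kk)}n$, using $\proja{\phi}(\Kk) = \{k\in\Kk\mid k\models\phi\}$.

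For the inductive steps I would chain the hypothesis. For $\alpha_2\circ\alpha_1$: $(\alpha_2\circ\alpha_1)(\prod_{k\in\Kk}n) = \alpha_2(\alpha_1(\prod_{k\in\Kk}n)) = \alpha_2(\prod_{k'\in\alpha_1(\Kk)}n) = \prod_{k''\in\alpha_2(\alpha_1(\Kk))}n$, the first use of the hypothesis being at $\Kk$ and the second at $\alpha_1(\Kk)$, and this equals $\prod_{k''\in(\alpha_2\circ\alpha_1)(\Kk)}n$ by the definition of $\circ$ on configuration sets. For $\alpha_1\otimes\alpha_2$: the hypothesis gives $(\alpha_1\otimes\alpha_2)(\prod_{k\in\Kk}n) = (\prod_{k'\in\alpha_1(\Kk)}n)\times(\prod_{k'\in\alpha_2(\Kk)}n)$ by~\eqref{eq:alpha-3}; unfolding $\times$, every component of this product is $n$ outside the overlap $\alpha_1(\Kk)\cap\alpha_2(\Kk)$ and $n\sqcup n = n$ on it, so it equals $\prod_{k'\in\alpha_1(\Kk)\cup\alpha_2(\Kk)}n = \prod_{k'\in(\alpha_1\otimes\alpha_2)(\Kk)}n$.

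I do not expect a genuine obstacle: this is a routine induction whose content is just that every constructor either restricts the index set (projection), collapses it to a singleton while taking a join (join), or takes unions/compositions of index sets — none of which disturbs a constant tuple. The only care needed is the bookkeeping of the configuration set in the sequential case (hence quantifying the hypothesis over all $\Kk$), and the implicit convention that $\Kk$ is non-empty, which is what makes the $\joina{}$ base case evaluate to $n$ rather than to $\bot$.
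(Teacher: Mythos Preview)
Your proposal is correct and follows essentially the same route as the paper: structural induction on the four constructors of $Abs$, unfolding the definitions of $\joina{}$, $\proja{\phi}$, $\otimes$, and $\circ$ in turn. Your extra remarks about quantifying the hypothesis over arbitrary $\Kk$ (needed for the sequential case) and about non-emptiness of $\Kk$ in the join case are apt observations that the paper leaves implicit, but the argument itself is the same.
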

\begin{proof}
By induction on the structure of $\alpha$.
  \begin{description}
  \item[Case $\joina{} $:]
    \begin{align*}
   &  \joina{} (\prod_{k \in \Kk} n )
\\ &=
      (\textstyle\bigsqcup_{k \in \Kk} n)
    \tag{by def. of $\joina{}$}
\\ &=
      \prod_{k' \in \joina{}(\Kk)} n
    \tag{by def. of $\joina{}(\Kk)$}
    \end{align*}

      \item[Case $\proja{\phi}$:]
    \begin{align*}
   &  \proja{\phi} (\prod_{k \in \Kk} n )
\\ &=
      \prod_{\{k \in \Kk \mid k \models \phi \}} n
    \tag{by def. of $\proja{\phi}$}
\\ &=
      \prod_{k' \in \proja{\phi}(\Kk)} n
    \tag{by def. of $\proja{\phi}(\Kk)$}
    \end{align*}

      \item[Case $\alpha_{1} \otimes \alpha_{2}$:]
    \begin{align*}
   &  \alpha_{1} \otimes \alpha_{2} (\prod_{k \in \Kk} n )
\\ &=
      \alpha_{1} (\prod_{k \in \Kk} n ) \times \alpha_{2} (\prod_{k \in \Kk} n )
    \tag{by def. of $\alpha_{1} \otimes \alpha_{2}$}
\\ &=
      ( \prod_{k' \in \alpha_1(\Kk)} n ) \times ( \prod_{k' \in \alpha_2(\Kk)} n )
    \tag{by IH, twice}
\\ &=
      \prod_{k' \in \alpha_1 \cup \alpha_2(\Kk)} n
    \tag{by def. of $\overline{x_1} \times \overline{x_2}$}
\\ &=
      \prod_{k' \in \alpha_1 \otimes \alpha_2(\Kk)} n
    \tag{by def. of $\alpha_1 \otimes \alpha_2(\Kk)$}
    \end{align*}

      \item[Case $\alpha_{2} \circ \alpha_{1}$:]
    \begin{align*}
   &  \alpha_{2} \circ \alpha_{1} (\prod_{k \in \Kk} n )
\\ &=
      \alpha_{2} ( \alpha_{1} (\prod_{k \in \Kk} n ) )
    \tag{by def. of $\circ$}
\\ &=
      \alpha_2 ( \prod_{k' \in \alpha_1(\Kk)} n )
    \tag{by IH}
\\ &=
       \prod_{k'' \in \alpha_2(\alpha_1(\Kk))} n
    \tag{by IH}
\\ &=
      \prod_{k'' \in \alpha_2 \circ \alpha_1(\Kk)} n
    \tag{by def. of $\alpha_2 \circ \alpha_1(\Kk)$}
    \end{align*}
\end{description}
\end{proof}

\begin{lemma} \label{lemma:12}
\[
\begin{array}{l}
\forall \alpha \in Abs, \psi, \theta \in FeatExp, \overline{a_1}, \overline{a_2} \in \mathbb{A}^{\Kk}:~ \\
\quad    \alpha \big( \prod_{k \in \Kk}
        \left\{ \begin{array}{ll} \confproj{k}{ \overline{a_1}} & \quad \textrm{if} \ k \models \theta \\[1.5ex]
		\confproj{k}{\overline{a_2}} & \quad \textrm{if} \  k \not \models \theta \end{array} \right. \big) \\
\qquad \qquad     \dot\sqsubseteq \prod_{k' \in \alpha(\Kk)}
    \left\{ \begin{array}{ll} \confproj{k'}{ \alpha(\overline{a_1})} & \quad \textrm{if} \ k' \models \theta \\[1.5ex]
        \confproj{k'}{ \alpha(\overline{a_1})} \sqcup \confproj{k'}{\alpha(\overline{a_2})} & \quad \textrm{if sat} (k' \!\land\! \theta) \land \textrm{sat} (k' \!\land\! \neg \theta) \\[1.5ex]
		\confproj{k'}{ \alpha(\overline{a_2})} & \quad \textrm{if} \ k' \models \neg \theta \end{array} \right.
\end{array}
\]
\end{lemma}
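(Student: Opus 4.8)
The plan is to proceed by structural induction on $\alpha\in Abs$, but a naive induction breaks down at sequential composition: once the hypothesis for $\alpha_1$ is applied, the value fed to $\alpha_2$ no longer has the plain two-branch shape of the tuple inside $\alpha(\cdot)$, so the statement has to be generalised first. I would generalise by isolating a \emph{masking} operator: for a set of (satisfiable) configuration formulas $\mathcal K$ and $\rho\in\FeatExp$, let $M_\rho^{\mathcal K}:\Aa^{\mathcal K}\to\Aa^{\mathcal K}$ be given by $\confproj{k}{M_\rho^{\mathcal K}(\overline a)}=\confproj{k}{\overline a}$ when $\textrm{sat}(k\land\rho)$ and $\bot$ otherwise. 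Because every $k\in\Kk$ is a complete valuation, $\textrm{sat}(k\land\theta)\iff k\entails\theta$ and the guards for $\theta$ and $\neg\theta$ are then complementary; hence the tuple inside $\alpha(\cdot)$ on the left-hand side of the lemma is exactly $M_\theta^{\Kk}(\overline{a_1})\pwsqcup M_{\neg\theta}^{\Kk}(\overline{a_2})$.

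The core step is the auxiliary claim: for every $(\alpha,\gamma)\in Abs$ with $\alpha:\Aa^{\mathcal K}\to\Aa^{\alpha(\mathcal K)}$, every $\rho$ and every $\overline a$,
\[
  \alpha\big(M_\rho^{\mathcal K}(\overline a)\big)\ \pwsqsubseteq\ M_\rho^{\alpha(\mathcal K)}\big(\alpha(\overline a)\big).
\]
I would prove this by induction on $\alpha$. For $\joina{}$, where $\alpha(\mathcal K)=\{\bigvee_{k\in\mathcal K}k\}$, the left side equals $\bigsqcup_{k\in\mathcal K,\ \textrm{sat}(k\land\rho)}\confproj{k}{\overline a}$ since $\alpha$ is a complete join morphism; this is $\sqsubseteq\bigsqcup_{k\in\mathcal K}\confproj{k}{\overline a}$ when $\textrm{sat}(\bigvee_{k\in\mathcal K}k\land\rho)$, and $\bot$ otherwise, which is precisely the right side. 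For $\proja\phi$ the two sides coincide componentwise, as projection merely discards components while $M_\rho$ acts componentwise. For $\alpha_1\otimes\alpha_2$ one observes that $M_\rho$ commutes with the tensor merge $\times$ and then combines the two hypotheses with monotonicity of $\sqcup$. For $\alpha_2\circ\alpha_1$ one chains the hypothesis for $\alpha_1$, monotonicity of $\alpha_2$, and the hypothesis for $\alpha_2$ applied over the domain $\Aa^{\alpha_1(\mathcal K)}$.

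Given the claim, the lemma follows: $\alpha$ distributes over $\pwsqcup$ (complete join morphism), so applying the claim to each summand and using monotonicity of $\pwsqcup$ yields
\[
  \alpha\big(M_\theta^{\Kk}(\overline{a_1})\pwsqcup M_{\neg\theta}^{\Kk}(\overline{a_2})\big)\ \pwsqsubseteq\ M_\theta^{\alpha(\Kk)}(\alpha(\overline{a_1}))\ \pwsqcup\ M_{\neg\theta}^{\alpha(\Kk)}(\alpha(\overline{a_2})).
\]
A componentwise case split on each $k'\in\alpha(\Kk)$ then identifies this right-hand side with the bound in the statement: when $k'\entails\theta$ we have $\neg\textrm{sat}(k'\land\neg\theta)$, so the second summand is $\bot$ at $k'$ and only $\confproj{k'}{\alpha(\overline{a_1})}$ survives; symmetrically when $k'\entails\neg\theta$; and otherwise $\textrm{sat}(k'\land\theta)\land\textrm{sat}(k'\land\neg\theta)$ and both summands contribute, giving the join. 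These three alternatives are exhaustive and mutually exclusive precisely because every $k'\in\alpha(\Kk)$ is satisfiable, so they line up exactly with the three branches written in the lemma.

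The main obstacle, as indicated, is the sequential-composition case, and the whole point of formulating the auxiliary claim with a $\textrm{sat}$-based (rather than $\entails$-based) mask over an \emph{arbitrary} configuration set $\mathcal K$ is to make it self-reproducing along $\alpha_2\circ\alpha_1$; with that in place the induction runs uniformly through all four abstraction constructors, and the rest is the routine componentwise bookkeeping sketched above.
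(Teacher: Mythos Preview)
Your proposal is correct and takes a genuinely different route from the paper. The paper proves the lemma by a direct structural induction on $\alpha$; the awkward case is sequential composition, where after applying the hypothesis for $\alpha_1$ one is left with a three-branch tuple over $\alpha_1(\Kk)$ rather than the two-branch shape the lemma assumes. The paper handles this by rewriting the three-branch tuple as the join of two two-branch tuples (one keyed on $k'\models\theta$ versus $k'\not\models\theta$, the other on $k'\not\models\neg\theta$ versus $k'\models\neg\theta$), pushing $\alpha_2$ through the join via CJM, and then invoking the induction hypothesis for $\alpha_2$ twice. Your approach instead isolates the masking operator $M_\rho^{\mathcal K}$ and proves the single auxiliary inequality $\alpha(M_\rho(\overline a))\,\dot\sqsubseteq\,M_\rho(\alpha(\overline a))$ over an arbitrary index set; because this statement already uses $\mathrm{sat}$-based guards, it is self-reproducing through $\alpha_2\circ\alpha_1$ without any ad~hoc rewriting, and the four constructor cases become uniform one-liners. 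The final derivation of the lemma from the claim (via CJM and the componentwise case split on $k'$) is exactly right. Net effect: the paper's argument is more self-contained but pays for it with the intricate splitting trick at composition; your factoring through $M_\rho$ is a cleaner decomposition that makes the induction entirely mechanical, at the cost of introducing a small piece of auxiliary machinery.
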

\begin{proof}
By induction on the structure of $\alpha$.
  \begin{description}
  \item[Case $\joina{}$:]
    \begin{align*}
   &  \joina{} \big( \prod_{k \in \Kk}
        \left\{ \begin{array}{ll} \confproj{k}{ \overline{a_1}} & \quad \textrm{if} \ k \models \theta \\[1.5ex]
		\confproj{k}{\overline{a_2}} & \quad \textrm{if} \  k \not \models \theta \end{array} \right. \big)
\\ &=
      \big( \textstyle\bigsqcup_{ k \in \Kk}
        \left\{ \begin{array}{ll} \confproj{k}{ \overline{a_1}} & \quad \textrm{if} \ k \models \theta \\[1.5ex]
		\confproj{k}{\overline{a_2}} & \quad \textrm{if} \  k \not \models \theta \end{array} \right. \big)
    \tag{by def. of $\joina{}$}
\\ &=
      \big( \left\{ \begin{array}{l} \textstyle\bigsqcup_{ k \in \Kk \}} \confproj{k}{ \overline{a_1}} \qquad \quad \textrm{if} \ \bigvee_{ k \in \Kk } k \models \theta \\[1.5ex]
        \textstyle\bigsqcup_{\{ k \in \Kk \mid k \models \theta \}} \confproj{k}{ \overline{a_1}} \sqcup  \textstyle\bigsqcup_{\{ k \in \Kk \mid k \models \neg \theta\}} \confproj{k}{ \overline{a_2}} \\
        \qquad \qquad \qquad \qquad \qquad \textrm{if sat} \ ({\scriptstyle\bigvee_{ k \in \Kk }} k \!\land\! \theta) \, \land \, \textrm{ sat} ({\scriptstyle\bigvee_{k \in \Kk }} k \!\land\! \neg \theta) \\[1.5ex]
		\textstyle\bigsqcup_{ k \in \Kk } \confproj{k}{ \overline{a_2}} \qquad \quad \textrm{if} \ \bigvee_{k \in \Kk } k \models \neg \theta \end{array} \right. \big)
    \tag{by def. of $\pi_{k}$ and $\sqcup$}
\\ & \dot\sqsubseteq
      \big( \left\{ \begin{array}{l} \textstyle\bigsqcup_{k \in \Kk } \confproj{k}{ \overline{a_1}} \qquad \quad \textrm{if} \ \bigvee_{ k \in \Kk } k \models \theta \\[1.5ex]
        \textstyle\bigsqcup_{ k \in \Kk } \confproj{k}{ \overline{a_1}} \sqcup  \textstyle\bigsqcup_{ k \in \Kk } \confproj{k}{ \overline{a_2}} \\
        \qquad \qquad \qquad \qquad \qquad \textrm{if sat} \ ({\scriptstyle\bigvee_{ k \in \Kk }} k \!\land\! \theta) \, \land \, \textrm{ sat} ({\scriptstyle\bigvee_{ k \in \Kk }} k \!\land\! \neg \theta) \\[1.5ex]
		\textstyle\bigsqcup_{ k \in \Kk } \confproj{k}{ \overline{a_2}} \qquad \quad \textrm{if} \ \bigvee_{ k \in \Kk} k \models \neg \theta \end{array} \right. \big)
    \tag{by def. of $\pi_{k}$ and $\sqcup$}
\\ &=
      \big( \left\{ \begin{array}{ll} \joina{} ( \overline{a_1} ) & \quad \textrm{if} \ \bigvee_{ k \in \Kk } k \models \theta \\[1.5ex]
        \joina{} ( \overline{a_1} ) \sqcup  \joina{} ( \overline{a_2} )
        & \quad \textrm{if sat} \ ({\scriptstyle\bigvee_{ k \in \Kk }} k \!\land\! \theta) \, \land \, \textrm{ sat} ({\scriptstyle\bigvee_{ k \in \Kk }} k \!\land\! \neg \theta) \\[1.5ex]
		\joina{} ( \overline{a_2} ) & \quad \textrm{if} \ \bigvee_{ k \in \Kk } k \models \neg \theta \end{array} \right. \big)
    \tag{by def. of $\joina{}$}
    \end{align*}
    We provide an example confirming that the above relation is not equality.
    Let $\Kk=\{A \land B, A \land \neg B \}$, $\overline{a_1}=([x \mapsto 2],[x \mapsto 4])$, and $\overline{a_2}=([x \mapsto 6],[x \mapsto 2])$.
    For $\overline{a}=\prod_{k \in \Kk}
        \left\{ \begin{array}{ll} \confproj{k}{ \overline{a_1}} & \quad \textrm{if} \ k \models B \\[1.5ex]
		\confproj{k}{\overline{a_2}} & \quad \textrm{if} \  k \not \models B \end{array} \right.$, we have $\overline{a}=([x \mapsto 2],[x \mapsto 2])$.
    Then $\joina{}(\overline{a} ) = ([x \mapsto 2] )$.
    On the other hand, $\joina{} (\overline{a_1}) \sqcup \joina{} (\overline{a_2}) = ([x \mapsto \top])$.
      \item[Case $\proja{\phi}$:]
    \begin{align*}
   &  \proja{\phi} \big( \prod_{k \in \Kk}
        \left\{ \begin{array}{ll} \confproj{k}{ \overline{a_1}} & \quad \textrm{if} \ k \models \theta \\[1.5ex]
		\confproj{k}{\overline{a_2}} & \quad \textrm{if} \  k \not \models \theta \end{array} \right. \big)
\\ &=
      \prod_{\{ k \in \Kk \mid k \models \phi \}}
      \left\{ \begin{array}{ll} \confproj{k}{ \overline{a_1}} & \quad \textrm{if} \ k \models \theta \\[1.5ex]
		\confproj{k}{\overline{a_2}} & \quad \textrm{if} \  k \not \models \theta \end{array} \right.
    \tag{by def. of $\proja{\phi}$}
\\ &=
      \prod_{\{ k \in \Kk \mid k \models \phi \}}
      \left\{ \begin{array}{ll} \confproj{k}{ \proja{\phi}(\overline{a_1})} & \quad \textrm{if} \ k \models \theta \\[1.5ex]
		\confproj{k}{\proja{\phi}(\overline{a_2})} & \quad \textrm{if} \  k \not \models \theta \end{array} \right.
    \tag{by def. of $\pi_{k}$ and $\proja{\phi}$}
\\ &=
      \prod_{k' \in \proja{\phi}(\Kk)}
      \left\{ \begin{array}{ll} \confproj{k'}{ \proja{\phi}(\overline{a_1})} & \quad \textrm{if} \ k' \models \theta \\[1.5ex]
		\confproj{k'}{\proja{\phi}(\overline{a_2})} & \quad \textrm{if} \  k' \not \models \theta \end{array} \right.
    \tag{by def. of $\proja{\phi}$}
\\ &\dot\sqsubseteq
      \prod_{k' \in \proja{\phi}(\Kk)}
      \left\{ \begin{array}{ll} \confproj{k'}{ \proja{\phi}(\overline{a_1})} & \quad \textrm{if} \ k' \models \theta \\[1.5ex]
      \confproj{k'}{\proja{\phi} ( \overline{a_1} )}  \sqcup  \confproj{k'}{\proja{\phi} ( \overline{a_2} ) }
        & \quad \textrm{if sat} (k' \!\land\! \theta) \land \textrm{sat} (k' \!\land\! \neg \theta) \\[1.5ex]
		\confproj{k'}{\proja{\phi}(\overline{a_2})} & \quad \textrm{if} \  k'  \models \neg \theta \end{array} \right.
    \tag{by def. of $\dot\sqsubseteq$ and $\sqcup$}
    \end{align*}

      \item[Case $\alpha_{1} \otimes \alpha_{2}$:]
    \begin{align*}
   &  \alpha_{1} \otimes \alpha_{2} \big( \prod_{k \in \Kk}
        \left\{ \begin{array}{ll} \confproj{k}{ \overline{a_1}} & \quad \textrm{if} \ k \models \theta \\[1.5ex]
		\confproj{k}{\overline{a_2}} & \quad \textrm{if} \  k \not \models \theta \end{array} \right. \big)
\\ &=
      \alpha_{1} \big( \prod_{k \in \Kk}
        \left\{ \begin{array}{ll} \confproj{k}{ \overline{a_1}} & \quad \textrm{if} \ k \models \theta \\[1.5ex]
		\confproj{k}{\overline{a_2}} & \quad \textrm{if} \  k \not \models \theta \end{array} \right. \big) \times
        \alpha_{2} \big( \prod_{k \models \psi}
        \left\{ \begin{array}{ll} \confproj{k}{ \overline{a_1}} & \quad \textrm{if} \ k \models \theta \\[1.5ex]
		\confproj{k}{\overline{a_2}} & \quad \textrm{if} \  k \not \models \theta \end{array} \right. \big)
    \tag{by def. of $\alpha_{1} \otimes \alpha_{2}$}
\\ & \dot\sqsubseteq
      \prod_{k' \in \alpha_1(\Kk)}
    \left\{ \begin{array}{ll} \confproj{k'}{ \alpha_1(\overline{a_1})} & \quad \textrm{if} \ k' \models \theta \\[1.5ex]
        \confproj{k'}{ \alpha_1(\overline{a_1})} \sqcup \confproj{k'}{\alpha_1(\overline{a_2})} & \quad \textrm{if sat} (k' \!\land\! \theta) \land \textrm{sat} (k' \!\land\! \neg \theta) \\[1.5ex]
		\confproj{k'}{ \alpha_1(\overline{a_2})} & \quad \textrm{if} \ k' \models \neg \theta \end{array} \right.
\\ & \quad \times
      \prod_{k' \in \alpha_2(\Kk)}
    \left\{ \begin{array}{ll} \confproj{k'}{ \alpha_2(\overline{a_1})} & \quad \textrm{if} \ k' \models \theta \\[1.5ex]
        \confproj{k'}{ \alpha_2(\overline{a_1})} \sqcup \confproj{k'}{\alpha_2(\overline{a_2})} & \quad \textrm{if sat} (k' \!\land\! \theta) \land \textrm{sat} (k' \!\land\! \neg \theta) \\[1.5ex]
		\confproj{k'}{ \alpha_2(\overline{a_2})} & \quad \textrm{if} \ k' \models \neg \theta \end{array} \right.
    \tag{by IH, twice}
\\ &=
            \prod_{k' \in \alpha_1 \cup \alpha_2(\Kk)}
    \left\{ \begin{array}{l} \confproj{k'}{ \alpha_1(\overline{a_1}) \times \alpha_2(\overline{a_1})} \quad \quad \textrm{if} \ k' \models \theta \\[1.5ex]
        \confproj{k'}{ ( \alpha_1(\overline{a_1}) \times \alpha_2(\overline{a_1}) )} \sqcup \confproj{k'}{(\alpha_1(\overline{a_2}) \times \alpha_2(\overline{a_2}))} \\
         \qquad \qquad \qquad \qquad \qquad \quad \textrm{if sat} (k' \!\land\! \theta) \land \textrm{sat} (k' \!\land\! \neg \theta) \\[1.5ex]
		\confproj{k'}{ \alpha_1(\overline{a_2}) \times \alpha_2 (\overline{a_2})} \quad \quad \textrm{if} \ k' \models \neg \theta \end{array} \right.
    \tag{by def. of $\overline{x_1} \times \overline{x_2}$}
\\ &=
            \prod_{k' \in \alpha_1 \otimes \alpha_2(\Kk)}
    \left\{ \begin{array}{ll} \confproj{k'}{ \alpha_1 \otimes \alpha_2(\overline{a_1})} & \textrm{ if } k' \models \theta \\[1.5ex]
        \confproj{k'}{ \alpha_1 \otimes \alpha_2(\overline{a_1})} \sqcup \confproj{k'}{\alpha_1 \otimes \alpha_2(\overline{a_2})} & \textrm{ if sat} (k' \!\land\! \theta) \land \textrm{sat} (k' \!\land\! \neg \theta) \\[1.5ex]
		\confproj{k'}{ \alpha_1 \otimes \alpha_2 (\overline{a_2})} & \textrm{ if } k' \models \neg \theta \end{array} \right.
    \tag{by def. of $\overline{x_1} \times \overline{x_2}$, and $\alpha_{1} \otimes \alpha_{2}$}
    \end{align*}

      \item[Case $\alpha_{2} \circ \alpha_{1}$:]
    \begin{align*}
   &  \alpha_{2} \circ \alpha_{1} \big( \prod_{k \in \Kk}
        \left\{ \begin{array}{ll} \confproj{k}{ \overline{a_1}} & \quad \textrm{if} \ k \models \theta \\[1.5ex]
		\confproj{k}{\overline{a_2}} & \quad \textrm{if} \  k \not \models \theta \end{array} \right. \big)
\\ &\dot\sqsubseteq
      \alpha_{2} \big(
      \prod_{k' \in \alpha_1(\Kk)}
    \left\{ \begin{array}{ll} \confproj{k'}{ \alpha_1(\overline{a_1})} & \quad \textrm{if} \ k' \models \theta \\[1.5ex]
        \confproj{k'}{ \alpha_1(\overline{a_1})} \sqcup \confproj{k'}{\alpha_1(\overline{a_2})} & \quad \textrm{if sat} (k' \!\land\! \theta) \land \textrm{sat} (k' \!\land\! \neg \theta) \\[1.5ex]
		\confproj{k'}{ \alpha_1(\overline{a_2})} & \quad \textrm{if} \ k' \models \neg \theta \end{array} \right. \big)
    \tag{by IH}
\\ &=
      \alpha_{2} \big(
      ( \prod_{k' \in \alpha_1(\Kk)}
    \left\{ \begin{array}{ll} \confproj{k'}{ \alpha_1(\overline{a_1})} & \textrm{if} \ k' \models \theta \\[1.5ex]
        \confproj{k'}{\alpha_1(\overline{a_2})} & \textrm{if} \ k' \not \models \theta \end{array} \right. ) \sqcup
        ( \prod_{k' \in \alpha_1(\Kk)}
    \left\{ \begin{array}{ll} \confproj{k'}{ \alpha_1(\overline{a_1})} & \textrm{if} \ k' \not \models \neg \theta \\[1.5ex]
        \confproj{k'}{\alpha_1(\overline{a_2})} & \textrm{if} \ k' \models \neg \theta \end{array} \right. )  \big)
    \tag{by def. of $\sqcup$ and $\not \models$}
\\ &=
      \alpha_{2} \big(
       \prod_{k' \in \alpha_1(\Kk)}
    \left\{ \begin{array}{ll} \confproj{k'}{ \alpha_1(\overline{a_1})} & \textrm{if } k' \models \theta \\[1.5ex]
        \confproj{k'}{\alpha_1(\overline{a_2})} & \textrm{if } k' \not \models \theta \end{array} \right. \big) \sqcup
        \alpha_{2} \big( \prod_{k' \in \alpha_1(\Kk)}
    \left\{ \begin{array}{ll} \confproj{k'}{ \alpha_1(\overline{a_1})} & \textrm{if } k' \not \models \neg \theta \\[1.5ex]
        \confproj{k'}{\alpha_1(\overline{a_2})} & \textrm{if } k' \models \neg \theta \end{array} \right.  \big)
    \tag{by $\alpha_2$ is CJM}
\\ &\dot\sqsubseteq
      \prod_{k'' \in \alpha_2(\alpha_1(\Kk))}
    \left\{ \begin{array}{ll} \confproj{k''}{ \alpha_{2} (\alpha_1(\overline{a_1}))} & \textrm{ if } k'' \models \theta \\[1.5ex]
        \confproj{k''}{ \alpha_{2}(\alpha_1(\overline{a_1}))} \sqcup \confproj{k'}{\alpha_{2}(\alpha_1(\overline{a_2}))} & \textrm{ if sat} (k'' \!\land\! \theta) \land \textrm{sat} (k'' \!\land\! \neg \theta) \\[1.5ex]
		\confproj{k''}{ \alpha_{2}(\alpha_1(\overline{a_2}))} & \textrm{ if } k'' \models \neg \theta \end{array} \right.  \\[1.5ex]
  &\bigsqcup \prod_{k'' \in \alpha_2(\alpha_1(\Kk))}
    \left\{ \begin{array}{ll} \confproj{k''}{ \alpha_{2} (\alpha_1(\overline{a_1}))} & \textrm{ if } k'' \models \theta \\[1.5ex]
        \confproj{k''}{ \alpha_{2}(\alpha_1(\overline{a_1}))} \sqcup \confproj{k'}{\alpha_{2}(\alpha_1(\overline{a_2}))} & \textrm{ if sat} (k'' \!\land\! \theta) \land \textrm{sat} (k'' \!\land\! \neg \theta) \\[1.5ex]
		\confproj{k''}{ \alpha_{2}(\alpha_1(\overline{a_2}))} & \textrm{ if} \ k'' \models \neg \theta \end{array} \right.
    \tag{by IH; twice}
\\ &=
            \prod_{k'' \in \alpha_2 \circ \alpha_1(\Kk)}
    \left\{ \begin{array}{ll} \confproj{k''}{ \alpha_2 \circ \alpha_1(\overline{a_1})} & \quad \textrm{if} \ k'' \models \theta \\[1.5ex]
        \confproj{k''}{ \alpha_2 \circ \alpha_1(\overline{a_1}) \sqcup \alpha_2 \circ \alpha_1(\overline{a_2})} & \quad \textrm{if sat} (k' \!\land\! \theta) \land \textrm{sat} (k' \!\land\! \neg \theta) \\[1.5ex]
		\confproj{k'}{ \alpha_2 \circ \alpha_1 (\overline{a_2})} & \quad \textrm{if} \ k'' \models \neg \theta \end{array} \right.
    \tag{by def. of $\alpha_{2} \circ \alpha_{1}$}
    \end{align*}
\end{description}

\end{proof}

\begin{lemma} \label{lemma:2}
 $\forall \alpha \in Abs, \overline{v_1},\overline{v_2} \in Const^{\Kk}:~
    \alpha (\overline{v_1} ~\dot\impparbinopname~ \overline{v_2}) ~\dot\sqsubseteq~ \alpha (\overline{v_1}) ~\dot\impparbinopname~ \alpha (\overline{v_2})$
\end{lemma}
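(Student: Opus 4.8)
The plan is to prove the lemma by structural induction on the abstraction $\alpha \in Abs$, following the same case analysis already used for Lemmas~\ref{lemma:1} and~\ref{lemma:12}. Two facts will be used throughout, and I would record them first. (1) The constant-propagation operator $\impparbinopname$ is monotone on $Const$: this is a routine split on whether each argument is $\bot$, a literal, or $\top$, and it immediately implies that its configuration-wise lifting $\dot\impparbinopname$ is monotone on tuples. (2) Every $\alpha \in Abs$ is monotone, being the lower adjoint of a Galois connection (Theorems~\ref{thm:join-galois}--\ref{thm:product-galois} together with closure of Galois connections under composition); in fact each $\alpha$ is a complete join morphism. In addition I would prove a small sub-distributivity property of the tensor product: for $\overline{a_1},\overline{b_1} \in Const^{\alpha_1(\Kk)}$ and $\overline{a_2},\overline{b_2} \in Const^{\alpha_2(\Kk)}$ we have $(\overline{a_1} \dot\impparbinopname \overline{b_1}) \times (\overline{a_2} \dot\impparbinopname \overline{b_2}) ~\dot\sqsubseteq~ (\overline{a_1} \times \overline{a_2}) \dot\impparbinopname (\overline{b_1} \times \overline{b_2})$. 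On the components that belong to only one of $\alpha_1(\Kk), \alpha_2(\Kk)$ this is an equality, whereas on a shared component it is exactly the inequality $(x_1 \sqcup x_2) \impparbinopname (y_1 \sqcup y_2) ~\sqsupseteq~ (x_1 \impparbinopname y_1) \sqcup (x_2 \impparbinopname y_2)$, which follows again from monotonicity of $\impparbinopname$.

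For the base case $\alpha = \joina{}$, unfolding the definitions reduces the goal to $\bigsqcup_{k \in \Kk}\bigl(\confproj{k}{\overline{v_1}} \impparbinopname \confproj{k}{\overline{v_2}}\bigr) ~\sqsubseteq~ \bigl(\bigsqcup_{k \in \Kk}\confproj{k}{\overline{v_1}}\bigr) \impparbinopname \bigl(\bigsqcup_{k \in \Kk}\confproj{k}{\overline{v_2}}\bigr)$; for every fixed $k$ both $\confproj{k}{\overline{v_1}}$ and $\confproj{k}{\overline{v_2}}$ sit below the respective joins, so monotonicity of $\impparbinopname$ bounds the $k$-th term by the right-hand side, and taking the join over $k$ preserves the bound. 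The inequality is genuinely strict in general (e.g.\ $\oplus = {-}$, $\overline{v_1} = \overline{v_2} = (1,2)$ gives $0 \sqsubset \top$), which is why the lemma is phrased with $\dot\sqsubseteq$ rather than equality. For $\alpha = \proja\phi$ the statement is in fact an equality: projection only discards components and $\dot\impparbinopname$ acts component-wise, so the two sides agree on every surviving configuration.

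For the inductive cases I would combine the induction hypotheses for the sub-abstractions with the two monotonicity facts. For $\alpha = \alpha_2 \circ \alpha_1$: apply the hypothesis for $\alpha_1$ to get $\alpha_1(\overline{v_1} \dot\impparbinopname \overline{v_2}) \dot\sqsubseteq \alpha_1(\overline{v_1}) \dot\impparbinopname \alpha_1(\overline{v_2})$, push $\alpha_2$ through by its monotonicity, then apply the hypothesis for $\alpha_2$ to the tuples $\alpha_1(\overline{v_1})$ and $\alpha_1(\overline{v_2})$. For $\alpha = \alpha_1 \otimes \alpha_2$: use $(\alpha_1 \otimes \alpha_2)(\overline{v}) = \alpha_1(\overline{v}) \times \alpha_2(\overline{v})$, apply the hypothesis to each factor, use that $\times$ is monotone (it is built only from $\confproj{\cdot}{\cdot}$ and $\sqcup$), and close with the sub-distributivity property recorded in the first paragraph. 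I expect the $\otimes$ case to be the only real obstacle: unlike projection and join it merges two abstract domains with a $\sqcup$ on overlapping configurations, so a uniform component-by-component argument is not available, and it is precisely the (strict, on the shared configurations) sub-distributivity inequality that lets the step go through with $\dot\sqsubseteq$. Everything else --- tracking the types of the tuples, the bookkeeping in the $\circ$ case, and monotonicity of $\times$ --- is routine.
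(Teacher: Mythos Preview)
Your proposal is correct and follows essentially the same structural induction on $\alpha$ that the paper uses, with the same case analysis and the same use of the induction hypotheses. In the $\otimes$ case you are in fact more careful than the paper: you correctly isolate the sub-distributivity step $(\overline{a_1}\,\dot\impparbinopname\,\overline{b_1}) \times (\overline{a_2}\,\dot\impparbinopname\,\overline{b_2}) \,\dot\sqsubseteq\, (\overline{a_1}\times\overline{a_2})\,\dot\impparbinopname\,(\overline{b_1}\times\overline{b_2})$ as an inequality justified by monotonicity of $\impparbinopname$, whereas the paper writes that step as an equality ``by def.\ of $\times$ and $\dot\impparbinopname$'', which on shared configurations is only $\dot\sqsubseteq$ in general.
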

\begin{proof}
By induction on the structure of $\alpha$.
  \begin{description}
  \item[Case $\joina{}$:]
    \begin{align*}
   &  \joina{} (\overline{v_1} ~\dot\impparbinopname~ \overline{v_2} )
\\ &=
      \textstyle\bigsqcup_{ k \in \Kk} \confproj{k}{\overline{v_1} ~\dot\impparbinopname~ \overline{v_2}}
    \tag{by def. of $\joina{}$}
\\ &=
      \textstyle\bigsqcup_{ k \in \Kk } \big( \confproj{k}{\overline{v_1}} ~\impparbinopname~ \confproj{k}{\overline{v_2}} \big)
    \tag{by def. of $\pi_{k}$ and $\dot\impparbinopname$}
\\ &~\dot\sqsubseteq~
      \big( \textstyle\bigsqcup_{ k \in \Kk } \confproj{k}{\overline{v_1}} \big) ~\impparbinopname~ \big( \textstyle\bigsqcup_{ k \in \Kk} \confproj{k}{\overline{v_2}} \big)
    \tag{by def. of $\textstyle\bigsqcup$ and $\impparbinopname$}
\\ &=
      \joina{} ( \overline{v_1} ) ~\impparbinopname~ \joina{} ( \overline{v_2} )
    \tag{by def. of $\joina{}$}
    \end{align*}
    We provide an example confirming that the above relation is not equality.
    Let $\overline{v_1}=(5,2)$, $\overline{v_2}=(2,5)$, and $\oplus=+$.
    Then $\joina{} ((5,2) \dot\impparbinopname (2,5) ) = \joina{} ((7,7) ) = 7$.
    On the other hand, $\joina{} ((5,2) ) = \top$, $\joina{} ((2,5) ) = \top$,
    and $\top \widehat{+} \top = \top$.
      \item[Case $\proja{\phi}$:]
    \begin{align*}
   &  \proja{\phi} ( \overline{v_1} ~\dot\impparbinopname~ \overline{v_2} )
\\ &=
      \prod_{\{ k \in \Kk \mid k \models \phi\}} \confproj{k}{\overline{v_1} ~\dot\impparbinopname~ \overline{v_2}}
    \tag{by def. of $\proja{\phi}$}
\\ &=
      \prod_{\{ k \in \Kk \mid k \models \phi\}} \big( \confproj{k}{\overline{v_1}} ~\impparbinopname~ \confproj{k}{\overline{v_2}} \big)
    \tag{by def. of $\pi_{k}$ and $\dot\impparbinopname$}
\\ &=
      \big( \prod_{\{ k \in \Kk \mid k \models \phi\}} \confproj{k}{\overline{v_1}} \big) ~\dot\impparbinopname~ \big( \prod_{\{ k \in \Kk \mid k \models \phi\}} \confproj{k}{\overline{v_2}} \big)
    \tag{by def. of $\prod$ and $\dot\impparbinopname$}
\\ &=
      \proja{\phi} ( \overline{v_1} ) ~\dot\impparbinopname~ \proja{\phi} ( \overline{v_2} )
    \tag{by def. of $\proja{\phi}$}
    \end{align*}

      \item[Case $\alpha_{1} \otimes \alpha_{2}$:]
    \begin{align*}
   &  \alpha_{1} \otimes \alpha_{2} (\overline{v_1} ~\dot\impparbinopname~ \overline{v_2} )
\\ &=
      \alpha_{1} (\overline{v_1} \dot\impparbinopname \overline{v_2} ) \times \alpha_{2} (\overline{v_1} ~\dot\impparbinopname~ \overline{v_2} )
    \tag{by def. of $\alpha_{1} \otimes \alpha_{2}$}
\\ &~\dot\sqsubseteq~
      \big( \alpha_{1} (\overline{v_1}) ~\dot\impparbinopname~ \alpha_{1}(\overline{v_2} ) \big) \times \big( \alpha_{2} (\overline{v_1}) ~\dot\impparbinopname~ \alpha_{2} (\overline{v_2}) \big)
    \tag{by IH, twice}
\\ &=
      \big( \alpha_{1} (\overline{v_1}) \times \alpha_{2}(\overline{v_1} ) \big) ~\dot\impparbinopname~ \big( \alpha_{1} (\overline{v_2}) \times \alpha_{2} (\overline{v_2}) \big)
    \tag{by def. of $\times$ and $\dot\impparbinopname$}
\\ &=
      \alpha_{1} \otimes \alpha_{2} (\overline{v_1}) ~\dot\impparbinopname~  \alpha_{1} \otimes \alpha_{2} (\overline{v_2})
    \tag{by def. of $\alpha_{1} \otimes \alpha_{2}$}
    \end{align*}

      \item[Case $\alpha_{2} \circ \alpha_{1}$:]
    \begin{align*}
   &  \alpha_{2} \circ \alpha_{1} (\overline{v_1} ~\dot\impparbinopname~ \overline{v_2} )
\\ &=
      \alpha_{2} ( \alpha_{1} (\overline{v_1} ~\dot\impparbinopname~ \overline{v_2} ) )
    \tag{by def. of $\circ$}
\\ &~\dot\sqsubseteq~
     \alpha_{2} \big( \alpha_{1} (\overline{v_1}) ~\dot\impparbinopname~ \alpha_{1}(\overline{v_2} ) \big)
    \tag{by IH}
\\ &~\dot\sqsubseteq~
      \alpha_{2} \big( \alpha_{1} (\overline{v_1}) \big) ~\dot\impparbinopname~ \alpha_{2} \big( \alpha_{1} (\overline{v_2}) \big)
    \tag{by IH}
\\ &=
      \alpha_{2} \circ \alpha_{1} (\overline{v_1}) ~\dot\impparbinopname~  \alpha_{2} \circ \alpha_{1} (\overline{v_2})
    \tag{by def. of $\alpha_{2} \circ \alpha_{1}$}
    \end{align*}
\end{description}
\end{proof}

We define $\overline{a}[\texttt{x} \dot\mapsto \overline{v}]$ to mean
a tuple that is as $\overline{a}$ except that in each its component
the variable $\texttt{x}$ is mapped to the corresponding component of the tuple $\overline{v}$.

\begin{lemma} \label{lemma:3}
 $\forall \alpha \in Abs, \overline{a} \in \mathbb{A}^{\Kk}, \overline{v} \in Const^{\Kk}:~
    \alpha ( \overline{a} [\texttt{x} \dot\mapsto \overline{v}] ) = \alpha ( \overline{a}) [\texttt{x} \dot\mapsto \alpha(\overline{v})]$
\end{lemma}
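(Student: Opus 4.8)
The plan is to prove the identity by structural induction on the abstraction $\alpha\in Abs$, following exactly the shape of the proofs of Lemmas~\ref{lemma:1} and~\ref{lemma:2}: the base cases are $\joina{}$ and $\proja{\phi}$, and the inductive cases are $\alpha_1\otimes\alpha_2$ and $\alpha_2\circ\alpha_1$. The one fact doing all the real work is that variable update commutes with the join on the store domain $\Aa=\Var\to\Const$: since $\sqsubseteq_\Aa$, and hence $\sqcup_\Aa$, is defined pointwise over variables, for any family of stores $a_i\in\Aa$ and constants $c_i\in\Const$ one has $\bigl(\textstyle\bigsqcup_i a_i\bigr)[\texttt{x}\mapsto\bigsqcup_i c_i]=\bigsqcup_i\bigl(a_i[\texttt{x}\mapsto c_i]\bigr)$, since both sides agree at $\texttt{x}$ and at every other variable. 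I would isolate this as a one-line sub-claim up front, after which each case of the induction becomes a mechanical unfolding of definitions. Note that, in contrast to Lemmas~\ref{lemma:1} and~\ref{lemma:2}, the result here is a genuine equality rather than an inequality, because $[\texttt{x}\mapsto\cdot]$ is a componentwise operation that respects joins, whereas the binary operator $\oplus$ of Lemma~\ref{lemma:2} loses precision when configurations are confounded.

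For $\joina{}$ I would rewrite $\joina{}(\overline{a}[\texttt{x}\dot\mapsto\overline{v}])=\bigsqcup_{k\in\Kk}\bigl(\confproj{k}{\overline{a}}[\texttt{x}\mapsto\confproj{k}{\overline{v}}]\bigr)$ using the definitions of $\joina{}$ and $\dot\mapsto$, then apply the sub-claim to pull the update outside the join, obtaining $\bigl(\bigsqcup_{k}\confproj{k}{\overline{a}}\bigr)[\texttt{x}\mapsto\bigsqcup_{k}\confproj{k}{\overline{v}}]=\joina{}(\overline{a})[\texttt{x}\dot\mapsto\joina{}(\overline{v})]$. The $\proja{\phi}$ case is immediate: projection only deletes the components with $k\not\models\phi$, and the update acts identically on each retained component, so the equality follows directly from the definitions of $\proja{\phi}$ and $\dot\mapsto$.

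For the two compositional cases I would invoke the induction hypothesis. For $\alpha_2\circ\alpha_1$: apply the IH for $\alpha_1$ to get $\alpha_1(\overline{a}[\texttt{x}\dot\mapsto\overline{v}])=\alpha_1(\overline{a})[\texttt{x}\dot\mapsto\alpha_1(\overline{v})]$, then apply the IH for $\alpha_2$ with $\alpha_1(\overline{a})$ and $\alpha_1(\overline{v})$ playing the roles of $\overline{a}$ and $\overline{v}$, and recollapse via the definition of $\alpha_2\circ\alpha_1$ on stores. For $\alpha_1\otimes\alpha_2$: unfold $(\alpha_1\otimes\alpha_2)(\cdot)=\alpha_1(\cdot)\times\alpha_2(\cdot)$, push the update into both factors by the IH, and then check $\bigl(\alpha_1(\overline{a})[\texttt{x}\dot\mapsto\alpha_1(\overline{v})]\bigr)\times\bigl(\alpha_2(\overline{a})[\texttt{x}\dot\mapsto\alpha_2(\overline{v})]\bigr)=\bigl(\alpha_1(\overline{a})\times\alpha_2(\overline{a})\bigr)[\texttt{x}\dot\mapsto\bigl(\alpha_1(\overline{v})\times\alpha_2(\overline{v})\bigr)]$. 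This last equality is the only step needing care: the $\times$ combinator partitions $\alpha_1(\Kk)\cup\alpha_2(\Kk)$ into the region exclusive to $\alpha_1$, the shared region, and the region exclusive to $\alpha_2$, joining the two contributions on the shared region; since the update is applied uniformly per component and the join is pointwise (the sub-claim once more), the identity holds separately on each of the three regions. I do not anticipate any deeper obstacle — once the pointwise-join sub-claim is in place, the $\otimes$ case's three-way split is the most laborious part but is pure bookkeeping, and the $\circ$ case is a double application of the IH.
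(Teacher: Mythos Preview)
Your proposal is correct and mirrors the paper's proof essentially step for step: structural induction on $\alpha$ with the four cases $\joina{}$, $\proja{\phi}$, $\alpha_1\otimes\alpha_2$, $\alpha_2\circ\alpha_1$, using the pointwise-join/update commutation in the join case and in the shared region of the $\times$ combinator. The only difference is expository: you isolate the ``update commutes with $\sqcup$'' fact as an explicit sub-claim and spell out the three-way split for $\otimes$, whereas the paper absorbs both into terse ``by def.\ of $\bigsqcup$ and $\mapsto$'' and ``by def.\ of $\times$ and $\dot\mapsto$'' justifications.
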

\begin{proof}
By induction on the structure of $\alpha$.
  \begin{description}
  \item[Case $\joina{}$:]
    \begin{align*}
   &  \joina{} ( \overline{a} [\texttt{x} \dot\mapsto \overline{v}]  )
\\ &=
      \textstyle\bigsqcup_{ k \in \Kk } \confproj{k}{ \overline{a} [\texttt{x} \dot\mapsto \overline{v}] }
    \tag{by def. of $\joina{}$}
\\ &=
      \textstyle\bigsqcup_{ k \in \Kk } \big( \confproj{k}{\overline{a}} [\texttt{x} \mapsto \confproj{k}{\overline{v}}] \big)
    \tag{by def. of $\pi_{k}$ and $\dot\mapsto$}
\\ &=
      \big( \textstyle\bigsqcup_{ k \in \Kk } \confproj{k}{\overline{a}} \big) [\texttt{x} \mapsto  \textstyle\bigsqcup_{ k \in \Kk} \confproj{k}{\overline{v}} ]
    \tag{by def. of $\textstyle\bigsqcup$ and $\mapsto$}
\\ &=
      \joina{} ( \overline{a} ) [ \texttt{x} \mapsto \joina{} ( \overline{v} )]
    \tag{by def. of $\joina{}$}
    \end{align*}

      \item[Case $\proja{\phi}$:]
    \begin{align*}
   &  \proja{\phi} ( \overline{a} [\texttt{x} \dot\mapsto \overline{v}] )
\\ &=
      \prod_{\{ k \in \Kk \mid k \models \phi\}} \confproj{k}{\overline{a} [\texttt{x} \dot\mapsto \overline{v}]}
    \tag{by def. of $\proja{\phi}$}
\\ &=
      \prod_{\{ k \in \Kk \mid k \models \phi\}} \big( \confproj{k}{\overline{a}} [ \texttt{x} \mapsto \confproj{k}{\overline{v}} ] \big)
    \tag{by def. of $\pi_{k}$ and $\dot\mapsto$}
\\ &=
      \big( \prod_{\{ k \in \Kk \mid k \models \phi\}} \confproj{k}{\overline{a}} \big) [\texttt{x} \dot\mapsto \prod_{\{ k \in \Kk \mid k \models \phi\}} \confproj{k}{\overline{v}} ]
    \tag{by def. of $\prod$ and $\dot\mapsto$}
\\ &=
      \proja{\phi} ( \overline{a} ) [\texttt{x} \dot\mapsto \proja{\phi} ( \overline{v} ) ]
    \tag{by def. of $\proja{\phi}$}
    \end{align*}

      \item[Case $\alpha_{1} \otimes \alpha_{2}$:]
    \begin{align*}
   &  \alpha_{1} \otimes \alpha_{2} (\overline{a} [\texttt{x} \dot\mapsto \overline{v}] )
\\ &=
      \alpha_{1} (\overline{a} [\texttt{x} \dot\mapsto \overline{v}] ) \times \alpha_{2} (\overline{a} [\texttt{x} \dot\mapsto \overline{v}] )
    \tag{by def. of $\alpha_{1} \otimes \alpha_{2}$}
\\ &=
       \alpha_{1} (\overline{a}) [\texttt{x} \dot\mapsto \alpha_{1}(\overline{v}) ] \times \alpha_{2} (\overline{a}) [\texttt{x} \dot\mapsto \alpha_{2} (\overline{v}) ]
    \tag{by IH, twice}
\\ &=
      \big( \alpha_{1} (\overline{a}) \times \alpha_{2}(\overline{a} ) \big) [\texttt{x} \dot\mapsto \alpha_{1} (\overline{v}) \times \alpha_{2} (\overline{v}) ]
    \tag{by def. of $\times$ and $\dot\mapsto$}
\\ &=
      \alpha_{1} \otimes \alpha_{2} (\overline{a}) [\texttt{x} \dot\mapsto  \alpha_{1} \otimes \alpha_{2} (\overline{v})]
    \tag{by def. of $\alpha_{1} \otimes \alpha_{2}$}
    \end{align*}

      \item[Case $\alpha_{2} \circ \alpha_{1}$:]
    \begin{align*}
   &  \alpha_{2} \circ \alpha_{1} (\overline{a} [\texttt{x} \dot\mapsto \overline{v}] )
\\ &=
      \alpha_{2} ( \alpha_{1} (\overline{a} [\texttt{x} \dot\mapsto \overline{v}] ) )
    \tag{by def. of $\alpha_{2} \circ \alpha_{1}$}
\\ &=
      \alpha_{2} ( \alpha_{1} (\overline{a}) [\texttt{x} \dot\mapsto \alpha_{1}(\overline{v}) ] )
    \tag{by IH}
\\ &=
      \alpha_{2} \big( \alpha_{1} (\overline{a}) \big) [\texttt{x} \dot\mapsto \alpha_{2} \big( \alpha_{1} (\overline{v}) \big) ]
    \tag{by IH}
\\ &=
      \alpha_{2} \circ \alpha_{1} (\overline{a}) [\texttt{x} \dot\mapsto  \alpha_{2} \circ \alpha_{1} (\overline{v})]
    \tag{by def. of $\alpha_{2} \circ \alpha_{1}$}
    \end{align*}
\end{description}
\end{proof}

\section{Appendix: Monotonicity of Abstracted Analyses} \label{App3}

\begin{lemma}[$\overline{\mathcal{D'}}_{\alpha}\sbr{e}$ is monotone] \label{lemma:5}
  \begin{align*}
    \forall e \in Exp, \alpha \in Abs, \overline{d}, \overline{d'} \in \mathbb{A}^{\alpha(\Kpsi)}.~
    \overline{d} ~\dot\sqsubseteq~ \overline{d'}
    \implies
    \overline{\mathcal{D'}}_{\alpha}\sbr{e} \overline{d}
    ~\dot\sqsubseteq~
    \overline{\mathcal{D'}}_{\alpha}\sbr{e} \overline{d'}
  \end{align*}
\end{lemma}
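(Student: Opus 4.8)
The plan is to prove the lemma by structural induction on the expression $e$, following the three productions of the grammar for expressions.

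First I would dispatch the two base cases. For $e = n$, the function $\overline{\mathcal{D'}}_{\alpha}\sbr{n}$ is constant — it returns $\prod_{k' \in \alpha(\Kpsi)} \texttt{n}$ independently of its argument — so monotonicity is immediate. For $e = \texttt{x}$, the function is $\func{\overline{d}}{\prod_{k' \in \alpha(\Kpsi)} \confproj{k'}{\overline{d}}(\texttt{x})}$, which composes three operations each of which I would observe to be monotone: the projections $\pi_{k'}$ (monotone by definition of the product order $\dot\sqsubseteq$ on $\mathbb{A}^{\alpha(\Kpsi)}$), the store lookup $a \mapsto a(\texttt{x})$ (monotone because $\sqsubseteq_{\mathbb{A}}$ is the pointwise lifting of $\sqsubseteq_C$), and the tupling $\prod_{k'}$ (which preserves $\dot\sqsubseteq$ componentwise). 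Hence from $\overline{d} \dot\sqsubseteq \overline{d'}$ I obtain $\confproj{k'}{\overline{d}}(\texttt{x}) \sqsubseteq_C \confproj{k'}{\overline{d'}}(\texttt{x})$ for each $k' \in \alpha(\Kpsi)$, which is the claim.

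For the inductive case $e = \impbinop{e_0}{e_1}$, the induction hypothesis supplies monotonicity of $\overline{\mathcal{D'}}_{\alpha}\sbr{e_0}$ and $\overline{\mathcal{D'}}_{\alpha}\sbr{e_1}$. Assuming $\overline{d} \dot\sqsubseteq \overline{d'}$, I would apply the IH to get $\overline{\mathcal{D'}}_{\alpha}\sbr{e_i}\overline{d} \dot\sqsubseteq \overline{\mathcal{D'}}_{\alpha}\sbr{e_i}\overline{d'}$ for $i \in \{0,1\}$, hence $\confproj{k'}{\overline{\mathcal{D'}}_{\alpha}\sbr{e_i}\overline{d}} \sqsubseteq_C \confproj{k'}{\overline{\mathcal{D'}}_{\alpha}\sbr{e_i}\overline{d'}}$ for every $k' \in \alpha(\Kpsi)$. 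Then, recalling that the constant-propagation operator $\impparbinopname : \Const \times \Const \to \Const$ is monotone in both arguments, I would apply it componentwise and tuple over $k' \in \alpha(\Kpsi)$ to conclude $\overline{\mathcal{D'}}_{\alpha}\sbr{\impbinop{e_0}{e_1}}\overline{d} \dot\sqsubseteq \overline{\mathcal{D'}}_{\alpha}\sbr{\impbinop{e_0}{e_1}}\overline{d'}$, as required.

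The only step with any actual content is the monotonicity of $\impparbinopname$; I would either discharge it by a short case analysis on its three-clause definition (raising either argument can only turn a $\bot$-result into something larger or a constant result into $\top$, never strictly decrease it) or simply cite the monotonicity of $\overline{\mathcal{A'}}\sbr{e}$ established in \cite{MBW}, of which this is a sub-fact. Everything else is bookkeeping about pointwise and product orders. I note that, unlike the soundness derivations, no induction on the structure of $\alpha$ is needed here: only the set $\alpha(\Kpsi)$ and the lattice structure of $\mathbb{A}$ are used, not the shape of $\alpha$ itself.
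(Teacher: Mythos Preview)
Your proposal is correct and follows essentially the same route as the paper: structural induction on $e$ with the three obvious cases, using the induction hypothesis together with monotonicity of $\impparbinopname$ in the binary-operator case. You are slightly more explicit than the paper in justifying the monotonicity of $\impparbinopname$ (the paper's proof silently relies on it in the step tagged ``by IH''), and your closing remark that no induction on the structure of $\alpha$ is needed is a correct and worthwhile observation.
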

\begin{proof}
  Let $e$, $\alpha$, and $\overline{d} ~\dot\sqsubseteq~ \overline{d'}$ be given. We
  proceed by structural induction on $e$.
  \begin{description}
  \item[Case $n$:]
    \begin{align}
      \overline{\mathcal{D'}}_{\alpha}\sbr{n} \overline{d}
    =
      \prod_{k' \in \alpha(\Kpsi)} n
    =
      \overline{\mathcal{D'}}_{\alpha}\sbr{n} \overline{d'}
    \notag
    \end{align}

  \item[Case $\texttt{x}$:]
    \begin{align}
    &  \overline{\mathcal{D'}}_{\alpha}\sbr{\texttt{x}} \overline{d}
    =
      \prod_{k' \in \alpha(\Kpsi)}
                    \confproj{k'}{ \overline{d}} (\texttt{x})
    \tag{by def. of $\overline{\mathcal{D'}}_{\alpha}$} \\
    &\sqsubseteq
      \prod_{k' \in \alpha(\Kpsi)}
                    \confproj{k'}{ \overline{d'}} (\texttt{x})
    \tag{by $\overline{d} ~\dot\sqsubseteq~ \overline{d'}$} \\
    &=
      \overline{\mathcal{D'}}_{\alpha}\sbr{\texttt{x}} \overline{d'}
      \tag{by def. of $\overline{\mathcal{D'}}_{\alpha}$}
    \end{align}

  \item[Case $\impbinop{e_0}{e_1}$:]
    \begin{align}
  &   \overline{\mathcal{D'}}_{\alpha}\sbr{\impbinop{e_0}{e_1}} \overline{d}
      \notag
\\& =
      \prod_{k' \in \alpha(\Kpsi)}
                    \confproj{k'}{ \overline{\mathcal{D'}}_{\alpha}\sbr{e_0} \overline{d}} \impparbinopname
                    \confproj{k'}{ \overline{\mathcal{D'}}_{\alpha}\sbr{e_1} \overline{d}}
      \tag{by def. of $\overline{\mathcal{D'}}_{\alpha}$}
\\& \sqsubseteq
       \prod_{k' \in \alpha(\Kpsi)}
                    \confproj{k'}{ \overline{\mathcal{D'}}_{\alpha}\sbr{e_0} \overline{d'}} \impparbinopname
                    \confproj{k'}{ \overline{\mathcal{D'}}_{\alpha}\sbr{e_1} \overline{d'}}
      \tag{by IH; and $\overline{d} ~\dot\sqsubseteq~ \overline{d'}$}
\\& = \overline{\mathcal{D'}}_{\alpha}\sbr{\impbinop{e_0}{e_1}} \overline{d'}
      \tag{by def. of $\overline{\mathcal{D'}}_{\alpha}$}
    \end{align}
  \end{description}
\end{proof}

\begin{lemma}[$\overline{\mathcal{D}}_{\alpha}\sbr{s}$ is monotone] \label{lemma:6}
  \begin{align*}
    \forall s \in Stm, \alpha \in Abs, \overline{d}, \overline{d'} \in \mathbb{A}^{\alpha(\Kpsi)}.~
    \overline{d} ~\dot\sqsubseteq~ \overline{d'}
    \implies
    \overline{\mathcal{D}}_{\alpha}\sbr{s} \overline{d}
    ~\dot\sqsubseteq~
    \overline{\mathcal{D}}_{\alpha}\sbr{s} \overline{d'}
  \end{align*}
\end{lemma}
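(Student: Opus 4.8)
The plan is to prove Lemma~\ref{lemma:6} by structural induction on the statement $s$, in exact parallel to the proof of Lemma~\ref{lemma:5} and invoking that lemma for the expression subterms. So fix $\alpha\in Abs$ and $\overline d,\overline d'\in\mathbb{A}^{\alpha(\Kpsi)}$ with $\overline d~\dot\sqsubseteq~\overline d'$. The two base-like cases are direct: $\overline{\mathcal{D}}_{\alpha}\sbr{\impskip}$ is the identity on $\mathbb{A}^{\alpha(\Kpsi)}$, and for $\impassign{\texttt{x}}{e}$ I would unfold the definition of Fig.~\ref{fig:absanalysis} and reason component-wise --- the $k'$-th component of the result is the point-wise store update $\confproj{k'}{\overline d}[\texttt{x}\mapsto\confproj{k'}{\overline{\mathcal{D'}}_{\alpha}\sbr{e}\,\overline d}]$, which is monotone in the store being updated (ordered by $\overline d~\dot\sqsubseteq~\overline d'$) and in the written value (ordered because $\overline{\mathcal{D'}}_{\alpha}\sbr{e}$ is monotone by Lemma~\ref{lemma:5}); ordering every component orders the tuples.

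The compositional cases follow from the induction hypotheses together with the fact that monotone functions on $\mathbb{A}^{\alpha(\Kpsi)}$ are closed under composition and under point-wise join. For $\impseq{s_0}{s_1}$ the transfer function is $\overline{\mathcal{D}}_{\alpha}\sbr{s_1}\circ\overline{\mathcal{D}}_{\alpha}\sbr{s_0}$, monotone as a composite of monotone maps (IH twice). For $\impif{e}{s_0}{s_1}$ it is $\func{\overline d}{\overline{\mathcal{D}}_{\alpha}\sbr{s_0}\,\overline d~\dot\sqcup~\overline{\mathcal{D}}_{\alpha}\sbr{s_1}\,\overline d}$, monotone because $\dot\sqcup$ is monotone in both arguments and each branch is monotone by IH. For $\impifdef{(\theta)}{s}$ I would argue configuration by configuration: for a fixed $k'$, which of the three clauses applies depends only on $k'$ and $\theta$ and not on the input store, so it suffices that each of the three candidate values --- $\confproj{k'}{\overline{\mathcal{D}}_{\alpha}\sbr{s}\,\overline d}$, or $\confproj{k'}{\overline d}\sqcup\confproj{k'}{\overline{\mathcal{D}}_{\alpha}\sbr{s}\,\overline d}$, or $\confproj{k'}{\overline d}$ --- is monotone in $\overline d$, which holds by the IH for $\overline{\mathcal{D}}_{\alpha}\sbr{s}$, monotonicity of $\sqcup$, and monotonicity of the projections $\confproj{k'}{\cdot}$.

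The one case that needs genuine care is $\impwhile{e}{s}$, where $\overline{\mathcal{D}}_{\alpha}\sbr{\impwhile{e}{s}}=\mathrm{lfp}\,F$ with $F=\func{\varsignfuncname}{\func{\overline d}{\overline d~\dot\sqcup~\varsignfunc{\overline{\mathcal{D}}_{\alpha}\sbr{s}\,\overline d}}}$, and one must show the resulting fixed point is itself a monotone function of its argument. The plan here is: by the IH $\overline{\mathcal{D}}_{\alpha}\sbr{s}$ is monotone, so $F$ maps monotone functions to monotone functions --- if $\varsignfuncname$ is monotone then $\varsignfuncname\circ\overline{\mathcal{D}}_{\alpha}\sbr{s}$ is monotone, and joining it point-wise with the monotone identity preserves monotonicity. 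The monotone endofunctions of $\mathbb{A}^{\alpha(\Kpsi)}$ form a complete sub-lattice of the full function space (they contain the least element $\func{\overline d}{\pwbot}$ and are closed under arbitrary point-wise joins), on which $F$ restricts to an order-preserving operator; hence computing $\mathrm{lfp}\,F$ by Kleene iteration as the point-wise supremum of $F^{n}(\func{\overline d}{\pwbot})$ exhibits it as a join of monotone functions, so it is monotone. (Equivalently one invokes the standard fact that the least fixed point of an order-preserving map on the monotone-function lattice coincides with the least fixed point taken in the full function lattice.) Monotonicity of $\mathrm{lfp}\,F$ gives $\mathrm{lfp}\,F\,\overline d~\dot\sqsubseteq~\mathrm{lfp}\,F\,\overline d'$, completing the induction; the subsidiary monotonicity of $F$ itself, needed to apply Kleene's theorem, is routine and would just be noted.
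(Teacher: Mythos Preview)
Your proposal is correct and follows essentially the same approach as the paper's proof: structural induction on $s$, with each case handled in the same way (identity for \impskip, Lemma~\ref{lemma:5} for assignment, IH plus closure of monotone maps under composition and $\dot\sqcup$ for the compound statements, and a componentwise case split for $\impifdef{(\theta)}{s}$). For the $\impwhile{e}{s}$ case the paper likewise shows that the functional preserves monotone functions and is itself monotone, then concludes that the least fixed point lies in the complete lattice of monotone functions; your Kleene-iteration phrasing is a cosmetic variant of the same argument and is equally valid here since the underlying lattice has finite height.
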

\begin{proof}
  Let $s$, $\alpha$, and $\overline{d} ~\dot\sqsubseteq~ \overline{d'}$ be given. We
  proceed by structural induction on $s$.
  \begin{description}
  \item[Case $\impskip$:]
    \begin{align}
      \overline{\mathcal{D}}_{\alpha}\sbr{\impskip} \overline{d}
     =
      \overline{d}
     ~\dot\sqsubseteq~
      \overline{d'}
     =
      \overline{\mathcal{D}}_{\alpha}\sbr{\impskip} \overline{d'}
     \tag{by def. of $\overline{\mathcal{D}}_{\alpha}$}
    \end{align}

  \item[Case $\impassign{\texttt{x}}{e}$:]
    \begin{align}
   &  \overline{\mathcal{D}}_{\alpha}\sbr{\impassign{\texttt{x}}{e}} \overline{d}
      \notag
\\ & =
       \prod_{k' \in \alpha(\Kpsi)}
           (\confproj{k'}{\overline{d}}) [ \texttt{x} \mapsto \confproj{k'}{\overline{\mathcal{D'}}_{\alpha}\sbr{e} \overline{d}} ]
      \tag{by def. of $\overline{\mathcal{D}}_{\alpha}$}
\\ & ~\dot\sqsubseteq~
      \prod_{k' \in \alpha(\Kpsi)}
           (\confproj{k'}{\overline{d'}}) [ \texttt{x} \mapsto \confproj{k'}{\overline{\mathcal{D'}}_{\alpha}\sbr{e} \overline{d'}} ]
      \tag{by $\overline{d} ~\dot\sqsubseteq~ \overline{d'}$ and Lemma~\ref{lemma:5}}
\\ & =
      \overline{\mathcal{D}}_{\alpha}\sbr{\impassign{\texttt{x}}{e}} \overline{d'}
      \tag{by def. of $\overline{\mathcal{D}}_{\alpha}$}
    \end{align}

  \item[Case $\impseq{s_0}{s_1}$:]
    \begin{align}
   &  \overline{\mathcal{D}}_{\alpha}\sbr{\impseq{s_0}{s_1}} \overline{d}
      \notag
\\ & =
      \overline{\mathcal{D}}_{\alpha}\sbr{s_1} (
        \overline{\mathcal{D}}_{\alpha}\sbr{s_0} \overline{d} )
      \tag{by def. of $\overline{\mathcal{D}}_{\alpha}$}
\\ & ~\dot\sqsubseteq~
      \overline{\mathcal{D}}_{\alpha}\sbr{s_1} (
        \overline{\mathcal{D}}_{\alpha}\sbr{s_0} \overline{d'} )
      \tag{by IH, twice; and $\overline{d} ~\dot\sqsubseteq~ \overline{d'}$}
\\ & =
      \overline{\mathcal{D}}_{\alpha}\sbr{\impseq{s_0}{s_1}} \overline{d'}
      \tag{by def. of $\overline{\mathcal{D}}_{\alpha}$}
    \end{align}

  \item[Case $\impif{e}{s_0}{s_1}$:]
    \begin{align}
   &  \overline{\mathcal{D}}_{\alpha}\sbr{\impif{e}{s_0}{s_1}} \overline{d}
      \notag
\\ & =
     \overline{\mathcal{D}}_{\alpha}\sbr{s_0} \overline{d} \,\dot\sqcup\, \overline{\mathcal{D}}_{\alpha}\sbr{s_1} \overline{d}
      \tag{by def. of $\overline{\mathcal{D}}_{\alpha}$}
\\ & ~\dot\sqsubseteq~
      \overline{\mathcal{D}}_{\alpha}\sbr{s_0} \overline{d'} \,\dot\sqcup\, \overline{\mathcal{D}}_{\alpha}\sbr{s_1} \overline{d'}
      \tag{by IH, twice; and $\overline{d} ~\dot\sqsubseteq~ \overline{d'}$}
\\ & =
      \overline{\mathcal{D}}_{\alpha}\sbr{\impif{e}{s_0}{s_1}} \overline{d'}
      \tag{by def. of $\overline{\mathcal{D}}_{\alpha}$}
    \end{align}

    \item[Case $\impifdef{(\theta)}{s}$:]
    \begin{align}
   &  \overline{\mathcal{D}}_{\alpha}\sbr{\impifdef{(\theta)}{s}} \overline{d}
      \notag
\\ & =
         \prod_{k' \in \alpha(\Kpsi)}
         \left\{ \begin{array}{ll} \confproj{k'}{ \overline{\mathcal{D}}_{\alpha}\sbr{s} \overline{d}} & \quad \textrm{if} \ k' \models \theta \\[1.5ex]
         \confproj{k'}{ \overline{d}} \sqcup \confproj{k'}{\overline{\mathcal{D}}_{\alpha}\sbr{s} \overline{d}} & \quad \textrm{if sat} (k' \!\land\! \theta) \land \textrm{sat} (k' \!\land\! \neg \theta) \\[1.5ex]
		\confproj{k'}{\overline{d}} & \quad \textrm{if} \  k' \models \neg\theta \end{array} \right.
      \tag{by def. of $\overline{\mathcal{D}}_{\alpha}$}
\\ & ~\dot\sqsubseteq~
         \prod_{k' \in \alpha(\Kpsi)}
         \left\{ \begin{array}{ll} \confproj{k'}{ \overline{\mathcal{D}}_{\alpha}\sbr{s} \overline{d'}} & \quad \textrm{if} \ k' \models \theta \\[1.5ex]
         \confproj{k'}{ \overline{d'}} \sqcup \confproj{k'}{\overline{\mathcal{D}}_{\alpha}\sbr{s} \overline{d'}} & \quad \textrm{if sat} (k' \!\land\! \theta) \land \textrm{sat} (k' \!\land\! \neg \theta) \\[1.5ex]
		\confproj{k'}{\overline{d'}} & \quad \textrm{if} \  k' \models \neg\theta \end{array} \right.
      \tag{by IH, and $\overline{d} ~\dot\sqsubseteq~ \overline{d'}$}
\\ & =
      \overline{\mathcal{D}}_{\alpha}\sbr{\impifdef{(\theta)}{s}} \overline{d'}
      \tag{by def. of $\overline{\mathcal{D}}_{\alpha}$}
    \end{align}

  \item[Case $\impwhile{e}{s}$:]
    Let $f = \func{\overline{\Phi}}{\func{\overline{d}}{
                                                { \overline{d}
                                                  \,\dot\sqcup\,
                                                  \overline{\Phi}(\overline{\mathcal{D}}_{\alpha}\sbr{s} \overline{d})
                                                }}}$
    be the functional in the rule for $\impwhile{e}{s}$.
    First we prove that applying the functional $f$ to a monotone
    function $\overline{\Phi}$ yields a monotone function.
    Thus, we obtain that the functional $f$ operates over
    the complete lattice of monotone functions.
    Let $\overline{d} \dot\sqsubseteq \overline{d'}$ and a monotone function $\overline{\Phi}$ be given.
    We have:
    \begin{align}
   &  (f \overline{\Phi}) \overline{d}
      \notag
\\ & =
      \overline{d} \,\dot\sqcup\, \overline{\Phi}(\overline{\mathcal{D}}_{\alpha}\sbr{s} \overline{d})
      \tag{by def. of $f$}
\\ & ~\dot\sqsubseteq~
      \overline{d'} \,\dot\sqcup\, \overline{\Phi}(\overline{\mathcal{D}}_{\alpha}\sbr{s} \overline{d'})
      \tag{by IH, monotonicity of $\overline{\Phi}$, and $\overline{d} \dot\sqsubseteq \overline{d'}$}
\\ & =
      (f \overline{\Phi}) \overline{d'}
      \tag{by def. of $f$}
    \end{align}

    Second we prove that the functional $f$ itself is monotone, which
    guarantees that the while rule is well defined by Tarski's fixed
    point theorem. We extend the operator $\dot\sqsubseteq$ to operate
    over tuples of functions: $\overline{f} \overline{\dot\sqsubseteq} \overline{g}
    = \forall \overline{x}. \overline{f}(\overline{x}) \dot\sqsubseteq \overline{g}(\overline{x})$.
    Let monotone functions $\overline{\Phi}$ and $\overline{\Phi'}$ be given and
    $\overline{\Phi} \overline{\dot\sqsubseteq} \overline{\Phi'}$.
    \begin{align}
    & F \overline{\Phi}
      \notag
\\ & =
      \func{\overline{d}}{ \overline{d}
                           \sqcup
                           \overline{\Phi}(\overline{\mathcal{D}}_{\alpha}\sbr{s} \overline{d})}
      \tag{by def. of $f$}
\\ & ~\overline{\dot\sqsubseteq}~
      \func{\overline{d}}{ \overline{d}
                           \sqcup
                           \overline{\Phi'}(\overline{\mathcal{D}}_{\alpha}\sbr{s} \overline{d})}
      \tag{by def. of $\overline{\dot\sqsubseteq}$, and  $\overline{\Phi} \overline{\dot\sqsubseteq} \overline{\Phi'}$}
\\ & =
      F \overline{\Phi'}
      \tag{by def. of $f$}
    \end{align}

    Since the least fixed point is an element of the complete
    lattice of monotone functions, it is itself monotone.
    Given $\overline{d} \dot\sqsubseteq \overline{d'}$, we have:
    \begin{align*}
      \overline{\mathcal{D}}_{\alpha}\sbr{ \impwhile{e}{s} } \overline{d}
    =
      (\lfp f) \overline{d}
    \,\dot\sqsubseteq\,
      (\lfp f) \overline{d'}
    =
      \overline{\mathcal{D}}_{\alpha}\sbr{ \impwhile{e}{s} } \overline{d'}
    \end{align*}
    which concludes this case.

\end{description}
\end{proof}

\section{Appendix: Abstracted Data-flow Equations} \label{App4}

The complete list of data-flow equations for abstracted constant propagation:
\begin{scriptsize}
\vspace{-2mm}
\begin{align*}
  \varstmtout{\impskip[\ell]}^{\alpha} &= \varstmtin{\impskip[\ell]}^{\alpha}
\\[0.6em]
  \underline{\forall k' \in \alpha(\Kpsi)}{:}~
    \confproj{k'}{ \varstmtout{\impassign[\ell]{\texttt{x}}{e^{\ell_0}}}^{\alpha} }
                           &= \confproj{k'}{\varstmtin{\impassign[\ell]{\texttt{x}}{e^{\ell_0}}}}^{\alpha}
                              [ \texttt{x} \mapsto \confproj{k'}{
                                   \overline{\mathcal{D}'}_{\alpha}\sbr{e^{\ell_0}}
                                     \varstmtin{\impassign[\ell]{\texttt{x}}{e^{\ell_0}}}^{\alpha}
                              } ]
\\[0.6em]
  \varstmtout{\impseq[\ell]{s_0^{\ell_0}}{s_1^{\ell_1}}}^{\alpha} &= \varstmtout{s_1^{\ell_1}}^{\alpha}
\\            \varstmtin{s_1^{\ell_1}}^{\alpha} &= \varstmtout{s_0^{\ell_0}}^{\alpha}
\\            \varstmtin{s_0^{\ell_0}}^{\alpha} &= \varstmtin{\impseq[\ell]{s_0^{\ell_0}}{s_1^{\ell_1}}}^{\alpha}
\\[0.6em]
  \varstmtout{\impif[\ell]{\!e}{\!s_0^{\ell_0}}{\!s_1^{\ell_1}}}^{\alpha}
                           &= \varstmtout{s_0^{\ell_0}}^{\alpha} ~\dot\sqcup~ \varstmtout{s_1^{\ell_1}}^{\alpha}
\\           \varstmtin{s_0^{\ell_0}}^{\alpha} &= \varstmtin{\impif[\ell]{e}{s_0^{\ell_0}}{s_1^{\ell_1}}}^{\alpha}
\\           \varstmtin{s_1^{\ell_1}}^{\alpha} &= \varstmtin{\impif[\ell]{e}{s_0^{\ell_0}}{s_1^{\ell_1}}}^{\alpha}
\\[0.6em]
  \varstmtout{\impwhile[\ell]{e}{s^{\ell_0}}}^{\alpha} &= \varstmtin{s^{\ell_0}}^{\alpha}
\\                     \varstmtin{s^{\ell_0}}^{\alpha} &= \varstmtin{\impwhile[\ell]{e}{s^{\ell_0}}}^{\alpha}
                                                 ~\dot\sqcup~
                                                 \varstmtout{s^{\ell_0}}^{\alpha}
\\[0.6em]
\underline{\forall k' \in  \alpha(\Kpsi)}{:}~
   \confproj{k'}{\varstmtout{\impifdef[\ell]{(\theta)}{s^{\ell_0}}}^{\alpha}}
                           &= \left\{ \begin{array}{ll} \confproj{k'}{\varstmtout{s^{\ell_0}}^{\alpha}} & \ \textrm{if} \ k' \models \theta \\[1.5ex]
         \confproj{k'}{\varstmtin{\impifdef[\ell]{(\theta)}{s^{\ell_0}}}^{\alpha}}  \sqcup \confproj{k'}{\varstmtout{s^{\ell_0}}^{\alpha}} & \ \textrm{if sat} (k' \!\!\land\!\! \theta) \! \land \! \textrm{sat} (k' \!\!\land\!\! \neg \theta) \\[1.5ex]
		\confproj{k'}{\varstmtin{\impifdef[\ell]{(\theta)}{s^{\ell_0}}}^{\alpha}} & \ \textrm{if} \  k' \models \neg \theta \end{array} \right.
\\ \underline{\forall k' \in  \alpha(\Kpsi)}{:}~
     \confproj{k'}{\varstmtin{s^{\ell_0}}^{\alpha}}
                           &= \confproj{k'}{\varstmtin{\impifdef[\ell]{(\theta)}{s^{\ell_0}}}^{\alpha}}
                                                     \quad\text{if} \ (k' \land \theta \textrm{ is sat})
\end{align*}
\end{scriptsize}

We can derive data-flow equations for expressions as well, but for brevity we refer directly to $\overline{\mathcal{D'}}_{\alpha}\sbr{e}$ function.

\begin{theorem}[Soundness of Abstracted Data-Flow Equations]  \label{theorem:vardataflowequivalence}
For all $s \in Stm$ and $\alpha \in Abs$, such that
$\varstmtin{s^\ell}^{\alpha}$ and $\varstmtout{s^\ell}^{\alpha}$ satisfy the
data-flow equations in Fig.~\ref{fig:lifteddataflow}, it holds:
\[
    \overline{\mathcal{D}}_{\alpha}\sbr{s^\ell} (\varstmtin{s^\ell}^{\alpha}) ~~\dot\sqsubseteq~~ \varstmtout{s^\ell}^{\alpha}
\]
\end{theorem}
\begin{proof}
The proof is by structural induction on $s^\ell$.

\begin{description}
  \item[Case ${\impskip[\ell]}$:]
    \begin{align}
      & \overline{\mathcal{D}}_{\alpha}\sbr{\impskip[\ell]} (\varstmtin{\impskip[\ell]}^{\alpha})
      \notag
     \\& =
      \varstmtin{\impskip[\ell]}^{\alpha}
      \tag{by def. of $\overline{\mathcal{D}}_{\alpha}$}
    \\& = \varstmtout{\impskip[\ell]}^{\alpha}
      \tag{by def. of $\varstmtin{-}^{\alpha}$ and $\varstmtout{-}^{\alpha}$}
    \end{align}

  \item[Case ${\impassign[\ell]{\texttt{x}}{e}}$:]
    \begin{align}
   &  \overline{\mathcal{D}}_{\alpha}\sbr{\impassign[\ell]{\texttt{x}}{e^{\ell_0}}} (\varstmtin{\impassign[\ell]{\texttt{x}}{e^{\ell_0}}}^{\alpha})
      \notag
\\ & =
       \prod_{k' \in \alpha(\Kpsi)}
           (\confproj{k'}{\varstmtin{\impassign[\ell]{\texttt{x}}{e^{\ell_0}}}^{\alpha}}) [ \texttt{x} \mapsto \confproj{k'}{\overline{\mathcal{D'}}_{\alpha}\sbr{e} \varstmtin{\impassign[\ell]{\texttt{x}}{e^{\ell_0}}}^{\alpha}} ]
      \tag{by def. of $\overline{\mathcal{D}}_{\alpha}$}
\\ & ~=~
      \prod_{k' \in \alpha(\Kpsi)}
           \confproj{k'}{\varstmtout{\impassign[\ell]{x}{e^{\ell_0}}}^{\alpha}}
      \tag{by def. of $\varstmtin{-}^{\alpha}$ and $\varstmtout{-}^{\alpha}$}
\\ & =
      \varstmtout{\impassign[\ell]{x}{e^{\ell_0}}}^{\alpha}
      \notag
    \end{align}

  \item[Case ${\impseq[\ell]{s_0^{\ell_0}}{s_1^{\ell_1}}}$:]
      \begin{align}
   &  \overline{\mathcal{D}}_{\alpha}\sbr{\impseq[\ell]{s_0^{\ell_0}}{s_1^{\ell_1}}} (\varstmtin{\impseq[\ell]{s_0^{\ell_0}}{s_1^{\ell_1}}}^{\alpha})
      \notag
\\ & ~=~
       \overline{\mathcal{D}}_{\alpha}\sbr{s_1^{\ell_1}} ( \overline{\mathcal{D}}_{\alpha}\sbr{s_0^{\ell_0}} \varstmtin{\impseq[\ell]{s_0^{\ell_0}}{s_1^{\ell_1}}}^{\alpha})
      \tag{by def. of $\overline{\mathcal{D}}_{\alpha}$}
\\ & ~=~
      \overline{\mathcal{D}}_{\alpha}\sbr{s_1^{\ell_1}} ( \overline{\mathcal{D}}_{\alpha}\sbr{s_0^{\ell_0}} (\varstmtin{s_0^{\ell_0}}^{\alpha}))
      \tag{by def. of $\varstmtin{-}^{\alpha}$ and $\varstmtout{-}^{\alpha}$}
\\ & ~\dot\sqsubseteq~
      \overline{\mathcal{D}}_{\alpha}\sbr{s_1^{\ell_1}} (\varstmtout{s_0^{\ell_0}}^{\alpha})
      \tag{by IH}
\\ & ~=~
      \overline{\mathcal{D}}_{\alpha}\sbr{s_1^{\ell_1}} (\varstmtin{s_1^{\ell_1}}^{\alpha})
      \tag{by def. of $\varstmtin{-}^{\alpha}$ and $\varstmtout{-}^{\alpha}$}
\\ & ~\dot\sqsubseteq~
      \varstmtout{s_1^{\ell_1}}^{\alpha}
      \tag{by IH}
\\ & ~=~
      (\varstmtout{\impseq[\ell]{s_0^{\ell_0}}{s_1^{\ell_1}}}^{\alpha})
      \tag{by def. of $\varstmtin{-}^{\alpha}$ and $\varstmtout{-}^{\alpha}$}
    \end{align}

      \item[Case ${\impif[\ell]{e}{s_0^{\ell_0}}{s_1^{\ell_1}}}$:]
        \begin{align}
      &   \overline{\mathcal{D}}_{\alpha}\sbr{\impif[\ell]{e}{s_0^{\ell_0}}{s_1^{\ell_1}}} (\varstmtin{\impif[\ell]{e}{s_0^{\ell_0}}{s_1^{\ell_1}}}^{\alpha})
          \notag
\\
      & = \overline{\mathcal{D}}_{\alpha}\sbr{s_0^{\ell_0}} (\varstmtin{\impif[\ell]{e}{s_0^{\ell_0}}{s_1^{\ell_1}}}^{\alpha}) ~\pwsqcup~
\notag
\\ &\qquad \overline{\mathcal{D}}_{\alpha}\sbr{s_1^{\ell_1}} (\varstmtin{\impif[\ell]{e}{s_0^{\ell_0}}{s_1^{\ell_1}}}^{\alpha})
          \tag{by def. of $\overline{\mathcal{D}}_{\alpha}$}
\\
      & = \overline{\mathcal{D}}_{\alpha}\sbr{s_0^{\ell_0}} (\varstmtin{s_0^{\ell_0}}^{\alpha})
          ~\pwsqcup~
          \overline{\mathcal{D}}_{\alpha}\sbr{s_1^{\ell_1}} (\varstmtin{s_1^{\ell_1}}^{\alpha})
          \tag{by def. of $\varstmtin{-}^{\alpha}$, $\varstmtout{-}^{\alpha}$}
\\
      & ~\pwsqsubseteq~ \varstmtout{s_0^{\ell_0}}^{\alpha} ~\dot\sqcup~ \varstmtout{s_1^{\ell_1}}^{\alpha}
          \tag{by IH, twice}
\\
      & = \varstmtout{\impif[\ell]{e}{s_0^{\ell_0}}{s_1^{\ell_1}}}^{\alpha}
          \tag{by def. of $\varstmtin{-}^{\alpha}$, $\varstmtout{-}^{\alpha}$}
        \end{align}

      \item[Case ${\impifdef[\ell]{(\theta)}{s^{\ell_0}}}$:]
        \begin{align}
      &   \overline{\mathcal{D}}_{\alpha}\sbr{\impifdef[\ell]{(\theta)}{s^{\ell_0}}} (\varstmtin{\impifdef[\ell]{(\theta)}{s^{\ell_0}}}^{\alpha})
          \notag
\\
      & =          \prod_{k' \in \alpha(\Kpsi)}
         \left\{ \begin{array}{ll} \confproj{k'}{ \overline{\mathcal{D}}_{\alpha}\sbr{s^{\ell_0}} (\varstmtin{\impifdef[\ell]{(\theta)}{s^{\ell_0}}}^{\alpha})} & \ \textrm{if} \ k' \models \theta \\[1.5ex]
         \confproj{k'}{ \varstmtin{\impifdef[\ell]{(\theta)}{s^{\ell_0}}}^{\alpha}} \sqcup \confproj{k'}{ \overline{\mathcal{D}}_{\alpha}\sbr{s^{\ell_0}} (\varstmtin{\impifdef[\ell]{(\theta)}{s^{\ell_0}}}^{\alpha})} & \ \textrm{if sat} (k' \!\land\! \theta) \!\land\! \textrm{sat} (k' \!\land\! \neg \theta) \\[1.5ex]
		\confproj{k'}{\varstmtin{\impifdef[\ell]{(\theta)}{s^{\ell_0}}}^{\alpha}} & \ \textrm{if} \  k' \models \neg\theta \end{array} \right.
          \tag{by def. of $\overline{\mathcal{D}}_{\alpha}$}
\\
      & = \prod_{k' \in \alpha(\Kpsi)}
         \left\{ \begin{array}{ll} \confproj{k'}{ \overline{\mathcal{D}}_{\alpha}\sbr{s^{\ell_0}} (\varstmtin{s^{\ell_0}}^{\alpha})} & \ \textrm{if} \ k' \models \theta \\[1.5ex]
         \confproj{k'}{ \varstmtin{\impifdef[\ell]{(\theta)}{s^{\ell_0}}}^{\alpha}} \sqcup \confproj{k'}{\overline{\mathcal{D}}_{\alpha}\sbr{s^{\ell_0}} (\varstmtin{s^{\ell_0}}^{\alpha})} & \ \textrm{if sat} (k' \!\land\! \theta) \!\land\! \textrm{sat} (k' \!\land\! \neg \theta) \\[1.5ex]
		\confproj{k'}{\varstmtin{\impifdef[\ell]{(\theta)}{s^{\ell_0}}}^{\alpha}} & \ \textrm{if} \  k' \models \neg \theta \end{array} \right.
          \tag{by def. of $\varstmtin{-}^{\alpha}$, $\varstmtout{-}^{\alpha}$}
\\
      & ~\pwsqsubseteq~ \prod_{k' \in \alpha(\Kpsi)}
         \left\{ \begin{array}{ll} \confproj{k'}{ \varstmtout{s^{\ell_0}}^{\alpha}} & \ \textrm{if} \ k' \models \theta \\[1.5ex]
         \confproj{k'}{ \varstmtin{\impifdef[\ell]{(\theta)}{s^{\ell_0}}}^{\alpha}} \sqcup \confproj{k'}{ \varstmtout{s^{\ell_0}}^{\alpha}} & \ \textrm{if sat} (k' \!\land\! \theta) \!\land\! \textrm{sat} (k' \!\land\! \neg \theta) \\[1.5ex]
		\confproj{k'}{\varstmtin{\impifdef[\ell]{(\theta)}{s^{\ell_0}}}^{\alpha}} & \ \textrm{if} \  k' \models \neg \theta \end{array} \right.
          \tag{by IH}
\\
      & = \varstmtout{\impifdef[\ell]{(\theta)}{s^{\ell_0}}}^{\alpha}
          \tag{by def. of $\varstmtin{-}^{\alpha}$, $\varstmtout{-}^{\alpha}$}
        \end{align}

      \item[Case ${\impwhile[\ell]{e}{s^{\ell_0}}}$:]

Let $f = \func{\overline{\Phi}}{\func{\overline{d}}{
                                                { \overline{d}
                                                  \,\dot\sqcup\,
                                                  \overline{\Phi}(\overline{\mathcal{D}}_{\alpha}\sbr{s} \overline{d})
                                                }}}$
    be the functional in the rule for $\impwhile{e}{s}$.
We first prove by inner induction on $n$, that:
\begin{equation} \label{flow:while}
  f^n(\overline{\pwbot})
   (\varstmtin{\impwhile[\ell]{e}{s^{\ell_0}}}^{\alpha}
    ~\pwsqcup~
    \varstmtout{s^{\ell_0}}^{\alpha})
   ~\pwsqsubseteq~
  \varstmtout{\impwhile[\ell]{e}{s^{\ell_0}}}^{\alpha}
\end{equation}
for all $n \ge 0$, where
 $\overline{\pwbot}=\func{\overline{d}}{\pwbot}$.
The base case for $n=0$ is straightforward.

For the inductive case $n=k+1$, we assume that:
\[
  f^k(\overline{\pwbot})
   (\varstmtin{\impwhile[\ell]{e}{s^{\ell_0}}}^{\alpha}
    ~\pwsqcup~
    \varstmtout{s^{\ell_0}}^{\alpha})
  ~\pwsqsubseteq~
  \varstmtout{\impwhile[\ell]{e}{s^{\ell_0}}}^{\alpha}
\]
Then we have:
  \begin{align*}
    & f^{k+1}(\overline{\pwbot})
       (\varstmtin{\impwhile[\ell]{e}{s^{\ell_0}}}^{\alpha}
        ~\pwsqcup~
        \varstmtout{s^{\ell_0}}^{\alpha})
\\  & =
      f^{k+1}(\overline{\pwbot})
       (\varstmtin{s^{\ell_0}}^{\alpha})
      \tag{by def. of $\varstmtin{-}^{\alpha}$, $\varstmtout{-}^{\alpha}$}
\\  & =
      f (f^{k}(\overline{\pwbot})) (\varstmtin{s^{\ell_0}}^{\alpha})
      \tag{by def. of $f^{k+1}$}
\\  & =
      \varstmtin{s^{\ell_0}}^{\alpha}
                   ~\pwsqcup~
                   f^{k}(\overline{\pwbot}) \big( \overline{\mathcal{D}}_{\alpha}\sbr{s^{\ell_0}}  (\varstmtin{s^{\ell_0}}^{\alpha}) \big)
      \tag{by def. of $f$}
\\  & ~\pwsqsubseteq~
      \varstmtin{s^{\ell_0}}^{\alpha}
                   ~\pwsqcup~
                   f^{k}(\overline{\pwbot}) \big( \varstmtout{s^{\ell_0}}^{\alpha} \big)
      \tag{by outer IH, monotonicity of $f^k(\overline{\pwbot})$}
\\  & ~\pwsqsubseteq~
      \varstmtin{s^{\ell_0}}^{\alpha}
      ~\pwsqcup~
      f^{k}(\overline{\pwbot}) \big( \varstmtin{\impwhile[\ell]{e}{s^{\ell_0}}}^{\alpha}
          ~\pwsqcup~  \varstmtout{s^{\ell_0}}^{\alpha} \big)
      \tag{by monotonicity of $f^k(\overline{\pwbot})$}
\\  & ~\pwsqsubseteq~
      \varstmtin{s^{\ell_0}}^{\alpha}
      ~\pwsqcup~
      \varstmtout{\impwhile[\ell]{e}{s^{\ell_0}}}^{\alpha}
      \tag{by inner IH}
\\  & =
      \varstmtout{\impwhile[\ell]{e}{s^{\ell_0}}}^{\alpha}
      \tag{by def. of $\varstmtin{-}^{\alpha}$, $\varstmtout{-}^{\alpha}$}
  \end{align*}

Finally, we have:
\begin{align*}
      &  \overline{\mathcal{D}}_{\alpha}\sbr{\impwhile[\ell]{e}{s^{\ell_0}}} (\varstmtin{\impwhile[\ell]{e}{s^{\ell_0}}}^{\alpha})
         \notag
\\    & = (\lfp f)
             (\varstmtin{\impwhile[\ell]{e}{s^{\ell_0}}}^{\alpha})
          \tag{by def. of $\overline{\mathcal{D}}_{\alpha}$}
\\    & = (\func{\overline{d}}{ \pwsqcup_i f^i (\overline{\pwbot}) \overline{d} })
             (\varstmtin{\impwhile[\ell]{e}{s^{\ell_0}}}^{\alpha})
          \tag{by Kleene's fixed point theorem}
\\    & = \pwsqcup_i f^i (\overline{\pwbot})
                        ( \varstmtin{\impwhile[\ell]{e}{s^{\ell_0}}}^{\alpha} )
          \tag{$\beta$-reduction}
\\    & ~\pwsqsubseteq~
          \pwsqcup_i f^i (\overline{\pwbot})
                       (\varstmtin{\impwhile[\ell]{e}{s^{\ell_0}}}^{\alpha}
                        ~\pwsqcup~
                        \varstmtout{s^{\ell_0}}^{\alpha})
    \tag{by monotonicity of $f^i(\overline{\pwbot})$}
\\    & ~\pwsqsubseteq~
          \varstmtout{\impwhile[\ell]{e}{s^{\ell_0}}}^{\alpha}
          \tag{by Eq.~(\ref{flow:while})}
\end{align*}

\end{description}

\end{proof}

\section{Appendix: Proof that $\overline{\mathcal{D}}_{\alpha}\sbr{s}$ coincides with $\overline{\mathcal{A}}\sbr{\alpha(s)}$} \label{App5}

\begin{proof}
By induction on the structure of $\alpha \in Abs$ and $s \in Stm$.
Apart from the $\texttt{\#if}$-statement, for all other statements
the proof is immediate from definitions of
$\overline{\mathcal{D}}_{\alpha}$, $\overline{\mathcal{A}}$, and $\alpha(s)$.

Let us consider the case of $\impifdef{(\theta)}{s}$.
  \begin{description}
  \item[Case $\joina{'Z'}$:]
    \begin{align*}
   &  \overline{\mathcal{D}}_{\joina{} } \sbr{\impifdef{(\theta)}{s}} \overline{d}
        \tag{set of feat. is $\mathbb F$, set of configs. is $\Kk_{\psi}$}
\\ &=
       \left\{ \begin{array}{ll} \overline{\mathcal{D}}_{\joina{} }\sbr{s} \overline{d} & \quad \textrm{if} \ {\scriptstyle\bigvee_{k \in \Kpsi}}k \models \theta \\[1.5ex]
        \overline{d} \dot\sqcup \overline{\mathcal{D}}_{\joina{}}\sbr{s} \overline{d} & \quad \textrm{if sat} ({\scriptstyle\bigvee_{k \in \Kpsi }}k \!\land\! \theta) \land \textrm{sat} ({\scriptstyle\bigvee_{k \in \Kpsi}}k \!\land\! \neg \theta) \\[1.5ex]
		\overline{d} & \quad \textrm{if} \ {\scriptstyle\bigvee_{k \in \Kpsi}}k \models \neg \theta  \end{array} \right.
          \tag{by def. of $\overline{\mathcal{D}}_{\alpha}$}
\\ &=
       \left\{ \begin{array}{ll} \overline{\mathcal{A}}\sbr{\joina{'Z'} (s)} \overline{d} & \quad \textrm{if} \ {\scriptstyle\bigvee_{k \in \Kpsi }}k \models \theta \\[1.5ex]
        \overline{d} \dot\sqcup \overline{\mathcal{A}}\sbr{\joina{'Z'} (s)} \overline{d} & \quad \textrm{if sat} ({\scriptstyle\bigvee_{k \in \Kpsi}}k \!\land\! \theta) \land \textrm{sat} ({\scriptstyle\bigvee_{k \in \Kpsi}}k \!\land\! \neg \theta) \\[1.5ex]
		\overline{d} & \quad \textrm{if} \ {\scriptstyle\bigvee_{k \in \Kpsi }}k \models \neg \theta  \end{array} \right.
          \tag{by IH}
\\ &=
       \left\{ \begin{array}{l} \overline{\mathcal{A}}\sbr{\impifdef{(Z)}{ \joina{'Z'} (s)} } \overline{d} \qquad \textrm{if} \ {\scriptstyle\bigvee_{k \in \Kpsi}}k \models \theta \\[1.5ex]
        \overline{\mathcal{A}}\sbr{\impifdef{(Z)}{ \texttt{lub}(\joina{'Z'} (s),\impskip)}} \overline{d} \quad
          \textrm{if sat} ({\scriptstyle\bigvee_{k \in \Kpsi }}k \!\land\! \theta) \land \textrm{sat} ({\scriptstyle\bigvee_{k \in \Kpsi}}k \!\land\! \neg \theta) \\[1.5ex]
		\overline{\mathcal{A}}\sbr{\impifdef{(\neg Z)}{ \joina{'Z'} (s)} } \overline{d} \qquad  \textrm{if} \ {\scriptstyle\bigvee_{k \in \Kpsi}}k \models \neg \theta  \end{array} \right.
          \tag{by def. of $\overline{\mathcal{A}}$; renaming: set of feat. is $\{Z\}$, set of configs. is $\{Z\}$}
\\ &=
    \overline{\mathcal{A}} \sbr{\joina{'Z'} (\impifdef{(\theta)}{s})} \overline{d}
        \tag{by def. of $\overline{\mathcal{A}}$ and $\joina{'Z'} (\impifdef{(\theta)}{s})$}
    \end{align*}

      \item[Case $\proja{\phi}$:]
    \begin{align*}
   &  \overline{\mathcal{D}}_{\proja{\phi}} \sbr{\impifdef{(\theta)}{s}} \overline{d}
      \tag{set of feat. is $\mathbb F$, set of configs. is $\Kk_{\psi}$}
\\ &=
\prod_{\{k \in \Kpsi \mid k \models \phi \}}
         \left\{ \begin{array}{ll} \confproj{k}{ \overline{\mathcal{D}}_{\proja{\phi}}\sbr{s} \overline{d}} & \quad \textrm{if} \ k \models \theta \\[1.5ex]
         \confproj{k}{ \overline{d}} \sqcup \confproj{k}{ \overline{\mathcal{D}}_{\proja{\phi}}\sbr{s} \overline{d}} & \quad \textrm{if sat} (k \!\land\! \theta) \land \textrm{sat} (k \!\land\! \neg \theta) \\[1.5ex]
		\confproj{k}{\overline{d}} & \quad \textrm{if} \  k \models \neg \theta \end{array} \right.
          \tag{by def. of $\overline{\mathcal{D}}_{\alpha}$}
\\ &=
      \prod_{\{k \in \Kpsi \mid k \models \phi \}}
      \left\{ \begin{array}{ll} \confproj{k}{\overline{\mathcal{D}}_{\proja{\phi}}\sbr{s} \overline{d}} & \quad \textrm{if} \ k \models \theta \\[1.5ex]
       		\confproj{k}{\overline{d}} & \quad \textrm{if} \ k \not \models \theta  \end{array} \right.
    \tag{since $k$ is a valuation}
\\ &=
      \prod_{\{k \in \Kpsi \mid k \models \phi \}}
      \left\{ \begin{array}{ll} \confproj{k}{\overline{\mathcal{A}}\sbr{\proja{\phi}(s)} \overline{d}} & \quad \textrm{if} \ k \models \theta \\[1.5ex]
       		\confproj{k}{\overline{d}} & \quad \textrm{if} \ k \not \models \theta  \end{array} \right.
    \tag{by IH}
\\ &=
    \overline{\mathcal{A}}\sbr{\impifdef{(\theta)}{\proja{\phi}(s)}} \overline{d}
    \tag{by def. of $\overline{\mathcal{A}}$; renaming: set of feat. is $\mathbb F$, set of configs. is $\{k \in \Kpsi \mid k \models \phi \}$}
\\ &=
    \overline{\mathcal{A}}\sbr{\proja{\phi}(\impifdef{(\theta)}{s})} \overline{d}
    \tag{by def. of $\overline{\mathcal{A}}$ and $\proja{\phi}(\impifdef{(\theta)}{s})$}
    \end{align*}

      \item[Case $\alpha_{1} \otimes \alpha_{2}$:]
    \begin{align*}
   &  \overline{\mathcal{D}}_{\alpha_{1} \otimes \alpha_{2}} \sbr{\impifdef{(\theta)}{s}} (\overline{d})
      \tag{set of feat. is $\mathbb F$, set of configs. is $\Kk_{\psi}$}
\\ &=
      \prod_{k' \in \alpha_1 \otimes \alpha_2(\Kpsi)}
      \left\{ \begin{array}{ll} \confproj{k'}{\overline{\mathcal{D}}_{\alpha_{1} \otimes \alpha_{2}}\sbr{s} \overline{d}} & \quad \textrm{if} \ k' \models \theta \\[1.5ex]
      \confproj{k'}{\overline{d}} \sqcup \confproj{k'}{\overline{\mathcal{D}}_{\alpha_{1} \otimes \alpha_{2}}\sbr{s} \overline{d}} & \quad \textrm{if sat} (k' \!\land\! \theta) \land \textrm{sat} (k' \!\land\! \neg \theta) \\[1.5ex]
       		\confproj{k'}{\overline{d} } & \quad \textrm{if} \ k' \models \neg \theta  \end{array} \right.
    \tag{by def. of $\overline{\mathcal{D}}_{\alpha}$}
\\ &=
      \prod_{k' \in \alpha_1(\Kpsi)}
      \left\{ \begin{array}{ll} \confproj{k'}{\overline{\mathcal{D}}_{\alpha_{1}}\sbr{s} \pi_{\alpha_1(\Kpsi)}(\overline{d})} & \textrm{if} \ k' \models \theta \\[1.5ex]
      \confproj{k'}{\pi_{\alpha_1(\Kpsi)}(\overline{d})}  \sqcup \confproj{k'}{\overline{\mathcal{D}}_{\alpha_{1}}\sbr{s} \pi_{\alpha_1(\Kpsi)}(\overline{d}) } &  \textrm{if sat} (k' \!\land\! \theta) \land \textrm{sat} (k' \!\land\! \neg \theta) \\[1.5ex]
       		\confproj{k'}{\pi_{\alpha_1(\Kpsi)}(\overline{d}) } &  \textrm{if} \ k' \models \neg \theta  \end{array} \right.
\\ &\quad \times
      \prod_{k' \in \alpha_2(\Kpsi)}
      \left\{ \begin{array}{ll} \confproj{k'}{\overline{\mathcal{D}}_{\alpha_{2}}\sbr{s} \pi_{\alpha_2(\Kpsi)}(\overline{d})} &  \textrm{if} \ k' \models \theta \\[1.5ex]
      \confproj{k'}{\pi_{\alpha_2(\Kpsi)}(\overline{d})}  \sqcup \confproj{k'}{\overline{\mathcal{D}}_{\alpha_{2}}\sbr{s} \pi_{\alpha_2(\Kpsi)}(\overline{d}) } &  \textrm{if sat} (k' \!\land\! \theta) \land \textrm{sat} (k' \!\land\! \neg \theta) \\[1.5ex]
       		\confproj{k'}{\pi_{\alpha_2(\Kpsi)}(\overline{d}) } &  \textrm{if} \ k' \models \neg \theta  \end{array} \right.
       \tag{by def. of $\pi_{\alpha_1(\Kpsi)}$, $\pi_{\alpha_2(\Kpsi)}$ and $\alpha_{1} \otimes \alpha_{2}$}
\\ &=
      \prod_{k' \in \alpha_1(\Kpsi)}
      \left\{ \begin{array}{ll} \confproj{k'}{\overline{\mathcal{A}}\sbr{\alpha_{1}(s)} \pi_{\alpha_1(\Kpsi)}(\overline{d})} & \textrm{if} \ k' \models \theta \\[1.5ex]
      \confproj{k'}{\pi_{\alpha_1(\Kpsi)}(\overline{d})}  \sqcup \confproj{k'}{\overline{\mathcal{A}}\sbr{\alpha_{1}(s)} \pi_{\alpha_1(\Kpsi)}(\overline{d}) } &  \textrm{if sat} (k' \!\land\! \theta) \land \textrm{sat} (k' \!\land\! \neg \theta) \\[1.5ex]
       		\confproj{k'}{\pi_{\alpha_1(\Kpsi)}(\overline{d}) } &  \textrm{if} \ k' \models \neg \theta  \end{array} \right.
\\ &\quad \times
      \prod_{k' \in \alpha_2(\Kpsi)}
      \left\{ \begin{array}{ll} \confproj{k'}{\overline{\mathcal{A}}\sbr{\alpha_{2}(s)} \pi_{\alpha_2(\Kpsi)}(\overline{d})} &  \textrm{if} \ k' \models \theta \\[1.5ex]
      \confproj{k'}{\pi_{\alpha_2(\Kpsi)}(\overline{d})}  \sqcup \confproj{k'}{\overline{\mathcal{A}}\sbr{\alpha_{2}(s)} \pi_{\alpha_2(\Kpsi)}(\overline{d}) } &  \textrm{if sat} (k' \!\land\! \theta) \land \textrm{sat} (k' \!\land\! \neg \theta) \\[1.5ex]
       		\confproj{k'}{\pi_{\alpha_2(\Kpsi)}(\overline{d}) } &  \textrm{if} \ k' \models \neg \theta  \end{array} \right.
       \tag{by IH on $\alpha$}
\\ &=
      \prod_{\overline{k'} \in \alpha_1(\Kpsi)}
      \left\{ \begin{array}{ll} \confproj{\overline{k'}}{\overline{\mathcal{A}}\sbr{\overline{\alpha_{1}}(s,\theta)} \pi_{\alpha_1(\Kpsi)}(\overline{d})} & \quad \textrm{if} \ \overline{k'} \models \overline{\alpha_1}(\theta) \\[1.5ex]
      		\confproj{\overline{k'}}{\pi_{\alpha_1(\Kpsi)}(\overline{d}) } &  \quad \textrm{if} \ \overline{k'} \not \models \overline{\alpha_1}(\theta)  \end{array} \right.
\\ &\quad \times
      \prod_{\overline{k'} \in \alpha_2(\Kpsi)}
      \left\{ \begin{array}{ll} \confproj{\overline{k'}}{\overline{\mathcal{A}}\sbr{\overline{\alpha_{2}}(s,\theta)} \pi_{\alpha_2(\Kpsi)}(\overline{d})} & \quad \textrm{if} \ \overline{k'} \models \overline{\alpha_2}(\theta) \\[1.5ex]
      		\confproj{\overline{k'}}{\pi_{\alpha_2(\Kpsi)}(\overline{d}) } &  \quad \textrm{if} \ \overline{k'} \not \models \overline{\alpha_2}(\theta)  \end{array} \right.
       \tag{by def. $\overline{\alpha_1}$, $\overline{\alpha_2}$; renaming: to $\alpha_1 \otimes \alpha_2(\mathbb F)$, $\alpha_1 \otimes \alpha_2(\Kk_{\psi})$, See (*)}
\\ &=
      \overline{\mathcal{A}}\sbr{\alpha_{1}(\impifdef{(\theta)}{s})} \pi_{\alpha_1(\Kpsi)}(\overline{d})  \times
      \overline{\mathcal{A}}\sbr{\alpha_{2}(\impifdef{(\theta)}{s})} \pi_{\alpha_2(\Kpsi)}(\overline{d})
       \tag{by def. of $\overline{\mathcal{A}}$, $\alpha_1$, and $\alpha_2$}
\\ &=
      \left\{ \begin{array}{ll}  \overline{\mathcal{A}}\sbr{\impifdef{\big(\overline{\alpha_1}(\theta) \lor \overline{\alpha_2}(\theta)\big)\,}{\overline{\alpha_1}(s,\theta)}} & \quad  \textrm{if} \ \overline{\alpha_{1}} (s,\theta) = \overline{\alpha_{2}} (s,\theta) \\[1.5ex]
         \overline{\mathcal{A}}\sbr{\alpha_{1} ( \impifdef{(\theta)}{s}  ); \alpha_{2} ( \impifdef{(\theta)}{s}  )} & \quad  \textrm{otherwise } \end{array} \right.
    \tag{by def. of $\overline{\mathcal{A}}$, $\overline{\alpha_1}$, and $\overline{\alpha_2}$}
\\ &=
      \overline{\mathcal{A}}\sbr{\alpha_{1} \otimes \alpha_{2} (\impifdef{(\theta)}{s})} (\overline{d})
    \tag{by def. of $\alpha_{1} \otimes \alpha_{2}(\impifdef{(\theta)}{s})$}
    \end{align*}

(*) Note that $\overline{k'}$ is a renamed configuration of $k'$. The second case $\text{sat}(k' \land \theta) \land \text{sat}(k' \land \neg \theta)$
has collapsed into the first case when $\overline{k'} \models \overline{\alpha_1}(\theta)$ and $\overline{\alpha_1}(s,\theta)=lub(\alpha_1(s),skip)$
in the equation obtained after the renaming.
      \item[Case $\alpha_{2} \circ \alpha_{1}$:]
    \begin{align*}
   &  \overline{\mathcal{D}}_{\alpha_{2} \circ \alpha_{1}} \sbr{\impifdef{(\theta)}{s}} (\overline{d})
   \tag{set of feat. is $\mathbb F$, set of configs. is $\Kk_{\psi}$}
\\ &=
      \prod_{k'' \in \alpha_{2} \circ \alpha_{1}(\Kpsi)}
      \left\{ \begin{array}{ll} \confproj{k''}{\overline{\mathcal{D}}_{\alpha_{2} \circ \alpha_{1}}\sbr{s} \overline{d}} & \quad \textrm{if} \ k'' \models \theta \\[1.5ex]
      \confproj{k''}{\overline{d}} \sqcup \confproj{k''}{\overline{\mathcal{D}}_{\alpha_{2} \circ \alpha_{1}}\sbr{s} \overline{d}} & \quad \textrm{if sat} (k'' \!\land\! \theta) \land \textrm{sat} (k'' \!\land\! \neg \theta) \\[1.5ex]
       		\confproj{k''}{\overline{d} } & \quad \textrm{if} \ k'' \models \neg \theta  \end{array} \right.
    \tag{by def. of $\overline{\mathcal{D}}_{\alpha}$}
\\ &=
      \prod_{k'' \in \alpha_{2} \circ \alpha_{1}(\Kpsi)}
      \left\{ \begin{array}{ll} \confproj{k''}{ \overline{\mathcal{A}}\sbr{\alpha_2 \circ \alpha_{1}(s)} \overline{d}} & \quad \textrm{if} \ k'' \models \theta \\[1.5ex]
      \confproj{k''}{\overline{d}}  \sqcup \confproj{k''}{\overline{\mathcal{A}}\sbr{\alpha_2 \circ  \alpha_{1}(s)} \overline{d} } & \quad \textrm{if sat} (k'' \!\land\! \theta) \land \textrm{sat} (k'' \!\land\! \neg \theta) \\[1.5ex]
       		\confproj{k''}{\overline{d} } & \quad \textrm{if} \ k'' \models \neg \theta  \end{array} \right.
       \tag{by IH on $s$}
\\ &=
      \prod_{\overline{k''} \in \alpha_{2} \circ \alpha_{1}(\Kpsi)}
      \left\{ \begin{array}{ll} \confproj{\overline{k''}}{ \overline{\mathcal{A}}\sbr{\overline{\alpha_2} ( \overline{\alpha_{1}}(s,\theta), \overline{\alpha_1}(\theta))} \overline{d}} & \quad \textrm{if} \ \overline{k''} \models \overline{\alpha_2} ( \overline{\alpha_{1}} (\theta)) \\[1.5ex]
       		\confproj{\overline{k''}}{\overline{d} } & \quad \textrm{if} \ \overline{k''} \not \models \overline{\alpha_2} ( \overline{\alpha_{1}} (\theta))  \end{array} \right.
       \tag{by def. of $\overline{\alpha}$, renaming: to $\alpha_2 \circ \alpha_1(\mathbb F)$, $\alpha_2 \circ \alpha_1(\Kpsi)$, See (**)}
\\ &=
      \overline{\mathcal{A}}\sbr{\impifdef{\big(\alpha_{2} ( \alpha_{1} (\theta))\big)}{\overline{\alpha_2} ( \overline{\alpha_{1}}(s,\theta), \overline{\alpha_1}(\theta))}} \overline{d}
    \tag{by def. of $\overline{\mathcal{A}}$}
\\ &=
      \overline{\mathcal{A}}\sbr{\alpha_{2} \circ \alpha_{1} (\impifdef{(\theta)}{s})} (\overline{d})
    \tag{by def. of $\alpha_{2} \circ \alpha_{1}(\impifdef{(\theta)}{s})$}
    \end{align*}
    (**) Note that $\overline{k''}$ is a renamed configuration of $k''$, and $\overline{k''}$ is a valuation over $\alpha_2 \circ \alpha_1(\mathbb F)$. The second case $\text{sat}(k'' \land \theta) \land \text{sat}(k'' \land \neg \theta)$
has collapsed into the first case when $\overline{k''} \models \overline{\alpha_2}(\overline{\alpha_1}(\theta))$ and $\overline{\alpha_1}(s,\theta)$ or $\overline{\alpha_2}(\overline{\alpha_1}(s,\theta),\overline{\alpha_1}(\theta))$ is transformed
into $lub$ statement.
\end{description}
\end{proof}

\end{document}